\newcommand{\real}{I\hspace{-1.3mm}R}
\def\bx{\boldsymbol{x}}
\def\bv{\boldsymbol{v}}
\def\bz{\boldsymbol{z}}
\def\bX{\boldsymbol{X}}
\def\bmu{\boldsymbol{\mu}}
\def\bSigma{\boldsymbol{\Sigma}}
\def\bGamma{\boldsymbol{\Gamma}}
\def\bDelta{\boldsymbol{\Delta}}
\def\bpsi{\boldsymbol{\psi}}
\def\bvartheta{\boldsymbol{\vartheta}}
\newtheorem{pro}{Proposition} 
\renewenvironment{proof}{\textit{Proof. }}{\qed}
\journal{arXiv}
\begin{document}

\begin{frontmatter}

\title{Parsimonious mixtures of multivariate contaminated normal distributions}

\author[CT]{Antonio Punzo\corref{cor}} 
\ead{antonio.punzo@unict.it}
\author[CA]{Paul D.~McNicholas} 
\ead{mcnicholas@math.mcmaster.ca}
\cortext[cor]{Corresponding author: 
Email: \texttt{antonio.punzo@unict.it}, 
Phone: +39-095-7537640, 
Fax: +39-095-7537610}
\address[CT]{
Department of Economics and Business, University of Catania, Catania, Italy.
							}
\address[CA]{
Department of Mathematics \& Statistics, McMaster University, Hamilton, Canada.
}

\begin{abstract}
\small
A mixture of multivariate contaminated normal distributions is developed for model-based clustering. 
In addition to the parameters of the classical normal mixture, our contaminated mixture has, for each cluster, a parameter controlling the proportion of mild outliers and one specifying the degree of contamination. 
Crucially, these parameters do not have to be specified \textit{a~priori}, adding a flexibility to our approach.
Parsimony is introduced via eigen-decomposition of the component covariance matrices, and sufficient conditions for the identifiability of all the members of the resulting family are provided. 
An expectation-conditional maximization algorithm is outlined for parameter estimation and various implementation issues are discussed. 
Using a large scale simulation study, the behaviour of the proposed approach is investigated and comparison with well-established finite mixtures is provided.
The performance of this novel family of models is also illustrated on artificial and real data.
\end{abstract}

\begin{keyword}
Contaminated normal distribution \sep contamination \sep EM algorithm \sep mixture models \sep model-based clustering.
\end{keyword}
\end{frontmatter}



\section{Introduction}
\label{sec:Introduction}

Mixtures of multivariate normal distributions have been extensively considered as a powerful device for clustering by typically assuming, as we do, that each mixture component represents a cluster (or group or class; cf.\ \citealp{McLa:Basf:mixt:1988}, \citealp{Fral:Raft:Howm:1998}, \citealp{Bohn:Comp:2000}, and \citealp{mcnicholas16}).
Their popularity is largely attributable to computational and theoretical convenience, as well as the speed with which these mixtures can be implemented for many data sets.
The volume of published work on normal mixtures has increased significantly since the work of \citet{Banf:Raft:mode:1993} and \citet{Cele:Gova:Gaus:1995}; the latter work completes the former by introducing a family of fourteen mixtures of multivariate normal distributions obtained by imposing some constraints on eigen-decomposed component covariance matrices. 

Unfortunately, real data are often contaminated by outliers that affect the estimation of the component means and covariance matrices (see, e.g., \citealp{Barn:Lewi:Outl:1994}, \citealp{Beck:Gath:Them:1999}, \citealp{Bock:Clus:2002}, and \citealp{Gall:Ritt:Trim:2009}).
Thus, outlier detection and the development of robust methods of parameter estimation insensitive to their presence are important problems (see \citealp{Garc:Gord:Robu:1999} and \citealp{Henn:Brea:2004}).
Outliers are observations that deviate from the (posited) reference model (\citealp{Agga:Outl:2013} and \citealp{Hawk:Iden:2013}), which is here assumed to be a mixture of multivariate normal distributions; for a discussion about the concept of reference model, see \citet{Davi:Gath:Thei:1993} and \citet{Henn:Fixe:2002}.
Outliers can be roughly distinguished into two types \citep[cf.][pp.~79--80]{Ritt:Robu:2015}:
\begin{description}
	\item[Mild] outliers are sampled from some population different or even far from the assumed model. 
	Such outliers generally reflect the difficulty of the model specification problem.
	In their presence, the statistician is recommended to choose a model flexible enough to accommodate all data points, including the outliers. 
	\item[Gross] outliers cannot be modelled by a distribution. 
They are unpredictable and incalculable.
In the presence of gross outliers, the statistician is recommended to choose a method for suppressing them.
A classical choice is trimming, introduced to cluster analysis by \citet{Cues:Gord:Matr:Trim:1997} and followed, only to cite a few, by \citet{Gall:Ritt:Arob:2005,Gall:Ritt:Trim:2009}, \citet{Garc:Gord:Matr:Trim:2003,Garc:Gord:Matr:Mayo:Agen:2008,Garc:Gord:Matr:Mayo:Arew:2010}, and \citet{Ruwe:Garc:Gord:Mayo:Onth:2013}.
The underlying idea is to decompose the data set into high-density regions, with the remainder consisting mainly of isolated observations \citep{Hart:Stat:1985}.
Of course, for this method to be effective the estimator must recognize the outliers automatically.
\end{description}
Mild outliers --- also referred to as ``bad'' points herein, following \citet{Aitk:Wils:Mixt:1980} --- are the focus of this paper.
For normal mixtures, they are often dealt using two approaches \citep[cf.][]{Ruwe:Garc:Gord:Mayo:TheI:2012}.
In the ``additional component'' approach, protection against outliers is obtained by adding a further convenient component distribution to the mixture of normal distributions to capture outliers.
The first and most famous example in this direction is represented by the addition of a uniform component on the convex hull of the data, as suggested by \citet{Banf:Raft:mode:1993}; see also \citet{Henn:Brea:2004} and \citet{Core:Henn:2011} for the univariate case.
In the ``componentwise'' approach, the component multivariate normal distributions are separately protected against outliers by using either convenient robust estimates of the means and covariance matrices (see \citealp{Camp:Mixt:1984}, \citealp[][Section~2.8]{McLa:Basf:mixt:1988}, \citealp{deVa:Krie:1990}, and \citealp{Mark:Mixt:2000}) or, more often, by embedding them in more general heavy-tailed, usually elliptically symmetric, multivariate distributions.
The classical example is the mixture of multivariate $t$ distributions, which was first used for clustering by \citet{McLa:Peel:Robu:1998} and \citet{Peel:McLa:Robu:2000}. 
Note that the multivariate $t$ distribution can be written as a normal scale mixture, where the mixing weight is a gamma random variable; in fact, the multivariate normal distribution is a limiting case of the multivariate $t$ distribution and the $t$ distribution can be viewed as a generalization of the normal distribution \citep[cf.][]{Peel:McLa:Robu:2000}. 
A further example is given by \citet{Brow:McNi:Spar:Mode:2012}; they introduce a mixture model whereby each mixture component is itself a mixture of a normal and a uniform distribution.
To have an idea of the data configurations where the ``additional component'' approach outperforms the ``componentwise'' approach, and \textit{vice versa}, see the extensive simulation study reported by \citet{Core:Henn:Robu:2015}.
Roughly speaking, the additional component approach is not expected to work well when the mild outliers are either cluster-dependent \citep{Gero:Niko:Lika:Them:2009} or cannot be modeled adequately by the additional component \citep[cf.][p.~233]{McLa:Peel:fini:2000}.

By considering the ``componentwise'' approach, a mixture of multivariate contaminated normal distributions is proposed in Section~\ref{subsec:The general model}. 
A multivariate contaminated normal distribution, which dates back to the seminal work of \citet{Tuke:Asur:1960}, is a two-component normal mixture in which one of the components, with a large prior probability, represents the good observations (reference cluster distribution), and the other, with a small prior probability, the same mean, and an inflated covariance matrix, represents the bad observations \citep[see also][]{Aitk:Wils:Mixt:1980}. 
It represents a common and simple theoretical model for the occurrence of bad points although, by construction, it cannot accommodate asymmetric contamination and/or ``groups'' of concentrated outliers.
Furthermore, parsimonious variants of the proposed model are introduced, in the fashion of \citet{Banf:Raft:mode:1993} and \citet{Cele:Gova:Gaus:1995}, by imposing constraints on eigen-decomposed component covariance matrices (Section~\ref{subsec:Parsimonious variants of the general model}).
The model-based clustering framework is outlined (Section~\ref{subsec:modelling frame}), and sufficient conditions for identifiability of our models are given (Section~\ref{sec:Identifiability}). 
An expectation-conditional maximization (ECM) algorithm for parameter estimation is outlined in Section~\ref{sec:Maximum likelihood estimation}. 
Further computational and operational aspects are discussed in Section~\ref{sec:fas}. 
Advantageously, as it will be better explained in Section~\ref{subsec:Automatic detection of noise}, once a mixture of multivariate contaminated normal distributions is fitted to the observed data, by means of maximum \textit{a~posteriori} probabilities, each observation can be first assigned to one of the clusters and then classified as good or bad.
Moreover, as detailed in Section~\ref{subsec:Some notes on robustness}, bad points are automatically down-weighted in the estimation of the component means and covariance matrices.
Thus, we have a model for simultaneous robust clustering and detection of mild outliers.
Furthermore, the fact that all of the parameters can be estimated by maximum likelihood (see Section~\ref{sec:Maximum likelihood estimation}), and automatic criteria, such as the BIC (see Section~\ref{subsec:BIC}), can be adopted to select the number of clusters and the parsimonious covariance structure (see Section~\ref{subsec:Parsimonious variants of the general model}), implies that there is no need to preliminary visualize the data to try to understand what the outliers could be.
This is the reason why our approach could be extended to higher dimensions where the visualization of the data becomes cumbersome.
In Section~\ref{sec:Numerical studies}, the behavior of the proposed model, in comparison with some of the approaches discussed above, is investigated through a large-scale simulation study.
Applications on artificial and real data are presented in Section~\ref{sec:Data analysis}.
The paper concludes with some discussion in Section~\ref{sec:Discussion and future work}.

\section{Methodology}
\label{sec:Methodology}

\subsection{The general model}
\label{subsec:The general model}

The distribution of a random vector $\bX$, taking values on $\real^p$, according to a parametric finite mixture model, can be written as
\begin{equation}
p\left(\bx;\boldsymbol{\psi}\right)=\sum_{g=1}^G\pi_gf\left(\bx;\bvartheta_g\right),
\label{eq:classical mixture}
\end{equation}
where $\pi_g$ is the mixing proportion for the $g$th component, with $\pi_g>0$ and $\sum_{g=1}^G\pi_g=1$, $f\left(\bx;\bvartheta_g\right)$ is the density of the $g$th component with parameters $\bvartheta_g$, and $\boldsymbol{\psi}=\left\{\boldsymbol{\pi},\bvartheta\right\}$, with $\boldsymbol{\pi}=\left\{\pi_g\right\}_{g=1}^G$ and $\bvartheta=\left\{\bvartheta_g\right\}_{g=1}^G$, contains all of the parameters of the mixture. 

In this paper, for the $g$th mixture component, $g=1,\ldots,G$, we adopt the multivariate contaminated normal distribution 
\begin{equation}
f\left(\bx;\bvartheta_g\right)=\alpha_g\phi\left(\bx;\bmu_g,\bSigma_g\right)+\left(1-\alpha_g\right)\phi\left(\bx;\bmu_g,\eta_g\bSigma_g\right),
\label{eq:contaminated normal distribution}
\end{equation}
where $\alpha_g\in\left(0.5,1\right)$, $\eta_g>1$, $\bvartheta_g=\left\{\alpha_g,\bmu_g, \bSigma_g,\eta_g\right\}$, and
\begin{equation}
\phi\left(\bx;\bmu,\bSigma\right)=\left(2\pi\right)^{-\frac{p}{2}}\left|\bSigma\right|^{-\frac{1}{2}}\exp\left\{-\frac{1}{2}\delta\left(\bx,\bmu;\bSigma\right)\right\}
\label{eq:Normal distribution}
\end{equation}
is the distribution of a $p$-variate normal random vector with mean $\bmu$ and covariance matrix $\bSigma$.
In \eqref{eq:Normal distribution}, $\delta\left(\bx,\bmu;\bSigma\right)=\left(\bx-\bmu\right)'\bSigma^{-1}\left(\bx-\bmu\right)$ denotes the squared Mahalanobis distance while $\left|\cdot\right|$ is the determinant. 
Note that $\alpha_g$ is constrained to be greater than 0.5 because, in robust statistics, it is usually assumed that at least half of the points are good; however, $\alpha_g\in\left(0,1\right)$ is acceptable in general, as often happens in the literature.
In \eqref{eq:contaminated normal distribution}, $\eta_g$ denotes the degree of contamination, and because of the assumption $\eta_g>1$, it can be interpreted as the increase in variability due to the bad observations (i.e., it is an inflation parameter; see Figure~\ref{fig:contaminated}). 
Indeed, the covariance matrix in the $g$th component, $g=1,\ldots,G$, is given by 
\begin{equation}
\left[\alpha_g+\left(1-\alpha_g\right)\eta_g\right]\bSigma_g,
\label{eq:component covariance matrix}
\end{equation}
where the scale factor satisfies the constraint $\left[\alpha_g+\left(1-\alpha_g\right)\eta_g\right]>1$ because $\eta_g>1$.
\begin{figure}[!ht]
  \centering
\includegraphics[width=0.49\textwidth]{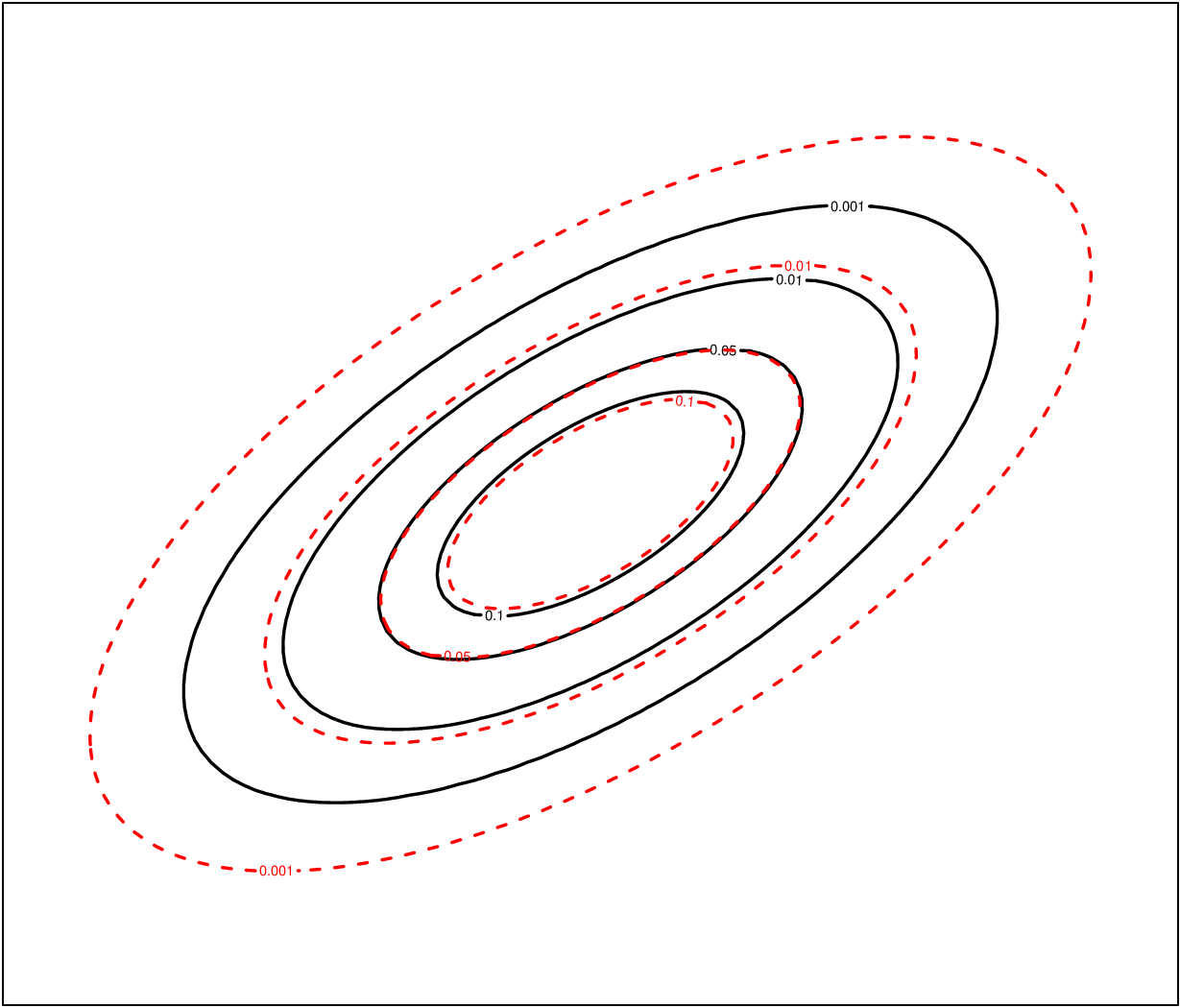}
\caption{Example of contours illustrating the inflation effect of the constraint that $\eta>1$.
A normal distribution (solid black contours) is compared to a contaminated normal distribution (dashed red contours) with $\alpha=0.8$ and $\eta=3$.
}
\label{fig:contaminated}       
\end{figure}
The density of our mixture of multivariate contaminated normal distributions is given by
\begin{equation}
p\left(\bx;\boldsymbol{\psi}\right)=\sum_{g=1}^G\pi_g\left[\alpha_g\phi\left(\bx;\bmu_g,\bSigma_g\right)+\left(1-\alpha_g\right)\phi\left(\bx;\bmu_g,\eta_g\bSigma_g\right)\right].
\label{eq:mixture of contaminated normal distributions}
\end{equation}
Because our contaminated approach contains $2G$ components, i.e., $G$ top-level components, each of which contains two second-level components, the model in \eqref{eq:classical mixture} shall be considered to contain $G$ clusters, rather than $G$ components, hereafter.
Based on this consideration, model~\eqref{eq:mixture of contaminated normal distributions} can be seen as a special case of the multi-layer mixture of normal distributions of \citet{Li:Clus:2005} if each of the $G$ clusters at the top level is itself a mixture of two components, with equal means and proportional covariance matrices at the secondary layer.

\subsection{Parsimonious variants of the general model}
\label{subsec:Parsimonious variants of the general model}

Because there are $p\left(p + 1\right)/2$ free parameters for each $\bSigma_g$, it is usually necessary to introduce parsimony into the model in \eqref{eq:mixture of contaminated normal distributions}. 
Following \citet{Banf:Raft:mode:1993} and \citet{Cele:Gova:Gaus:1995}, we consider the eigen-decomposition 
\begin{equation}
\bSigma_g=\lambda_g\bGamma_g\boldsymbol{\Delta}_g\bGamma_g',
\label{eq:eigenvalue decomposition}
\end{equation}
where $\lambda_g=\left|\boldsymbol{\Sigma}_g\right|^{1/p}$, $\boldsymbol{\Delta}_g$ is the scaled ($\left|\boldsymbol{\Delta}_g\right|=1$) diagonal matrix of the eigenvalues of $\bSigma_g$ sorted in decreasing order, and $\bGamma_g$ is a $p\times p$ orthogonal matrix whose columns are the normalized eigenvectors of $\bSigma_g$, ordered according to their eigenvalues.
Each element in the right-hand side of \eqref{eq:eigenvalue decomposition} has a different geometric interpretation: $\lambda_g$ determines the volume of the $g$th cluster of the good data only, $\bDelta_g$ determines the shape of the cluster, and $\bGamma_g$ determines the orientation of the cluster.
Based on \eqref{eq:component covariance matrix}, the volume of the cluster is given by $\lambda_g\left[\alpha_g+\left(1-\alpha_g\right)\eta_g\right]$. 

In the fashion of \citet{Cele:Gova:Gaus:1995}, we impose constraints on the three components of \eqref{eq:eigenvalue decomposition} resulting in a family of fourteen parsimonious mixtures of contaminated normal distributions models (\tablename~\ref{tab:models}).
The last column of \tablename~\ref{tab:models} specifies the scale invariant models of this family.
\renewcommand{\arraystretch}{0.4}  
\begin{table}[!ht]
\caption{
\label{tab:models}
Nomenclature, covariance structure, and number of free parameters in $\bSigma_1,\ldots,\bSigma_G$ for the models of our family.}
\centering
\resizebox{\textwidth}{!}{
\begin{tabular}{lllllrrc}
\toprule
Family & Model & Volume & Shape & Orientation & $\bSigma_g$ & \# of free parameters in $\bSigma_1,\ldots,\bSigma_G$ & Scale invariant
\\
\midrule
Spherical 
&EII & Equal    & Spherical & -            & $\lambda \boldsymbol{I}$  & 1 & No\\
&VII & Variable & Spherical & -            & $\lambda_g \boldsymbol{I}$ & $G$ & No\\[2mm]
Diagonal 
&EEI & Equal    & Equal     & Axis-Aligned & $\lambda \boldsymbol{\Delta}$ & $p$ & Yes\\
&VEI & Variable & Equal     & Axis-Aligned & $\lambda_g \boldsymbol{\Delta}$ & $G+p-1$ & Yes\\
&EVI & Equal    & Variable  & Axis-Aligned & $\lambda \boldsymbol{\Delta}_g$ & $1+G\left(p-1\right)$ & Yes\\
&VVI & Variable & Variable  & Axis-Aligned & $\lambda_g \boldsymbol{\Delta}_g$ & $Gp$ & Yes\\[2mm]
General 
&EEE & Equal    & Equal     & Equal        & $\lambda\bGamma\boldsymbol{\Delta}\bGamma'$ & $p\left(p+1\right)/2$ & Yes\\
&VEE & Variable & Equal     & Equal        & $\lambda_g\bGamma\boldsymbol{\Delta}\bGamma'$ & $G+p-1+p\left(p-1\right)/2$  & Yes \\
&EVE & Equal    & Variable  & Equal        & $\lambda\bGamma\boldsymbol{\Delta}_g\bGamma'$  & $1+G\left(p-1\right)+p\left(p-1\right)/2$  & No\\
&EEV & Equal    & Equal     & Variable     & $\lambda\bGamma_g\boldsymbol{\Delta}\bGamma_g'$ &  $p+Gp\left(p-1\right)/2$  & No\\
&VVE & Variable & Variable  & Equal        & $\lambda_g\bGamma\boldsymbol{\Delta}_g\bGamma'$ & $Gp+p\left(p-1\right)/2$  & No\\
&VEV & Variable & Equal     & Variable   & $\lambda_g\bGamma_g\boldsymbol{\Delta}\bGamma_g'$ & $G+p-1+Gp\left(p-1\right)/2$  & No\\
&EVV & Equal    & Variable  & Variable   & $\lambda\bGamma_g\boldsymbol{\Delta}_g\bGamma_g'$ & $1+G\left(p-1\right)+Gp\left(p-1\right)/2$  & Yes\\
&VVV & Variable & Variable  & Variable   & $\lambda_g\bGamma_g\boldsymbol{\Delta}_g\bGamma_g'$ & $Gp\left(p+1\right)/2$  & Yes\\
\bottomrule
\end{tabular}
}
\end{table}

\subsection{Model-based clustering}
\label{subsec:modelling frame}

The idea of defining clustering in terms of the components of a mixture model goes back at least 60 years \citep[cf.][Section~2.1]{mcnicholas16}, and model-based clustering has become increasingly popular since mixture models were first used for clustering \citep{wolfe65}. 
Consider $n$ independent $p$-dimensional unlabeled observations $\left\{\bx_i\right\}_{i=1}^n$ from model \eqref{eq:mixture of contaminated normal distributions}, and let $\left\{\boldsymbol{z}_i\right\}_{i=1}^n$ denote cluster memberships, where $\bz_i=\left(z_{i1},\ldots,z_{iG}\right)'$ and $z_{ig}=1$ if $\bx_i$ belongs to cluster $g$ and $z_{ig}=0$ otherwise. 
Using the same notation as before, the model-based clustering likelihood is given by 
\begin{displaymath}
L\left(\boldsymbol{\psi}\right)=\prod_{i=1}^n p\left(\bx_i;\boldsymbol{\psi}\right),
\end{displaymath}
and the predicted classifications are given by the maximum \textit{a~posteriori} probabilities (MAP).
Note that 
$$
\text{MAP}(\widehat{z}_{ig})=
\begin{cases}
1 & \text{if } \max_h\{\widehat{z}_{ih}\} \text{ occurs in cluster $g$,}\\
0 & \text{otherwise},
\end{cases}
$$
where
\begin{displaymath}
\widehat{z}_{ig}=\frac{\widehat{\pi}_gf(\bx_i;\widehat{\bvartheta}_g)}{p(\bx_i;\widehat{\boldsymbol{\psi}})}=\frac{\widehat{\pi}_gf(\bx_i;\widehat{\bvartheta}_g)}{\displaystyle\sum_{h=1}^G\widehat{\pi}_hf(\bx_i;\widehat{\bvartheta}_h)}
\end{displaymath}
is the \textit{a~posteriori} expected value of $Z_{ig}$ given $\bx_i$, i.e., the probability that $Z_{ig}=1$ given $\bx_i$ and based on the parameter estimates $\widehat{\bpsi}$.

\section{Identifiability}
\label{sec:Identifiability}

Before outlining parameter estimation for the models in our family, it is important to establish their identifiability. 
Identifiability is a necessary requirement, \textit{inter alia}, for the usual asymptotic theory to hold for maximum likelihood estimation of the model parameters (cf.\ Section~\ref{sec:Maximum likelihood estimation}). 
Before investigating the identifiability of our contaminated mixtures, it is convenient to rewrite the model density as
\begin{equation*}
p\left(\bx;\boldsymbol{\psi}\right)=\sum_{g=1}^G\sum_{h=1}^2\pi_g\alpha_{gh}\phi\left(\bx;\bmu_g,\eta_{gh}\bSigma_g\right),
\end{equation*}
where, with respect to equation \eqref{eq:mixture of contaminated normal distributions}, $\alpha_{g1}=\alpha_g$, $\alpha_{g2}=1-\alpha_{g1}$, $\eta_{g1}=1$, and $\eta_{g2}=\eta_g$. 

Identifiability of univariate and multivariate finite mixtures of normal distributions has been proved by \citet{Teic:Iden:1963} and \citet{Yako:Spra:Onth:1968}, respectively. 
As stated by \citet{DiZi:Guar:Rocc:Amix:2007}, in the absence of any constraint, a mixture of mixtures is not identifiable in general; this is essentially due to the possibility of interchanging component labels between the two levels of the model.
In our case, the contaminated normal distribution $f\left(\bx;\bvartheta_g\right)$ in cluster~$g$ is elliptical, and sufficient conditions for identifiability of finite mixtures of elliptical distributions are given in \citet{Holz:Munk:Gnei:Iden:2006}. However, these conditions will only apply here if we fix $\alpha_g$ and $\eta_g$ \textit{a~priori}. 
To avoid the requirement to fix parameters in advance, we need to take a different approach to prove identifiability.

In Proposition~\ref{pro:1}, it will be shown that the most general model in our family (i.e., VVV) is identifiable provided that, given two of the $G$ normal distributions $\phi\left(\bx;\bmu_g,\bSigma_g\right)$ representing the good observations, they have distinct means and/or non-proportional covariance matrices.
It is easy to show that the same sufficient condition also holds for the models EVI, VVI, EVE, EEV, VVE, VEV, and EVV.
Proposition~\ref{pro:2} shows that the VEE model is identifiable provided that, given two of the $G$ normal distributions $\phi\left(\bx;\bmu_g,\bSigma_g\right)$ representing the good observations, they have distinct means.
It is straightforward to show that the same sufficient condition also holds for the nested models: EII, VII, EEI, VEI, and EEE.
\begin{pro}
\label{pro:1}
Let
\begin{displaymath}
p\left(\bx;\boldsymbol{\psi}\right)=\sum_{g=1}^G\sum_{h=1}^2\pi_g\alpha_{gh}\phi\left(\bx;\bmu_g,\eta_{gh}\bSigma_g\right)
\end{displaymath}
and 
\begin{displaymath}
p\left(\bx;\widetilde{\boldsymbol{\psi}}\right)=\sum_{s=1}^{\widetilde{G}}\sum_{t=1}^2\widetilde{\pi}_s\widetilde{\alpha}_{st}\phi(\bx;\widetilde{\bmu}_s,\widetilde{\eta}_{st}\widetilde{\bSigma}_s)
\end{displaymath}
be two different parameterizations of the unconstrained model (i.e., VVV). 
If $g\neq g_1$ implies
\begin{equation}
\left\|\bmu_g-\bmu_{g_1}\right\|_2^2+\left\|\bSigma_g-a\bSigma_{g_1}\right\|_2^2\neq 0
\label{eq:sufficient condition}
\end{equation}
for all $a>0$, where $\left\| \cdot \right\|_2$ is the Froebenius norm, then the equality $p\left(\bx;\boldsymbol{\psi}\right)=p(\bx;\widetilde{\boldsymbol{\psi}})$ implies that $G=\widetilde{G}$ and also implies that there exists a relabelling such that
\begin{displaymath}
\pi_g=\widetilde{\pi}_g,\quad \alpha_{gh}=\widetilde{\alpha}_{gh},\quad \bmu_g=\widetilde{\bmu}_g,\quad \bSigma_g=\widetilde{\bSigma}_g,\quad \text{and}\quad \eta_{gh}=\widetilde{\eta}_{gh}.
\end{displaymath}
\end{pro}
\begin{proof}
The identifiability of finite mixtures of normal distributions guarantees that $2G=2\widetilde{G}$, i.e., $G=\widetilde{G}$, and, for each pair $\left(g,h\right)$, there exists a pair $\left(s,t\right)$ such that 
\begin{equation}
\pi_g\alpha_{gh}=\widetilde{\pi}_s\widetilde{\alpha}_{st},\quad \bmu_g=\widetilde{\bmu}_s,\quad \text{and}\quad \eta_{gh}\bSigma_g=\widetilde{\eta}_{st}\widetilde{\bSigma}_s;
\label{eq:problematic equality} 	
\end{equation}
cf. \citet{DiZi:Guar:Rocc:Amix:2007}.
Note that Condition~\eqref{eq:sufficient condition}, the fact that $\eta_{g2}>\eta_{g1}$ ($\eta_{s2}>\eta_{s1}$), and the positivity of all the weights $\pi_g$ and $\alpha_{gh}$ ($\pi_s$ and $\alpha_{st}$) avoids nonidentifiability due to potential overfitting \citep[a potential problem for identifiability first noted by][]{Craw:Anap:1994}.
In particular, the positivity constraint on the weights avoids nonidentifiability due to empty components while the remaining two constraints avoid nonidentifiability due to identical components. 
	
	Based on Condition~\eqref{eq:sufficient condition}, only two of the $2G$ normal distributions --- those with corresponding $g$ for the first parameterization ($s$ for the second) --- can have the same mean and proportional covariance matrices.
	Hence, for each pair $\left(g,s\right)$, with $g,s\in\left\{1,\ldots,G\right\}$, satisfying \eqref{eq:problematic equality}, the problem reduces to comparing the pair
	\begin{equation}	
	\left\{\left\{\pi_g\alpha_{g1},\eta_{g1}\bSigma_g\right\},\left\{\pi_g\alpha_{g2},\eta_{g2}\bSigma_g\right\}\right\}
	\label{eq:para1}
	\end{equation}
	with the pair
	\begin{equation}	\left\{\left\{\widetilde{\pi}_s\widetilde{\alpha}_{s1},\widetilde{\eta}_{s1}\widetilde{\bSigma}_s\right\},\left\{\widetilde{\pi}_s\widetilde{\alpha}_{s2},\widetilde{\eta}_{s2}\widetilde{\bSigma}_s\right\}\right\}.
	\label{eq:para2}
	\end{equation}
Thanks to the constraint that the inflation parameters $\eta_{g2}$ and $\eta_{s2}$ must be greater than one, it is easy to show that $\eta_{g2}=\widetilde{\eta}_{s2}$ and $\bSigma_g=\widetilde{\bSigma}_s$.
		In particular, if we compare the first covariance matrix in \eqref{eq:para1} with the first covariance matrix in \eqref{eq:para2}, and the second covariance matrix in \eqref{eq:para1} with the second covariance matrix in \eqref{eq:para2}, we obtain
		\begin{equation}
			\left\{
			\begin{array}{l}
			\eta_{g1}\bSigma_g=\widetilde{\eta}_{s1}\widetilde{\bSigma}_s\\
			\eta_{g2}\bSigma_g=\widetilde{\eta}_{s2}\widetilde{\bSigma}_s
			\end{array}\right.
			\ 
			\Rightarrow
			\ 
			\left\{
			\begin{array}{l}
			\eta_{g2}=\widetilde{\eta}_{s2}\\
			\bSigma_g=\widetilde{\bSigma}_s
			\end{array}\right. ,
			\label{eq:var1}
		\end{equation}
		which is exactly what we need for identifiability.
		In \eqref{eq:var1}, we have used the fact that, by definition, $\eta_{g1}=\widetilde{\eta}_{s1}=1$.
		On the contrary, if we consider the remaining possibility to compare the first covariance matrix in \eqref{eq:para1} with the second covariance matrix in \eqref{eq:para2}, and the second covariance matrix in \eqref{eq:para1} with the first covariance matrix in \eqref{eq:para2}, we obtain the impossible equation $\eta_{g2}\widetilde{\eta}_{s2}=1$; this equation is impossible because $\eta_{g2}$ and $\eta_{s2}$ are both greater than one.
		
		With regard to the mixture weights, we know from \eqref{eq:var1} that the first element of \eqref{eq:para1} is related to the first element of \eqref{eq:para2} and the second element of \eqref{eq:para1} is related to the second element of \eqref{eq:para2}; accordingly, we have only to compare the corresponding weights.
			In particular, we obtain
			\begin{equation}
			\left\{
			\begin{array}{l}
			\pi_g\alpha_{g1}=\widetilde{\pi}_s\widetilde{\alpha}_{s1}\\
			\pi_g\alpha_{g2}=\widetilde{\pi}_s\widetilde{\alpha}_{s2}
			\end{array}\right.
			\ 
			\Rightarrow
			\ 
			\left\{
			\begin{array}{l}
			\pi_g\alpha_{g1}=\widetilde{\pi}_s\widetilde{\alpha}_{s1}\\
			\pi_g\left(1-\alpha_{g1}\right)=\widetilde{\pi}_s\left(1-\widetilde{\alpha}_{s1}\right)
			\end{array}\right.
			\ 
			\Rightarrow
			\ 
			\left\{
			\begin{array}{l}
			\pi_g=\widetilde{\pi}_s\\
			\alpha_{g1}=\widetilde{\alpha}_{s1}
			\end{array}\right. .
			\label{eq:dimw}
		\end{equation}
Finally, based on \eqref{eq:problematic equality}, \eqref{eq:var1}, and \eqref{eq:dimw}, after a suitable relabelling, we obtain
\begin{displaymath}
\pi_g=\widetilde{\pi}_g,\quad \alpha_{gh}=\widetilde{\alpha}_{gh},\quad \bmu_g=\widetilde{\bmu}_g,\quad \lambda_g=\widetilde{\lambda}_g,\quad \boldsymbol{\Omega}=\widetilde{\boldsymbol{\Omega}},\quad \text{and}\quad \eta_{gh}=\widetilde{\eta}_{gh}, 
\end{displaymath}
with $g\in\left\{1,\ldots,G\right\}$ and $h\in\left\{1,2\right\}$, and this completes the proof.
\end{proof}
\begin{pro}
\label{pro:2}
Let
\begin{displaymath}
p\left(\bx;\boldsymbol{\psi}\right)=\sum_{g=1}^G\sum_{h=1}^2\pi_g\alpha_{gh}\phi\left(\bx;\bmu_g,\eta_{gh}\lambda_g\boldsymbol{\Omega}\right)
\end{displaymath}
and 
\begin{displaymath}
p\left(\bx;\widetilde{\boldsymbol{\psi}}\right)=\sum_{s=1}^{\widetilde{G}}\sum_{t=1}^2\widetilde{\pi}_s\widetilde{\alpha}_{st}\phi\left(\bx;\widetilde{\bmu}_s,\widetilde{\eta}_{st}\widetilde{\lambda}_s\widetilde{\boldsymbol{\Omega}}\right)
\end{displaymath}
be two different parameterizations of the VEE model, with $\boldsymbol{\Omega}=\bGamma\boldsymbol{\Delta}\bGamma'$ and  $\widetilde{\boldsymbol{\Omega}}=\widetilde{\bGamma}\widetilde{\boldsymbol{\Delta}}\widetilde{\bGamma}'$. 
If $g\neq g_1$ implies
\begin{equation}
\left\|\bmu_g-\bmu_{g_1}\right\|_2^2\neq 0,
\label{eq:sufficient condition 2}
\end{equation}
then the equality $p\left(\bx;\boldsymbol{\psi}\right)=p(\bx;\widetilde{\boldsymbol{\psi}})$ implies that $G=\widetilde{G}$ and that there exists a relabelling such that
\begin{displaymath}
\pi_g=\widetilde{\pi}_g,\quad \alpha_{gh}=\widetilde{\alpha}_{gh},\quad \bmu_g=\widetilde{\bmu}_g,\quad \lambda_g=\widetilde{\lambda}_g,\quad \boldsymbol{\Omega}=\widetilde{\boldsymbol{\Omega}},\quad \text{and}\quad \eta_{gh}=\widetilde{\eta}_{gh}. 
\end{displaymath}
\end{pro}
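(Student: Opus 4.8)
The plan is to follow the template of Proposition~\ref{pro:1} but to replace the non-proportionality argument, which is unavailable here because every component covariance matrix is a scalar multiple of the single matrix $\boldsymbol{\Omega}$. As before, the identifiability of finite Gaussian mixtures guarantees $2G=2\tilde{G}$, hence $G=\tilde{G}$, and provides a pairing of the $2G$ Gaussian atoms of the two parameterizations under which
\begin{equation*}
\pi_g\alpha_{gh}=\tilde{\pi}_s\tilde{\alpha}_{st},\quad \boldsymbol{\mu}_g=\tilde{\boldsymbol{\mu}}_s,\quad\text{and}\quad \eta_{gh}\lambda_g\boldsymbol{\Omega}=\tilde{\eta}_{st}\tilde{\lambda}_s\tilde{\boldsymbol{\Omega}}.
\end{equation*}

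First I would use condition~\eqref{eq:sufficient condition 2} to fix the grouping. In each parameterization every distinct mean occurs in exactly two Gaussian atoms (the $h=1$ and $h=2$ terms sharing $\boldsymbol{\mu}_g$), and by~\eqref{eq:sufficient condition 2} the $G$ means are pairwise distinct; hence the pairing above, which already equates matched means, forces the two atoms of group $g$ on the left to correspond to the two atoms of a single group on the right, and after relabelling we may write $\boldsymbol{\mu}_g=\tilde{\boldsymbol{\mu}}_g$ with the correspondence acting within each group.

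Next I would pin down $\boldsymbol{\Omega}$. Each covariance equality within group $g$ reads $\eta_{gh}\lambda_g\boldsymbol{\Omega}=\tilde{\eta}_{gt}\tilde{\lambda}_g\tilde{\boldsymbol{\Omega}}$, so $\boldsymbol{\Omega}$ and $\tilde{\boldsymbol{\Omega}}$ are positive scalar multiples of one another; taking determinants and using $|\boldsymbol{\Omega}|=|\tilde{\boldsymbol{\Omega}}|=1$ (because $|\boldsymbol{\Delta}|=1$ and $\boldsymbol{\Gamma}$ is orthogonal) forces the scalar to equal one, so $\boldsymbol{\Omega}=\tilde{\boldsymbol{\Omega}}$. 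The covariance equalities then collapse to the multiset statement $\{\lambda_g,\eta_g\lambda_g\}=\{\tilde{\lambda}_g,\tilde{\eta}_g\tilde{\lambda}_g\}$, recalling $\eta_{g1}=\tilde{\eta}_{g1}=1$. Here the constraint $\eta_g,\tilde{\eta}_g>1$ does the crucial work: it marks the good atom (smaller scale $\lambda_g$) and the contaminated atom (larger scale $\eta_g\lambda_g$) within each group, so matching smaller to smaller and larger to larger yields $\lambda_g=\tilde{\lambda}_g$ and then $\eta_g=\tilde{\eta}_g$, i.e.\ $\eta_{gh}=\tilde{\eta}_{gh}$, while also confirming $t=h$.

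Finally, having matched all the Gaussian parameters, the proportion equalities become $\pi_g\alpha_{gh}=\tilde{\pi}_g\tilde{\alpha}_{gh}$, and Lemma~\ref{lem:1} recovers $\pi_g=\tilde{\pi}_g$ and $\alpha_{gh}=\tilde{\alpha}_{gh}$, completing the proof. The only delicate point, and the place where this differs from Proposition~\ref{pro:1}, is that proportionality of the covariances can no longer be excluded; the argument must instead extract all identifying information from (i) distinctness of the means to form the groups, (ii) the normalization $|\boldsymbol{\Omega}|=1$ to fix the common matrix, and (iii) the inflation constraint $\eta_g>1$ to resolve the good-versus-contaminated label within each group.
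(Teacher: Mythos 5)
Your proposal is correct and follows essentially the same route as the paper's own proof: Gaussian-mixture identifiability to match atoms, condition~\eqref{eq:sufficient condition 2} to group the $2G$ atoms in pairs sharing a mean, the normalization $\left|\boldsymbol{\Delta}\right|=1$ to force $\boldsymbol{\Omega}=\tilde{\boldsymbol{\Omega}}$, the constraint $\eta_{g2},\tilde{\eta}_{s2}>1$ to resolve the good-versus-contaminated match within each group, and Lemma~\ref{lem:1} to recover $\pi_g$ and $\alpha_{gh}$. If anything, you make explicit two steps the paper leaves terse --- the determinant argument behind $\boldsymbol{\Omega}=\tilde{\boldsymbol{\Omega}}$ and the impossibility of the crossed matching in the multiset $\{\lambda_g,\eta_g\lambda_g\}=\{\tilde{\lambda}_g,\tilde{\eta}_g\tilde{\lambda}_g\}$ --- but the argument is the same.
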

\begin{proof}
Noting that the assumption $\left|\boldsymbol{\Delta}\right|=1$ (and $|\widetilde{\boldsymbol{\Delta}}|=1$) ensures that $\boldsymbol{\Omega}=\widetilde{\boldsymbol{\Omega}}$, the proof is almost identical to the proof of Proposition~\ref{pro:1}.
\end{proof}

\section{Maximum likelihood estimation}
\label{sec:Maximum likelihood estimation}

\subsection{An ECM algorithm}

To fit the models of our family, we use the expectation-conditional maximization (ECM) algorithm \citep{Meng:Rubin:Maxi:1993}. 
The ECM algorithm is a variant of the classical expectation-maximization (EM) algorithm \citep{Demp:Lair:Rubi:Maxi:1977}, which is a natural approach for maximum likelihood estimation when data are incomplete. 
In our case, there are two sources of missing data: one arises from the fact that we do not know the cluster labels $\left\{\boldsymbol{z}_i\right\}_{i=1}^n$  and the other arises from the fact that we do not know whether an observation in group $g$ is good or bad. 
To denote this second source of missing data, we use $\left\{\boldsymbol{v}_i\right\}_{i=1}^n$, where $\boldsymbol{v}_i=\left(v_{i1},\ldots,v_{iG}\right)'$ so that $v_{ig}=1$ if observation~$i$ in group~$g$ is good and $v_{ig}=0$ if observation $i$ in group $g$ is bad. 
Therefore, the complete-data are given by $\mathcal{S}=\left\{\bx_i,\boldsymbol{z}_i,\boldsymbol{v}_i\right\}_{i=1}^n$, and the complete-data log-likelihood can be written
\begin{equation}
l_c\left(\boldsymbol{\psi}|\mathcal{S}\right)=l_{1c}\left(\boldsymbol{\pi}|\mathcal{S}\right)+l_{2c}\left(\boldsymbol{\alpha}|\mathcal{S}\right)+l_{3c}\left(\bvartheta|\mathcal{S}\right),
\label{eq:complete-data log-likelihood}
\end{equation}
where 
\begin{align*}
l_{1c}\left(\boldsymbol{\pi}|\mathcal{S}\right)    &= \sum_{i=1}^{n}\sum_{g=1}^{G}{z}_{ig}\ln \pi_g,\qquad\quad
l_{2c}\left(\boldsymbol{\alpha}|\mathcal{S}\right) = \sum_{i=1}^{n}\sum_{g=1}^{G}z_{ig}\left[v_{ig}\ln \alpha_g+\left(1-v_{ig}\right)\ln \left(1-\alpha_g\right)\right],\\
l_{3c}\left(\boldsymbol{\theta}|\mathcal{S}\right) &=-\frac{1}{2}\sum_{i=1}^n\sum_{g=1}^G\Biggl\{z_{ig}\ln\left|\bSigma_g\right| +pz_{ig}\left(1-v_{ig}\right)\ln\eta_g+z_{ig}\left(v_{ig}+\frac{1-v_{ig}}{\eta_g}\right)\delta\left(\bx_i,\bmu_g;\bSigma_g\right)\Biggr\},
\end{align*}
with $\boldsymbol{\alpha}=\left(\alpha_1,\ldots,\alpha_G\right)'$ and $\boldsymbol{\theta}=\left\{\bmu_g,\bSigma_g,\eta_g\right\}_{g=1}^G$.
The ECM algorithm iterates between three steps, an E-step and two CM-steps, until convergence. 
The only difference from the EM algorithm is that each M-step is replaced by two simpler CM-steps. 
They arise from the partition $\boldsymbol{\psi}=\left\{\boldsymbol{\psi}_1,\boldsymbol{\psi}_2\right\}$, where $\boldsymbol{\psi}_1=\left\{\pi_g,\alpha_g,\bmu_g,\bSigma_g\right\}_{g=1}^G$ and $\boldsymbol{\psi}_2=\left\{\eta_g\right\}_{g=1}^G$.

\subsection{Model VVV}
\label{subsec:Model VVV}

Here, we detail the ECM algorithm for the most general VVV model~\eqref{eq:mixture of contaminated normal distributions} under no constraint for $\alpha_g$, i.e., $\alpha_g\in\left(0,1\right)$, $g=1,\ldots,G$. 

\subsubsection{E-step.}
\label{subsubsec:E-step}

The E-step, on the $\left(r+1\right)$th iteration of the ECM algorithm, requires the calculation of $Q(\boldsymbol{\psi}|\boldsymbol{\psi}^{\left(r\right)})$, the current conditional expectation of $l_c\left(\boldsymbol{\psi}|\mathcal{S}\right)$.
To do this, we need to calculate $E_{\boldsymbol{\psi}^{\left(r\right)}}\left(Z_{ig}|\bx_i\right)$ and  $E_{\boldsymbol{\psi}^{\left(r\right)}}\left(V_{ig}|\bx_i,\boldsymbol{z}_i\right)$, for $i=1,\ldots,n$ and $g=1,\ldots,G$.
They are given by
\begin{displaymath}
E_{\boldsymbol{\psi}^{\left(r\right)}}\left(Z_{ig}|\bx_i\right)=\frac{\pi_g^{\left(r\right)}f(\bx_i;\bvartheta_g^{\left(r\right)})}{p(\bx_i;\boldsymbol{\psi}^{\left(r\right)})}\eqqcolon z_{ig}^{\left(r\right)}
\end{displaymath}
and
\begin{equation}
E_{\boldsymbol{\psi}^{\left(r\right)}}\left(V_{ig}|\bx_i,\boldsymbol{z}_i\right)=\frac{\alpha_g^{\left(r\right)}\phi(\bx_i;\bmu_g^{\left(r\right)},\bSigma_g^{\left(r\right)})}{f(\bx_i;\bvartheta_g^{\left(r\right)})}\eqqcolon v_{ig}^{\left(r\right)},	
\label{eq:v update}
\end{equation}
respectively. 
Then, by substituting $z_{ig}$ with $z_{ig}^{\left(r\right)}$ and $v_{ig}$ with $v_{ig}^{\left(r\right)}$ in \eqref{eq:complete-data log-likelihood}, we obtain $Q(\boldsymbol{\psi}|\boldsymbol{\psi}^{\left(r\right)})$. 

\subsubsection{CM-step 1.}
\label{subsubsec:CM-step 1}

The first CM-step on the $\left(r+1\right)$th iteration of the ECM algorithm requires the calculation of $\boldsymbol{\psi}_1^{\left(r+1\right)}$ as the value of $\boldsymbol{\psi}_1$ that maximizes $Q(\boldsymbol{\psi}|\boldsymbol{\psi}^{\left(r\right)})$ with $\boldsymbol{\psi}_2$ fixed at $\boldsymbol{\psi}_2^{\left(r\right)}$.
In particular, we obtain
\begin{align}
\pi_g^{\left(r+1\right)}&=\frac{n_g^{\left(r\right)}}{n},\qquad\quad
\alpha_g^{\left(r+1\right)}=\frac{1}{n_g^{\left(r\right)}}\sum_{i=1}^n{z}_{ig}^{\left(r\right)}v_{ig}^{\left(r\right)},\nonumber\\
\bmu_g^{\left(r+1\right)}&=\frac{1}{s_g^{\left(r\right)}}\sum_{i=1}^n{z}_{ig}^{\left(r\right)}\left(v_{ig}^{\left(r\right)}+\frac{1-v_{ig}^{\left(r\right)}}{\eta_g^{\left(r\right)}}\right)\bx_i,\label{eq:update mu}\\
\bSigma_g^{\left(r+1\right)}&=\frac{1}{n_g^{\left(r\right)}}\boldsymbol{W}_g^{\left(r\right)}, \label{eq:SigmaX update}
\end{align}
where
\begin{equation*}\begin{split}
&s_g^{\left(r\right)}=\sum_{i=1}^nz_{ig}^{\left(r\right)}\left(v_{ig}^{\left(r\right)}+\frac{1-v_{ig}^{\left(r\right)}}{\eta_g^{\left(r\right)}}\right),\quad
\boldsymbol{W}_g^{\left(r+1\right)}=\sum_{i=1}^nz_{ig}^{\left(r\right)}\left(v_{ig}^{\left(r\right)}+\frac{1-v_{ig}^{\left(r\right)}}{\eta_g^{\left(r\right)}}\right)\left(\bx_i-\displaystyle\bmu_g^{\left(r+1\right)}\right)\left(\bx_i-\displaystyle\bmu_g^{\left(r+1\right)}\right)',
\end{split}
\end{equation*}
and $n_g^{\left(r\right)}=\sum_{i=1}^nz_{ig}^{\left(r\right)}$.

\subsubsection{CM-step 2.}
\label{subsubsec:CM-step 2}

The second CM-step, on the $\left(r+1\right)$th iteration of the ECM algorithm, requires the calculation of $\boldsymbol{\psi}_2^{\left(r+1\right)}$ as the value of $\boldsymbol{\psi}_2$ that maximizes $Q(\boldsymbol{\psi}|\boldsymbol{\psi}^{\left(r\right)})$ with $\boldsymbol{\psi}_1$ fixed at $\boldsymbol{\psi}_1^{\left(r+1\right)}$.
In particular, we have to maximize
\begin{equation}
-\frac{p}{2}\sum_{i=1}^nz_{ig}^{\left(r\right)}\left(1-v_{ig}^{\left(r\right)}\right)\ln \eta_g
-\frac{1}{2}\sum_{i=1}^n{z}_{ig}^{\left(r\right)}\frac{1-v_{ig}^{\left(r\right)}}{\eta_g}\delta\left(\bx_i,\bmu_g^{\left(r+1\right)};\bSigma_g^{\left(r+1\right)}\right),
\label{eq:maximization function for etaj}
\end{equation}
with respect to $\eta_g$, under the constraint $\eta_g>1$, for $g=1,\ldots,G$.
Operationally, the \texttt{optimize()} function, in the \texttt{stats} package for \textsf{R}, is used to perform a numerical search of the maximum $\eta_g^{\left(r+1\right)}$ of \eqref{eq:maximization function for etaj} over the interval $\left(1,\eta^*\right)$, with $\eta^*>1$.
In the analyses in Section~\ref{sec:Data analysis}, we fix $\eta^*=1000$ to facilitate faster convergence.

\subsection{Parsimonious models}
\label{subsec:Parsimonious models}

The ECM algorithm for the other models of our family changes only with respect to the way the terms of the eigen-decomposition of $\bSigma_g$ are obtained in the first CM-step. 
In particular, these updates are analogous to those given by \citet{Cele:Gova:Gaus:1995}. 
The only difference is that, on the $\left(r+1\right)$th iteration of the algorithm, $\boldsymbol{W}_g^{\left(r+1\right)}$ is used instead of the classical scatter matrix 
$$
\sum_{i=1}^nz_{ig}^{\left(r\right)}\left(\bx_i-\displaystyle\bmu_g^{\left(r+1\right)}\right)\left(\bx_i-\displaystyle\bmu_g^{\left(r+1\right)}\right)'.	
$$

\section{Further aspects}
\label{sec:fas}

\subsection{Implementation}
\label{sec:Computational aspects}

\textsf{R} source code implementing the ECM algorithm for all of the models of our family, in the form of an \textsf{R} package, is available from CRAN at \url{https://cran.r-project.org/web/package=ContaminatedMixt} \citep{Punz:Mazz:McNi:Cont:2015}.
As a basis to implement our code, we used the \texttt{mixture} package \citep{Brow:McNi:mixt:2015} for {\sf R} \citep{R}, which gives a flexible implementation of the EM algorithm for the family of parsimonious mixtures of multivariate normal distributions introduced by \citet{Cele:Gova:Gaus:1995}, hereafter abbreviated as GPCM family.
The \texttt{mixture} package differs from the \texttt{Rmixmod} package (\citealp{Bier:Cele:Gova:Lang:Noul:Vern:MIXM:2008}; \citealp{Lebr:Iovl:Lang:Bier:Cele:Gova:Rmix:2012}) with respect to the algorithm used in the M-step to estimate parameters for the EVE and VVE models.
In particular, the \texttt{Rmixmod} package adopts the classical FG-algorithm of \citet{Flur:Gaut:anal:1986}, while the \texttt{mixture} package makes use of majorization-minimization (MM) algorithms \citep{Lang:Hunt:Yang:Opti:2000,Brow:McNi:Adva:2014}.

\subsection{Initialization}
\label{subsec:Initialization}

The choice of the starting values for EM-based algorithms constitutes an important issue (see, e.g., \citealp{Bier:Cele:Gova:Choo:2003}, \citealp{Karl:Xeka:Choo:2003}, and \citealp{Bagn:Punz:Fine:2013}).
For the ECM algorithm described before, two natural strategies are:  
\begin{enumerate}
	\item providing the initial quantities $\bz_{i}^{\left(0\right)}$, $\bv_{i}^{\left(0\right)}$, and $\eta_g^{\left(0\right)}$, $i=1,\ldots,n$ and $g=1,\ldots,G$, to the first CM-step of the first iteration; and
	\item selecting an initial value $\boldsymbol{\psi}^{\left(0\right)}$ for $\boldsymbol{\psi}$ in order to run the E-step of the first iteration.
\end{enumerate}
By considering the first strategy, we suggest the following technique.
Each ($G$-cluster) model of the GPCM family tends to the corresponding ($G$-cluster) model of our family when $\alpha_g\rightarrow 1^-$ and $\eta_g\rightarrow 1^+$, $g=1,\ldots,G$.
Under these conditions, $v_{ig}\rightarrow 1^-$, $i=1,\ldots,n$ and $g=1,\ldots,G$.
Then, the posterior probabilities from the EM algorithm for each model of the GPCM family --- obtained with the \texttt{gpcm()} function of the \texttt{mixture} package --- along with the constraints $v_{ig}^{\left(0\right)}=v^{\left(0\right)}$, with $v^{\left(0\right)}\rightarrow 1^-$, and $\eta_g=\eta^{\left(0\right)}$, with $\eta^{\left(0\right)}\rightarrow 1^+$, $i=1,\ldots,n$, and $g=1,\ldots,G$, can be used to run the first CM-step of the first iteration of our ECM algorithm.
From an operational point of view, thanks to the monotonicity property of the ECM algorithm \citep[see, e.g.,][p.~28]{McLa:Kris:TheE:2007}, this also guarantees that the observed-data log-likelihood of a model from our family will be always greater than or equal to the observed-data log-likelihood of the corresponding model of the GPCM family (nested models); this is a fundamental consideration for the use of likelihood-based model selection criteria for choosing between models of our family and of the GPCM family (cf.~\citealp{Bohn:Ruan:Anot:2002} and \citealp{Punz:Brow:McNi:Hypo:2016}).
In the analyses of Section~\ref{sec:Data analysis}, $v^{\left(0\right)}=0.999$ and $\eta^{\left(0\right)}=1.001$. 

\subsection{Convergence criterion}
\label{subsec:Convergence criterion}

The Aitken acceleration \citep{Aitk:OnBe:1926} is used to estimate the asymptotic maximum of the log-likelihood at each iteration of the ECM algorithm. 
Based on this estimate, we can decide whether or not the algorithm has reached convergence, i.e., whether or not the log-likelihood is sufficiently close to its estimated asymptotic value. 
The Aitken acceleration at iteration $r+1$ is given by
\begin{equation*}
a^{\left(r+1\right)}=\frac{l^{\left(r+2\right)}-l^{\left(r+1\right)}}{l^{\left(r+1\right)}-l^{\left(r\right)}},
\end{equation*}
where $l^{\left(r\right)}$ is the observed-data log-likelihood value from iteration $r$. 
Then, the asymptotic estimate of the log-likelihood at iteration $r + 2$ is given by
\begin{equation*}
l_{\infty}^{\left(r+2\right)}=l^{\left(r+1\right)}+\frac{1}{1-a^{\left(r+1\right)}}\left(l^{\left(r+2\right)}-l^{\left(r+1\right)}\right);
\end{equation*}
cf.\ \citet{Bohn:Diet:Scha:Schl:Lind:TheD:1994}.
The ECM algorithm can be considered to have converged when $l_{\infty}^{\left(r+2\right)}-l^{\left(r+1\right)}<\epsilon$, with $\epsilon>0$, provided that this difference is positive (\citealp{McNi:Murp:McDa:Fros:Seri:2010}). 
In our analyses, we use $\epsilon=0.0001$.

\subsection{Local maxima and degeneracy of the likelihood}
\label{subsec:Degeneracy of the likelihood}

In the case of normal mixtures, it is well-known that the likelihood function: (1) presents spurious local maxima and (2) is unbounded.
It tends to infinity when one of the cluster means coincides with a sample observation and the corresponding covariance matrix tends to be singular \citep[cf.][]{Bier:Anas:2004}. 
The behaviour of the EM algorithm near a degenerate solution has been studied by \citet{Bier:Chre:Stat:2003}, \citet{Ingr:Alik:2004}, and \citet{Ingr:Rocc:Cons:2007,Ingr:Rocc:Dege:2011}, who tackle the problem by constraining the value of the smallest eigenvalue of the cluster covariance matrices \citep[see also][for the univariate case]{Hath:Acons:1986}. 
Recently, \citet{Brow:Sube:McNi:Cons:2013} consider constraining the smallest eigenvalue, the largest eigenvalue, and both the smallest and largest eigenvalues for a subset of models of the GPCM family. 
However, all these approaches require an \textit{a~priori} choice of the constraints they are based on and no rule of thumb is given to assist this choice. 
Because further study of the best threshold values for these techniques is beyond the scope of this work, we avoid considering ``preventive'' approaches in the implementation of the ECM algorithm. 

\subsection{Some notes on robustness}
\label{subsec:Some notes on robustness}

Based on \eqref{eq:update mu}, $\bmu_g^{\left(r+1\right)}$ is a weighted mean of the $\bx_i$ values, with weights depending on 
\begin{equation}
v_{ig}^{\left(r\right)}+\frac{1-v_{ig}^{\left(r\right)}}{\eta_g^{\left(r\right)}}.
\label{eq:downweight for X}
\end{equation}
Consider the update for $v_{ig}^{\left(r\right)}$, given in \eqref{eq:v update}, as a function of the squared Mahalanobis distance $\delta$, i.e.,
\begin{equation}
h\left(\delta;\alpha_g,\eta_g\right)=\frac{\alpha_g\exp\left(-\frac{\delta}{2}\right)}{\alpha_g\exp\left(-\frac{\delta}{2}\right)+\frac{\left(1-\alpha_g\right)}{\sqrt{\eta_g}}\exp\left(-\frac{\delta}{2\eta_g}\right)}=\frac{1}{1+\frac{\left(1-\alpha_g\right)}{\alpha_g}\frac{1}{\sqrt{\eta_g}}\exp\left[\frac{\delta}{2}\left(1-\frac{1}{\eta_g}\right)\right]},
\label{eq:updating function for u and v}
\end{equation}
with $\delta\geq 0$.
Due to the constraint $\eta_g>1$, from the last expression of \eqref{eq:updating function for u and v} it is straightforward to realize that $h\left(\delta;\alpha_g,\eta_g\right)$ is a decreasing function of $\delta$. 
Based on \eqref{eq:updating function for u and v}, \eqref{eq:downweight for X} can be written
\begin{equation}
w\left(\delta;\alpha_g,\eta_g\right)=h\left(\delta;\alpha_g,\eta_g\right)+\frac{1-h\left(\delta;\alpha_g,\eta_g\right)}{\eta_g}=\frac{1}{\eta_g}\left[1+\left(\eta_g-1\right)h\left(\delta;\alpha_g,\eta_g\right)\right].
\label{eq:component of the down-weighting}
\end{equation}
From the last expression of \eqref{eq:component of the down-weighting}, $w\left(\delta;\alpha_g,\eta_g\right)$ is an increasing function of $h\left(\delta;\alpha_g,\eta_g\right)$; this also means that $w\left(\delta;\alpha_g,\eta_g\right)$ is a decreasing function of $\delta$.
Therefore, the weights in \eqref{eq:downweight for X} reduce the impact of bad points in the estimation of the means $\bmu_g$, thereby providing robust estimates of these means.
In addition, from \eqref{eq:SigmaX update}, the larger $\delta$ values also have smaller effect on $\bSigma_g$, $g=1,\ldots,G$, due to the weights in \eqref{eq:downweight for X}. 
For a discussion on down-weighting for the contaminated normal distribution, see also \citet{Litt:Robu:1988}. 

\subsection{Automatic detection of bad points}
\label{subsec:Automatic detection of noise}

For a model belonging to our family, the classification of an observation $\bx_i$ means: 
\begin{description}
	\item[Step 1.] determine its cluster of membership;
	\item[Step 2.] establish whether it is a good or a bad observation in that cluster.
\end{description}
Let $\widehat{\boldsymbol{z}}_i$ and $\widehat{\boldsymbol{v}}_i$ denote, respectively, the expected values of $\boldsymbol{z}_i$ and $\boldsymbol{v}_i$ arising from the ECM algorithm, i.e., $\widehat{z}_{ig}$ is the value of $z_{ig}^{\left(r\right)}$ at convergence and $\widehat{v}_{ig}$ is the value of $v_{ig}^{\left(r\right)}$ at convergence. 
To evaluate the cluster membership of $\bx_i$, we use the MAP classification, i.e., $\text{MAP}\left(\widehat{z}_{ig}\right)$.
We then consider $\widehat{v}_{ih}$, where $h$ is selected such that $\text{MAP}\left(\widehat{z}_{ih}\right)=1$, and $\bx_i$ is considered good if $\widehat{v}_{ih}>0.5$ and $\bx_i$ is considered bad otherwise. 
The resulting information can be used to eliminate the bad points, if such an outcome is desired \citep{Berk:Bent:Esti:1988}. 
The remaining data may then be treated as effectively being distributed according to a mixture of normal distributions, and the clustering results can be reported as usual.
Finally, note that a \textit{a~posteriori} procedure (i.e., a procedure taking place once the model is fitted) to detect bad points with the mixture of multivariate $t$ distributions is illustrated by \citet[][p.~232]{McLa:Peel:fini:2000}. 
Such a procedure relies on a $\chi^2$-approximation, with $p$ degrees of freedom, of the squared Mahalanobis distances $\delta\left(\bx_i,\bmu_g;\bSigma_g\right)$, $i=1,\ldots,n$, after the MAP classification of each observation $\bx_i$ to one of the $G$ groups, and it requires a subjective choice of a percentile of the $\chi^2$ distribution in order for the observation to be classified as good or bad; such a procedure will be applied, for comparison's sake, in the analyses of Sections~\ref{sec:Numerical studies} and \ref{sec:Data analysis} by choosing the 95th percentile. 
On the contrary, the approach proposed herein is natural, in that it simultaneously identifies bad points and down-weights their impact on estimation of the mean (as well as on estimation of the covariance matrix; cf. Section~\ref{subsec:Some notes on robustness}), makes no additional distributional assumptions, and is not based on subjective choices.

\subsection{Constraints for detection of bad points}
\label{subsec:Constraints for detection of outliers}

When our models are used for detection of bad points in each group, $\left(1-\alpha_g\right)$ represents the proportion of bad points and $\eta_g$ denotes the degree of contamination.
Then, for the former parameter, one could require that in the $g$th group, $g=1,\ldots,G$, the proportion of good data is at least equal to a pre-determined value $\alpha_g^*$. 
In this case, the \texttt{optimize()} function is also used for a numerical search of the maximum $\alpha_g^{\left(r+1\right)}$, over the interval $(\alpha_g^*,1)$, of the function
\begin{displaymath}
\sum_{i=1}^nz_{ig}^{\left(r\right)}\left[v_{ig}^{\left(r\right)}\ln \alpha_g+\left(1-v_{ig}^{\left(r\right)}\right)\ln \left(1-\alpha_g\right)\right].	
\end{displaymath}
In the analyses herein (Section~\ref{sec:Data analysis}), we use this approach to update $\alpha_g$ and, as emphasized in Section~\ref{subsec:The general model}, we take $\alpha_g^*=0.5$, for $g=1,\ldots,G$.
Note that it is also possible to fix $\alpha_g$ and/or $\eta_g$ \textit{a~priori}. 

\subsection{Model selection}
\label{subsec:BIC}

The models from our family, in addition to $\boldsymbol{\psi}$, are also characterized by the particular covariance structure and by the number of clusters $G$.
Thus far, these quantities have been treated as \textit{a~priori} fixed; nevertheless, for practical purposes, model selection is usually required. One way (the usual way) to perform model selection is via computation of a convenient (likelihood-based) model selection criterion across all fourteen models and over a reasonable range of values for $G$, and then choosing the model associated with the best value of the adopted criterion (for the alternative use of likelihood-ratio tests to select either the parsimonious model or the number of components for a normal mixture, see \citealp{Lo:Mend:Rubi:Test:2001}, \citealp{Lo:Like:2005,Lo:Alik:2008},  and \citealp{Punz:Brow:McNi:Hypo:2016}).
Based on the simulation study performed by \citet{Li:Clus:2005} for the multi-layer mixture of normal distributions, in the data analyses of Section~\ref{sec:Data analysis} we will adopt the Bayesian information criterion \citep{Schw:Esti:1978}, i.e.,
\begin{displaymath}
\text{BIC}=-2l(\widehat{\boldsymbol{\psi}})+m\ln n,
\end{displaymath}
where $m$ is the overall number of free parameters in the model. 
Note that, Bayes factors can be used to compare models that are not nested, and the BIC approximation thereto holds when models are not nested (cf.~\citealp{Raft:Baye:1995}). 


\section{Simulation study: Comparison between mixtures that handle mild outliers}
\label{sec:Numerical studies}

\subsection{Overview}
In this section, we investigate the behaviour of the proposed model (for simplicity, in its unconstrained version VVV) through a large-scale simulation study performed using \textsf{R} \citep{R}.
We further provide a comparison with the unconstrained variants of the mixture models handling mild outliers and discussed in Section~\ref{sec:Introduction}.
A general feedback on advantages and drawbacks of each model is also given.
We compare:
\begin{enumerate}
	\item mixture of normal distributions (abbreviated by NM = normal mixture).
	The \texttt{gpcm()} function of the \texttt{mixture} package \citep{Brow:McNi:mixt:2015} for \textsf{R} is used to fit the unconstrained normal mixture (corresponding to the VVV model based on the nomenclature of the \texttt{mixture} package).
	The \texttt{gpcm()} function implements the EM algorithm.
	\item mixture of $t$ distributions ($t$M = $t$ mixture; \citealp{Andr:McNi:Mode:2012}).
	The \texttt{teigen()} function of the \textsf{teigen} package \citep{Andr:McNi:teig:2015} for \textsf{R} is used to fit the unconstrained $t$ mixture (corresponding to the UUUU model with respect to the nomenclature of the \textsf{teigen} package).
	The \texttt{teigen()} function implements the ECM algorithm described, for example, in \citep{Andr:McNi:teig:2015}.
	Degrees of freedom are estimated and they are allowed to vary across groups.
	\item mixture of contaminated normal distributions (CNM = contaminated normal mixture).
	The \texttt{CNmixt()} function of the \texttt{ContaminatedMixt} package \citep{Punz:Mazz:McNi:Cont:2015} for \textsf{R} is used to fit the unconstrained contaminated normal mixture (corresponding to the VVV model with respect to the nomenclature of the \texttt{ContaminatedMixt} package).
	The \texttt{CNmixt()} function implements the ECM algorithm described in Section~\ref{sec:Maximum likelihood estimation}.
	\item mixture of mixtures of a normal and a uniform distribution (NUM = normal-uniform mixture; \citealp{Brow:McNi:Spar:Mode:2012}).
	A specific \textsf{R} code, implementing the generalized-EM (GEM) algorithm described in \citet{Brow:McNi:Spar:Mode:2012}, is used to fit the unconstrained normal-uniform mixture.
	No constraint is imposed on the component uniform distributions (corresponding to model IV with respect to the nomenclature of \citealp{Brow:McNi:Spar:Mode:2012}).
	\item mixture of normal distributions plus a uniform component (NCM = noise component mixture; \citealp{Banf:Raft:mode:1993}).
	The \texttt{Mclust()} function of the \texttt{mclust} package \citep{Fral:Raft:Murp:Scru:mclu:2012,Fral:Raft:Scru:Murp:Fop:mclu:2015} for \textsf{R} is used to fit the unconstrained noise component mixture (corresponding to the VVV model with respect to the nomenclature of the \texttt{mclust} package).
	The \texttt{Mclust()} function implements the EM algorithm.
\end{enumerate}
To generate the data, we consider the following five data generation processes with $p=2$ dimensions and $G=2$ clusters:
\begin{enumerate}[label=\itshape\alph*\upshape)]
	\item\label{item:m1} NM;
	\item\label{item:m2} $t$M with $\nu_1=4$ and $\nu_2=10$ degrees of freedom; 
	\item\label{item:m3} CNM with $\alpha_1=0.9$, $\alpha_2=0.8$, $\eta_1=20$, and $\eta_2=30$;
	\item\label{item:m4} NM with 1\% of points randomly substituted by high atypical points with coordinates $\left(0,x_{i2}^*\right)$, where $x_{i2}^*$ is generated from a uniform distribution over the interval $\left(10,15\right)$.
	\item\label{item:m5} NM with 5\% of points randomly substituted by noise points generated from a uniform distribution over the interval $\left(-10,10\right)$ on each dimension.
\end{enumerate}
All of these data generation processes share the following common parameters
\begin{equation*}
	\pi_1=0.3,
	\quad
	\bmu_2=
	\begin{pmatrix*}[c]
0	\\
3	 
\end{pmatrix*},
\quad 
	\bSigma_1=\begin{pmatrix*}[c]
1 & -0.5	\\
-0.5 & 1	 
\end{pmatrix*},
\quad \text{and} \quad 
	\bSigma_2=\begin{pmatrix*}[c]
1 & 0.5	\\
0.5 & 1	 
\end{pmatrix*}.
\label{eq:common generating parameters}
\end{equation*}
As concerns the mean in the first group, two alternatives are considered in order to reproduce two different degrees of overlap between clusters: $\bmu_1=\left(0,-3\right)'$ in the ``far'' case, and $\bmu_1=\left(0,-1\right)'$ in the ``close'' case.
The five scenarios above cover different situations which may arise dealing with real-world data: no bad points for scenario~\ref{item:m1}, heavy-tails cluster distributions for scenarios \ref{item:m2} and \ref{item:m3}, and two different types of bad points for scenarios \ref{item:m4} and \ref{item:m5}.
Under each scenario, we simulate 1,000 samples considering the number of analyzed units $n$ (100, 200, and 500), as well as the degree of overlap (``far'' and ``close''), as experimental factors.
This yields a total of 30,000 generated data sets.
On each generated data set, the five competing models are directly run with $G=2$.
As concerns the initialization strategy of the EM-based algorithms for the first four models (NM, $t$M, CNM, and NUM), the partition provided by the $k$-means method, as implemented by the \texttt{kmeans()} function, with default arguments, of the \texttt{stats} package for \textsf{R}, is considered.
As concerns the NCM, an initial guess of the noise observations must be supplied via the \texttt{noise} component of the \texttt{initialization} argument in \texttt{Mclust()}.
Nearest neighbor based clutter/noise detection proposed by \citet{Byer:Raft:Near:1998} is applied to identify an initial set of noise points. 
The latter is implemented in the \texttt{NNclean()} function in \textsf{R}'s \texttt{prabclus} package \citep{Henn:Bern:prab:2015}. 
Agglomerative hierarchical clustering based on ML criteria for normal mixtures proposed by \citet{Banf:Raft:mode:1993} is then used for finding initial normal clusters in the non-noise data. 
This is implemented in the \texttt{hc()} function of \textsf{R}'s \texttt{mclust} package.

Before presenting the obtained results, we want to underline that the average elapsed time (in seconds over the 30,000 replications) to fit a single CNM is 0.692 seconds.
This information is useful to have an idea of the computational burden required by our ECM algorithm.
Computation is performed on a Windows 8.1 PC, with Intel i7 3.50GHz CPU, 16.0 GB RAM, using \textsf{R} 32 bit, and the elapsed time is computed via the \texttt{proc.time()} function of the \texttt{base} package.


\subsection{Parameter estimation}

For comparison's sake, we report the bias (BIAS) and the standard deviation (STD) of the estimates for the mixture weight $\pi_1$, the univariate means $\mu_{11}$ and $\mu_{21}$ (elements of $\bmu_1$), and the univariate means $\mu_{12}$ and $\mu_{22}$ (elements of $\bmu_2$). 
Before to illustrate the obtained results, it is important to underline that under mixture models there are well known label switching issues (see, e.g., \citealp{Cele:Hurn:Robe:Comp:2000}, \citealp{Step:Deal:2000}, and \citealp{Yao:Mode:2012}) when evaluating properties of the estimators of the parameters using simulation studies.
There are no generally accepted labeling methods.
In our simulation study, as in \cite{Bai:Yao:Boyer:Robu:2012} and \cite{Yao:Wei:Yu:Robu:2014}, we choose the labels by minimizing the distance to the true parameter values.

\tablename~\ref{tab:TABa} reports the results under scenario~\ref{item:m1}, that is when there are no bad points. 
Here, as expected, NMs, $t$Ms, and CNMs perform comparably because, in this situation, the $t$M and the CNM tend to the NM.
NCMs work well too, apart from the estimation of the mixture weights, while NUMs provide the worst results. 
Finally, regardless of both the considered model and the parameter of interest, the BIAS and the STD values improve with the increase of $n$ and they are better under the ``far'' case, as expected. 
\setlength{\tabcolsep}{5.1pt}
\renewcommand{\arraystretch}{0.41}
\begin{table}[!ht]
\caption{
Scenario~\ref{item:m1}: Simulation results on $1,000$ replications.
\label{tab:TABa}
}
\centering
\resizebox{\textwidth}{!}{
\begin{tabular}{lcll rr l rr l rr l rr l rr}
  \toprule
 & & & & \multicolumn{2}{c}{NM} & & \multicolumn{2}{c}{$t$M} & & \multicolumn{2}{c}{CNM} & & \multicolumn{2}{c}{NUM} & & \multicolumn{2}{c}{NCM} \\
 & $n$ & & &	BIAS	&	STD &	&	BIAS	&	STD &	&	BIAS	&	STD &	&	BIAS	&	STD &	&	BIAS	&	STD \\ 
  \midrule
Far & 100 & $\pi_1=0.3$ &  & 0.001 & 0.045 &  & 0.001 & 0.045 &  & 0.001 & 0.045 &  & 0.006 & 0.085 &  & 0.127 & 0.276 \\ 
&  & $\mu_{11}=0$ &  & -0.002 & 0.184 &  & -0.002 & 0.185 &  & -0.001 & 0.184 &  & 0.087 & 0.352 &  & -0.001 & 0.218 \\ 
&  & $\mu_{21}=-3$ &  & 0.002 & 0.189 &  & 0.002 & 0.189 &  & 0.003 & 0.188 &  & -0.125 & 0.313 &  & 0.021 & 0.421 \\ 
&  & $\mu_{12}=0$ &  & 0.004 & 0.114 &  & 0.003 & 0.115 &  & 0.004 & 0.114 &  & 0.036 & 0.163 &  & 0.016 & 0.173 \\ 
&  & $\mu_{22}=3$ &  & 0.005 & 0.119 &  & 0.004 & 0.119 &  & 0.004 & 0.119 &  & 0.047 & 0.148 &  & 0.022 & 0.203 \\[2mm] 
& 200 & $\pi_1=0.3$ &  & -0.001 & 0.033 &  & -0.001 & 0.033 &  & -0.001 & 0.033 &  & 0.007 & 0.074 &  & 0.086 & 0.317 \\ 
&  & $\mu_{11}=0$ &  & -0.009 & 0.128 &  & -0.008 & 0.129 &  & -0.009 & 0.128 &  & 0.096 & 0.270 &  & -0.009 & 0.138 \\ 
&  & $\mu_{21}=-3$ &  & 0.008 & 0.130 &  & 0.007 & 0.131 &  & 0.008 & 0.130 &  & -0.136 & 0.223 &  & 0.013 & 0.229 \\ 
&  & $\mu_{12}=0$ &  & 0.001 & 0.089 &  & 0.000 & 0.089 &  & 0.000 & 0.089 &  & 0.043 & 0.133 &  & 0.000 & 0.103 \\ 
&  & $\mu_{22}=3$ &  & 0.001 & 0.089 &  & -0.000 & 0.089 &  & 0.000 & 0.089 &  & 0.052 & 0.113 &  & -0.000 & 0.094 \\[2mm]
& 500 & $\pi_1=0.3$ &  & 0.001 & 0.021 &  & 0.001 & 0.021 &  & 0.001 & 0.021 &  & 0.016 & 0.057 &  & 0.051 & 0.338 \\ 
&  & $\mu_{11}=0$ &  & -0.001 & 0.082 &  & -0.001 & 0.082 &  & -0.001 & 0.082 &  & 0.067 & 0.175 &  & -0.002 & 0.084 \\ 
&  & $\mu_{21}=-3$ &  & -0.002 & 0.085 &  & -0.001 & 0.086 &  & -0.001 & 0.085 &  & -0.101 & 0.140 &  & 0.001 & 0.087 \\ 
&  & $\mu_{12}=0$ &  & 0.000 & 0.055 &  & 0.000 & 0.055 &  & 0.000 & 0.055 &  & 0.039 & 0.096 &  & -0.001 & 0.055 \\ 
&  & $\mu_{22}=3$ &  & -0.003 & 0.053 &  & -0.003 & 0.053 &  & -0.003 & 0.053 &  & 0.037 & 0.074 &  & -0.004 & 0.053 \\[4mm]
Close & 100 & $\pi_1=0.3$ &  & 0.003 & 0.051 &  & 0.003 & 0.051 &  & 0.003 & 0.050 &  & 0.084 & 0.088 &  & 0.114 & 0.238 \\ 
&  & $\mu_{11}=0$ &  & 0.007 & 0.184 &  & 0.010 & 0.183 &  & 0.010 & 0.183 &  & 0.109 & 0.272 &  & -0.038 & 0.266 \\ 
&  & $\mu_{21}=-1$ &  & -0.003 & 0.214 &  & -0.005 & 0.211 &  & -0.006 & 0.209 &  & -0.226 & 0.301 &  & 0.297 & 0.925 \\ 
&  & $\mu_{12}=0$ &  & 0.001 & 0.122 &  & -0.002 & 0.121 &  & -0.000 & 0.121 &  & -0.002 & 0.181 &  & 0.065 & 0.345 \\ 
&  & $\mu_{22}=3$ &  & -0.003 & 0.134 &  & -0.008 & 0.132 &  & -0.004 & 0.131 &  & -0.031 & 0.165 &  & 0.047 & 0.342 \\[2mm]
& 200 & $\pi_1=0.3$ &  & 0.003 & 0.036 &  & 0.003 & 0.036 &  & 0.003 & 0.036 &  & 0.071 & 0.074 &  & 0.150 & 0.288 \\ 
&  & $\mu_{11}=0$ &  & 0.001 & 0.149 &  & 0.004 & 0.147 &  & 0.002 & 0.148 &  & 0.073 & 0.199 &  & -0.012 & 0.159 \\ 
&  & $\mu_{21}=-1$ &  & 0.007 & 0.161 &  & 0.004 & 0.157 &  & 0.005 & 0.159 &  & -0.173 & 0.234 &  & 0.041 & 0.266 \\ 
&  & $\mu_{12}=0$ &  & 0.007 & 0.087 &  & 0.004 & 0.087 &  & 0.006 & 0.087 &  & 0.015 & 0.141 &  & 0.002 & 0.114 \\ 
&  & $\mu_{22}=3$ &  & 0.005 & 0.096 &  & 0.001 & 0.095 &  & 0.004 & 0.095 &  & -0.016 & 0.120 &  & -0.006 & 0.104 \\[2mm]
& 500 & $\pi_1=0.3$ &  & 0.001 & 0.022 &  & 0.001 & 0.022 &  & 0.001 & 0.022 &  & 0.056 & 0.055 &  & 0.135 & 0.323 \\ 
&  & $\mu_{11}=0$ &  & 0.002 & 0.093 &  & 0.003 & 0.092 &  & 0.002 & 0.092 &  & 0.055 & 0.124 &  & -0.007 & 0.094 \\ 
&  & $\mu_{21}=-1$ &  & 0.004 & 0.103 &  & 0.004 & 0.102 &  & 0.004 & 0.103 &  & -0.102 & 0.154 &  & 0.019 & 0.106 \\ 
&  & $\mu_{12}=0$ &  & 0.002 & 0.056 &  & 0.001 & 0.056 &  & 0.002 & 0.056 &  & 0.028 & 0.109 &  & -0.001 & 0.056 \\ 
&  & $\mu_{22}=3$ &  & 0.002 & 0.063 &  & -0.001 & 0.063 &  & 0.002 & 0.063 &  & -0.012 & 0.083 &  & -0.003 & 0.062 \\ 
   \bottomrule
\end{tabular}
}
\end{table}

\tablename~\ref{tab:TABb} and \ref{tab:TABc} report the results under scenarios~\ref{item:m2} and \ref{item:m3}, respectively.
Here, the robust approaches ($t$M, CNM, NUM, and NCM) are better than the traditional NM.
As expected, the best performer is the $t$M under scenario~\ref{item:m2} and the CNM under scenario~\ref{item:m3}.
These models are the best two under these scenarios; their comparable behavior agrees with the simulation results of \citet{Litt:Robu:1988} about the single $t$ and the contaminated normal distributions.
NUMs and NCMs work slightly better than NMs but far worse from $t$Ms and CNMs (see, e.g., the STD values, in the case $n=100$, in \tablename~\ref{tab:TABc}).  
\begin{table}[!ht]
\caption{
Scenario~\ref{item:m2}: Simulation results on $1,000$ replications.
\label{tab:TABb}
}
\centering
\resizebox{\textwidth}{!}{
\begin{tabular}{lcll rr l rr l rr l rr l rr}
  \toprule
 & & & & \multicolumn{2}{c}{NM} & & \multicolumn{2}{c}{$t$M} & & \multicolumn{2}{c}{CNM} & & \multicolumn{2}{c}{NUM} & & \multicolumn{2}{c}{NCM} \\
 & $n$ & & &	BIAS	&	STD &	&	BIAS	&	STD &	&	BIAS	&	STD &	&	BIAS	&	STD &	&	BIAS	&	STD \\ 
  \midrule
Far & 100 & $\pi_1=0.3$ &  & -0.015 & 0.063 &  & -0.020 & 0.053 &  & -0.008 & 0.052 &  & 0.005 & 0.069 &  & 0.089 & 0.238 \\ 
&  & $\mu_{11}=0$ &  & -0.221 & 0.298 &  & -0.072 & 0.283 &  & -0.157 & 0.301 &  & 0.043 & 0.341 &  & 0.118 & 0.319 \\ 
&  & $\mu_{21}=-3$ &  & 0.084 & 0.454 &  & -0.026 & 0.278 &  & 0.057 & 0.327 &  & -0.069 & 0.381 &  & -0.075 & 0.425 \\ 
&  & $\mu_{12}=0$ &  & -0.385 & 0.641 &  & -0.140 & 0.163 &  & -0.162 & 0.162 &  & -0.064 & 0.201 &  & -0.070 & 0.427 \\ 
&  & $\mu_{22}=3$ &  & -0.096 & 0.645 &  & -0.093 & 0.153 &  & -0.093 & 0.147 &  & -0.051 & 0.171 &  & -0.081 & 0.402 \\[2mm] 
 & 200 & $\pi_1=0.3$ &  & -0.016 & 0.048 &  & -0.018 & 0.038 &  & -0.010 & 0.037 &  & -0.002 & 0.044 &  & 0.101 & 0.275 \\ 
&  & $\mu_{11}=0$ &  & -0.179 & 0.223 &  & -0.058 & 0.196 &  & -0.121 & 0.243 &  & -0.098 & 0.257 &  & 0.021 & 0.240 \\ 
&  & $\mu_{21}=-3$ &  & 0.043 & 0.330 &  & -0.031 & 0.205 &  & 0.036 & 0.285 &  & 0.028 & 0.324 &  & -0.043 & 0.258 \\ 
&  & $\mu_{12}=0$ &  & -0.357 & 0.312 &  & -0.125 & 0.112 &  & -0.148 & 0.115 &  & -0.138 & 0.144 &  & -0.101 & 0.199 \\ 
&  & $\mu_{22}=3$ &  & -0.116 & 0.116 &  & -0.084 & 0.105 &  & -0.092 & 0.104 &  & -0.077 & 0.117 &  & -0.082 & 0.160 \\[2mm]
 & 500 & $\pi_1=0.3$ &  & -0.020 & 0.023 &  & -0.019 & 0.025 &  & -0.019 & 0.024 &  & -0.004 & 0.031 &  & 0.059 & 0.308 \\ 
&  & $\mu_{11}=0$ &  & -0.152 & 0.159 &  & -0.040 & 0.118 &  & -0.079 & 0.148 &  & -0.209 & 0.218 &  & -0.060 & 0.187 \\ 
&  & $\mu_{21}=-3$ &  & 0.009 & 0.201 &  & -0.046 & 0.122 &  & -0.025 & 0.186 &  & 0.131 & 0.324 &  & 0.002 & 0.263 \\ 
&  & $\mu_{12}=0$ &  & -0.326 & 0.074 &  & -0.108 & 0.069 &  & -0.139 & 0.072 &  & -0.171 & 0.103 &  & -0.145 & 0.382 \\ 
&  & $\mu_{22}=3$ &  & -0.116 & 0.070 &  & -0.075 & 0.066 &  & -0.095 & 0.067 &  & -0.084 & 0.081 &  & -0.083 & 0.338 \\[4mm]
Close & 100 & $\pi_1=0.3$ &  & 0.084 & 0.154 &  & 0.006 & 0.093 &  & 0.036 & 0.084 &  & 0.060 & 0.092 &  & 0.106 & 0.225 \\ 
&  & $\mu_{11}=0$ &  & -0.668 & 0.508 &  & -0.129 & 0.475 &  & -0.287 & 0.424 &  & -0.141 & 0.409 &  & 0.089 & 0.424 \\ 
&  & $\mu_{21}=-1$ &  & 0.749 & 0.899 &  & 0.058 & 0.527 &  & 0.218 & 0.485 &  & 0.124 & 0.563 &  & 0.070 & 0.763 \\ 
&  & $\mu_{12}=0$ &  & -0.413 & 1.015 &  & -0.137 & 0.226 &  & -0.095 & 0.222 &  & 0.022 & 0.232 &  & -0.053 & 0.514 \\ 
&  & $\mu_{22}=3$ &  & 0.103 & 0.658 &  & -0.068 & 0.225 &  & -0.021 & 0.208 &  & 0.007 & 0.227 &  & -0.084 & 0.586 \\[2mm]
 & 200 & $\pi_1=0.3$ &  & 0.088 & 0.168 &  & 0.016 & 0.090 &  & 0.052 & 0.076 &  & 0.051 & 0.082 &  & 0.072 & 0.272 \\ 
&  & $\mu_{11}=0$ &  & -0.677 & 0.450 &  & -0.189 & 0.429 &  & -0.364 & 0.374 &  & -0.282 & 0.371 &  & -0.100 & 0.377 \\ 
&  & $\mu_{21}=-1$ &  & 0.783 & 0.888 &  & 0.151 & 0.528 &  & 0.362 & 0.469 &  & 0.291 & 0.532 &  & 0.225 & 0.697 \\ 
&  & $\mu_{12}=0$ &  & -0.535 & 1.564 &  & -0.102 & 0.190 &  & -0.032 & 0.185 &  & -0.023 & 0.250 &  & -0.118 & 0.730 \\ 
&  & $\mu_{22}=3$ &  & 0.180 & 1.272 &  & -0.049 & 0.185 &  & 0.016 & 0.168 &  & 0.009 & 0.195 &  & -0.047 & 0.509 \\[2mm]
 & 500 & $\pi_1=0.3$ &  & 0.052 & 0.165 &  & 0.062 & 0.088 &  & 0.094 & 0.056 &  & 0.067 & 0.074 &  & 0.064 & 0.292 \\ 
&  & $\mu_{11}=0$ &  & -0.515 & 0.507 &  & -0.384 & 0.374 &  & -0.539 & 0.253 &  & -0.417 & 0.342 &  & -0.279 & 0.356 \\ 
&  & $\mu_{21}=-1$ &  & 0.579 & 0.895 &  & 0.457 & 0.514 &  & 0.650 & 0.341 &  & 0.485 & 0.500 &  & 0.357 & 0.516 \\ 
&  & $\mu_{12}=0$ &  & -0.517 & 1.367 &  & -0.017 & 0.180 &  & 0.062 & 0.121 &  & 0.003 & 0.186 &  & -0.038 & 0.426 \\ 
&  & $\mu_{22}=3$ &  & 0.049 & 1.038 &  & 0.035 & 0.166 &  & 0.106 & 0.110 &  & 0.055 & 0.163 &  & 0.018 & 0.290 \\ 
   \bottomrule
\end{tabular}
}
\end{table}

\begin{table}[!ht]
\caption{
Scenario~\ref{item:m3}: Simulation results on $1,000$ replications.
\label{tab:TABc}
}
\centering
\resizebox{\textwidth}{!}{
\begin{tabular}{lcll rr l rr l rr l rr l rr}
  \toprule
 & & & & \multicolumn{2}{c}{NM} & & \multicolumn{2}{c}{$t$M} & & \multicolumn{2}{c}{CNM} & & \multicolumn{2}{c}{NUM} & & \multicolumn{2}{c}{NCM} \\
 & $n$ & & &	BIAS	&	STD &	&	BIAS	&	STD &	&	BIAS	&	STD &	&	BIAS	&	STD &	&	BIAS	&	STD \\ 
  \midrule
Far & 100 & $\pi_1=0.3$ &  & 0.097 & 0.199 &  & 0.010 & 0.055 &  & 0.007 & 0.055 &  & 0.098 & 0.073 &  & 0.023 & 0.227 \\ 
&  & $\mu_{11}=0$ &  & -0.190 & 0.559 &  & 0.012 & 0.229 &  & -0.001 & 0.224 &  & -0.076 & 0.396 &  & -0.013 & 0.426 \\ 
&  & $\mu_{21}=-3$ &  & 0.591 & 1.568 &  & -0.035 & 0.234 &  & -0.010 & 0.236 &  & 0.621 & 0.914 &  & 0.625 & 1.541 \\ 
&  & $\mu_{12}=0$ &  & 0.187 & 1.746 &  & -0.001 & 0.160 &  & 0.001 & 0.154 &  & -0.000 & 0.217 &  & 0.119 & 2.291 \\ 
&  & $\mu_{22}=3$ &  & 0.220 & 1.747 &  & 0.005 & 0.163 &  & 0.002 & 0.153 &  & -0.011 & 0.258 &  & 0.186 & 2.012 \\[2mm] 
& 200 & $\pi_1=0.3$ &  & 0.083 & 0.179 &  & 0.005 & 0.038 &  & 0.001 & 0.038 &  & 0.086 & 0.076 &  & 0.039 & 0.231 \\ 
&  & $\mu_{11}=0$ &  & -0.108 & 0.391 &  & 0.013 & 0.159 &  & -0.001 & 0.154 &  & -0.061 & 0.308 &  & -0.024 & 0.179 \\ 
&  & $\mu_{21}=-3$ &  & 0.641 & 1.468 &  & -0.021 & 0.152 &  & 0.003 & 0.149 &  & 0.571 & 0.814 &  & 0.273 & 0.959 \\ 
&  & $\mu_{12}=0$ &  & 0.085 & 1.274 &  & 0.002 & 0.107 &  & 0.005 & 0.101 &  & 0.003 & 0.154 &  & 0.008 & 1.439 \\ 
&  & $\mu_{22}=3$ &  & 0.073 & 1.001 &  & 0.005 & 0.107 &  & 0.003 & 0.102 &  & -0.021 & 0.236 &  & 0.134 & 1.189 \\[2mm]
& 500 & $\pi_1=0.3$ &  & 0.067 & 0.138 &  & 0.006 & 0.025 &  & 0.001 & 0.023 &  & 0.082 & 0.073 &  & 0.071 & 0.230 \\ 
&  & $\mu_{11}=0$ &  & -0.033 & 0.273 &  & 0.020 & 0.097 &  & 0.009 & 0.093 &  & -0.020 & 0.211 &  & -0.019 & 0.096 \\ 
&  & $\mu_{21}=-3$ &  & 0.643 & 1.343 &  & -0.026 & 0.095 &  & -0.005 & 0.093 &  & 0.499 & 0.789 &  & 0.063 & 0.398 \\ 
&  & $\mu_{12}=0$ &  & 0.026 & 0.313 &  & 0.000 & 0.070 &  & 0.000 & 0.064 &  & 0.002 & 0.101 &  & -0.008 & 0.567 \\ 
&  & $\mu_{22}=3$ &  & -0.010 & 0.361 &  & 0.009 & 0.067 &  & 0.005 & 0.061 &  & -0.018 & 0.185 &  & 0.040 & 0.463 \\[4mm]
Close & 100 & $\pi_1=0.3$ &  & 0.396 & 0.292 &  & 0.010 & 0.077 &  & 0.013 & 0.069 &  & 0.158 & 0.202 &  & 0.005 & 0.231 \\ 
&  & $\mu_{11}=0$ &  & -0.108 & 0.650 &  & 0.068 & 0.275 &  & 0.017 & 0.266 &  & -0.022 & 0.531 &  & -0.011 & 0.289 \\ 
&  & $\mu_{21}=-1$ &  & 2.323 & 0.945 &  & -0.052 & 0.338 &  & -0.001 & 0.330 &  & 0.969 & 1.063 &  & 0.660 & 1.165 \\ 
&  & $\mu_{12}=0$ &  & -0.033 & 2.966 &  & -0.007 & 0.175 &  & 0.003 & 0.161 &  & -0.066 & 0.793 &  & 0.072 & 2.420 \\ 
&  & $\mu_{22}=3$ &  & -0.349 & 2.672 &  & -0.013 & 0.190 &  & 0.006 & 0.189 &  & -0.298 & 0.904 &  & 0.159 & 2.303 \\[2mm]
& 200 & $\pi_1=0.3$ &  & 0.348 & 0.301 &  & -0.002 & 0.047 &  & 0.006 & 0.043 &  & 0.126 & 0.198 &  & 0.033 & 0.230 \\ 
&  & $\mu_{11}=0$ &  & -0.055 & 0.406 &  & 0.063 & 0.170 &  & 0.000 & 0.173 &  & -0.028 & 0.428 &  & -0.032 & 0.195 \\ 
&  & $\mu_{21}=-1$ &  & 2.445 & 0.597 &  & -0.067 & 0.187 &  & 0.005 & 0.198 &  & 0.960 & 1.000 &  & 0.270 & 0.768 \\ 
&  & $\mu_{12}=0$ &  & -0.020 & 2.093 &  & 0.000 & 0.115 &  & 0.011 & 0.114 &  & 0.014 & 0.431 &  & 0.051 & 1.473 \\ 
&  & $\mu_{22}=3$ &  & -0.663 & 1.580 &  & -0.020 & 0.125 &  & 0.004 & 0.124 &  & -0.335 & 0.617 &  & 0.090 & 1.346 \\[2mm]
& 500 & $\pi_1=0.3$ &  & 0.363 & 0.300 &  & -0.004 & 0.030 &  & 0.001 & 0.027 &  & 0.119 & 0.203 &  & 0.039 & 0.228 \\ 
&  & $\mu_{11}=0$ &  & -0.031 & 0.225 &  & 0.056 & 0.108 &  & -0.005 & 0.105 &  & 0.012 & 0.370 &  & -0.027 & 0.123 \\ 
&  & $\mu_{21}=-1$ &  & 2.813 & 0.281 &  & -0.058 & 0.117 &  & 0.008 & 0.115 &  & 1.060 & 0.992 &  & 0.094 & 0.441 \\ 
&  & $\mu_{12}=0$ &  & 0.109 & 0.816 &  & -0.007 & 0.071 &  & 0.002 & 0.067 &  & 0.003 & 0.348 &  & 0.048 & 0.609 \\ 
&  & $\mu_{22}=3$ &  & -0.811 & 0.631 &  & -0.020 & 0.077 &  & 0.003 & 0.071 &  & -0.367 & 0.587 &  & 0.062 & 0.629 \\ 
   \bottomrule
\end{tabular}
}
\end{table}

\tablename~\ref{tab:TABd} reports the results under scenario~\ref{item:m4}, that is when there is the 1\% of bad points with a specific location in the space.
By focusing on the STD values, $t$Ms and CNMs perform comparably, with a slightly better performance for CNMs when the sample size is small ($n=100$).
Surprisingly, NUM and NCM perform worse than NM (refer, e.g., to the case $n=100$).
\begin{table}[!ht]
\caption{
Scenario~\ref{item:m4}: Simulation results on $1,000$ replications.
\label{tab:TABd}
}
\centering
\resizebox{\textwidth}{!}{
\begin{tabular}{lcll rr l rr l rr l rr l rr}
  \toprule
 & & & & \multicolumn{2}{c}{NM} & & \multicolumn{2}{c}{$t$M} & & \multicolumn{2}{c}{CNM} & & \multicolumn{2}{c}{NUM} & & \multicolumn{2}{c}{NCM} \\
 & $n$ & & &	BIAS	&	STD &	&	BIAS	&	STD &	&	BIAS	&	STD &	&	BIAS	&	STD &	&	BIAS	&	STD \\ 
  \midrule
Far & 100 & $\pi_1=0.3$ &  & -0.024 & 0.068 &  & -0.011 & 0.062 &  & -0.004 & 0.059 &  & 0.051 & 0.091 &  & 0.139 & 0.298 \\ 
&  & $\mu_{11}=0$ &  & 0.062 & 0.270 &  & 0.023 & 0.256 &  & 0.006 & 0.236 &  & 0.063 & 0.348 &  & 0.010 & 0.290 \\ 
&  & $\mu_{21}=-3$ &  & -0.085 & 0.300 &  & -0.014 & 0.263 &  & 0.009 & 0.247 &  & -0.003 & 0.566 &  & 0.070 & 0.669 \\ 
&  & $\mu_{12}=0$ &  & -0.015 & 0.167 &  & -0.000 & 0.170 &  & 0.005 & 0.157 &  & 0.031 & 0.197 &  & 0.112 & 0.459 \\ 
&  & $\mu_{22}=3$ &  & 0.146 & 0.302 &  & 0.004 & 0.173 &  & 0.006 & 0.158 &  & 0.039 & 0.187 &  & 0.245 & 1.051 \\[2mm] 
& 200 & $\pi_1=0.3$ &  & -0.016 & 0.038 &  & -0.009 & 0.037 &  & -0.005 & 0.037 &  & 0.052 & 0.057 &  & 0.115 & 0.321 \\ 
&  & $\mu_{11}=0$ &  & 0.034 & 0.159 &  & 0.015 & 0.153 &  & -0.004 & 0.152 &  & 0.049 & 0.208 &  & -0.000 & 0.217 \\ 
&  & $\mu_{21}=-3$ &  & -0.066 & 0.149 &  & -0.023 & 0.150 &  & 0.005 & 0.152 &  & -0.042 & 0.346 &  & 0.169 & 0.954 \\ 
&  & $\mu_{12}=0$ &  & -0.011 & 0.101 &  & 0.000 & 0.105 &  & 0.004 & 0.100 &  & 0.035 & 0.121 &  & 0.113 & 0.465 \\ 
&  & $\mu_{22}=3$ &  & 0.124 & 0.104 &  & 0.008 & 0.101 &  & 0.010 & 0.098 &  & 0.031 & 0.113 &  & 0.151 & 0.801 \\[2mm]
& 500 & $\pi_1=0.3$ &  & -0.013 & 0.022 &  & -0.008 & 0.022 &  & -0.005 & 0.022 &  & 0.046 & 0.034 &  & 0.128 & 0.324 \\ 
&  & $\mu_{11}=0$ &  & 0.030 & 0.088 &  & 0.019 & 0.086 &  & 0.001 & 0.086 &  & 0.034 & 0.114 &  & -0.002 & 0.086 \\ 
&  & $\mu_{21}=-3$ &  & -0.057 & 0.092 &  & -0.028 & 0.091 &  & -0.002 & 0.092 &  & -0.035 & 0.225 &  & 0.027 & 0.378 \\ 
&  & $\mu_{12}=0$ &  & -0.011 & 0.058 &  & -0.001 & 0.061 &  & 0.001 & 0.058 &  & 0.024 & 0.073 &  & 0.012 & 0.167 \\ 
&  & $\mu_{22}=3$ &  & 0.102 & 0.063 &  & 0.002 & 0.061 &  & 0.003 & 0.059 &  & 0.019 & 0.075 &  & 0.010 & 0.145 \\[4mm]
Close & 100 & $\pi_1=0.3$ &  & -0.116 & 0.105 &  & -0.019 & 0.079 &  & -0.002 & 0.071 &  & 0.119 & 0.093 &  & 0.112 & 0.270 \\ 
&  & $\mu_{11}=0$ &  & 0.336 & 0.445 &  & 0.058 & 0.321 &  & 0.019 & 0.262 &  & 0.137 & 0.345 &  & 0.002 & 0.360 \\ 
&  & $\mu_{21}=-1$ &  & -0.260 & 0.721 &  & -0.025 & 0.371 &  & -0.003 & 0.285 &  & 0.155 & 0.843 &  & 0.261 & 0.957 \\ 
&  & $\mu_{12}=0$ &  & -0.056 & 0.218 &  & -0.005 & 0.195 &  & 0.002 & 0.169 &  & -0.002 & 0.236 &  & 0.120 & 0.496 \\ 
&  & $\mu_{22}=3$ &  & -0.247 & 0.524 &  & -0.026 & 0.209 &  & -0.003 & 0.182 &  & -0.063 & 0.280 &  & 0.110 & 0.920 \\[2mm]
& 200 & $\pi_1=0.3$ &  & -0.118 & 0.029 &  & -0.021 & 0.047 &  & -0.004 & 0.046 &  & 0.082 & 0.083 &  & 0.077 & 0.326 \\ 
&  & $\mu_{11}=0$ &  & 0.329 & 0.251 &  & 0.041 & 0.205 &  & -0.007 & 0.191 &  & 0.065 & 0.254 &  & -0.011 & 0.314 \\ 
&  & $\mu_{21}=-1$ &  & -0.407 & 0.284 &  & -0.046 & 0.239 &  & 0.003 & 0.217 &  & 0.225 & 0.732 &  & 0.259 & 0.901 \\ 
&  & $\mu_{12}=0$ &  & -0.076 & 0.128 &  & -0.006 & 0.124 &  & 0.006 & 0.119 &  & 0.008 & 0.167 &  & 0.146 & 0.612 \\ 
&  & $\mu_{22}=3$ &  & -0.260 & 0.289 &  & -0.024 & 0.132 &  & 0.003 & 0.126 &  & -0.048 & 0.214 &  & 0.069 & 1.116 \\[2mm]
& 500 & $\pi_1=0.3$ &  & -0.103 & 0.034 &  & -0.020 & 0.026 &  & -0.006 & 0.027 &  & 0.045 & 0.069 &  & 0.065 & 0.334 \\ 
&  & $\mu_{11}=0$ &  & 0.322 & 0.040 &  & 0.042 & 0.110 &  & -0.005 & 0.111 &  & 0.048 & 0.177 &  & -0.006 & 0.121 \\ 
&  & $\mu_{21}=-1$ &  & -0.403 & 0.254 &  & -0.046 & 0.129 &  & 0.004 & 0.128 &  & 0.162 & 0.571 &  & 0.072 & 0.426 \\ 
&  & $\mu_{12}=0$ &  & -0.090 & 0.058 &  & -0.013 & 0.068 &  & -0.002 & 0.067 &  & -0.010 & 0.110 &  & 0.009 & 0.317 \\ 
&  & $\mu_{22}=3$ &  & -0.230 & 0.108 &  & -0.023 & 0.071 &  & -0.000 & 0.071 &  & -0.047 & 0.154 &  & -0.009 & 0.373 \\ 
   \bottomrule
\end{tabular}
}
\end{table}

Finally, \tablename~\ref{tab:TABe} reports the results under scenario~\ref{item:m5}, that is when there is the 5\% of bad points on the background of the bulk of the data.
For the way the outliers are added, the NCM should be the best performer; instead, the best performance is for the CNM, regardless of both the overlap and the sample size.
\begin{table}[!ht]
\caption{
Scenario~\ref{item:m5}: Simulation results on $1,000$ replications.
\label{tab:TABe}
}
\centering
\resizebox{\textwidth}{!}{
\begin{tabular}{lcll rr l rr l rr l rr l rr}
  \toprule
 & & & & \multicolumn{2}{c}{NM} & & \multicolumn{2}{c}{$t$M} & & \multicolumn{2}{c}{CNM} & & \multicolumn{2}{c}{NUM} & & \multicolumn{2}{c}{NCM} \\
 & $n$ & & &	BIAS	&	STD &	&	BIAS	&	STD &	&	BIAS	&	STD &	&	BIAS	&	STD &	&	BIAS	&	STD \\ 
  \midrule
Far & 100 & $\pi_1=0.3$ &  & 0.030 & 0.095 &  & 0.026 & 0.089 &  & 0.014 & 0.059 &  & 0.019 & 0.075 &  & 0.103 & 0.303 \\ 
&  & $\mu_{11}=0$ &  & -0.014 & 0.419 &  & 0.004 & 0.247 &  & 0.005 & 0.246 &  & 0.009 & 0.365 &  & 0.026 & 0.455 \\ 
&  & $\mu_{21}=-3$ &  & -0.029 & 0.621 &  & 0.043 & 0.516 &  & -0.003 & 0.235 &  & 0.078 & 0.548 &  & 0.191 & 1.007 \\ 
&  & $\mu_{12}=0$ &  & -0.070 & 0.941 &  & -0.012 & 0.870 &  & 0.004 & 0.150 &  & 0.021 & 0.175 &  & 0.032 & 1.637 \\ 
&  & $\mu_{22}=3$ &  & 0.083 & 0.563 &  & -0.020 & 0.344 &  & -0.008 & 0.165 &  & 0.023 & 0.168 &  & -0.064 & 1.220 \\[2mm] 
& 200 & $\pi_1=0.3$ &  & 0.034 & 0.086 &  & 0.029 & 0.084 &  & 0.012 & 0.036 &  & 0.010 & 0.059 &  & 0.086 & 0.306 \\ 
&  & $\mu_{11}=0$ &  & -0.058 & 0.260 &  & -0.004 & 0.151 &  & -0.005 & 0.141 &  & -0.011 & 0.282 &  & -0.010 & 0.218 \\ 
&  & $\mu_{21}=-3$ &  & 0.108 & 0.619 &  & 0.058 & 0.555 &  & -0.007 & 0.149 &  & 0.137 & 0.497 &  & 0.144 & 0.838 \\ 
&  & $\mu_{12}=0$ &  & -0.059 & 0.856 &  & -0.073 & 0.911 &  & 0.005 & 0.097 &  & 0.029 & 0.113 &  & -0.044 & 1.197 \\ 
&  & $\mu_{22}=3$ &  & 0.076 & 0.363 &  & -0.009 & 0.347 &  & 0.001 & 0.094 &  & 0.033 & 0.110 &  & 0.003 & 0.838 \\[2mm]
& 500 & $\pi_1=0.3$ &  & 0.045 & 0.079 &  & 0.028 & 0.057 &  & 0.010 & 0.024 &  & 0.008 & 0.057 &  & 0.096 & 0.305 \\ 
&  & $\mu_{11}=0$ &  & -0.040 & 0.172 &  & 0.004 & 0.092 &  & 0.008 & 0.086 &  & -0.005 & 0.239 &  & 0.002 & 0.085 \\ 
&  & $\mu_{21}=-3$ &  & 0.353 & 0.648 &  & 0.035 & 0.365 &  & -0.004 & 0.089 &  & 0.250 & 0.444 &  & 0.025 & 0.334 \\ 
&  & $\mu_{12}=0$ &  & -0.021 & 0.882 &  & -0.021 & 0.592 &  & 0.001 & 0.055 &  & 0.034 & 0.070 &  & -0.037 & 0.525 \\ 
&  & $\mu_{22}=3$ &  & 0.045 & 0.449 &  & -0.001 & 0.188 &  & 0.001 & 0.054 &  & 0.046 & 0.069 &  & -0.012 & 0.441 \\[4mm]
Close & 100 & $\pi_1=0.3$ &  & 0.141 & 0.241 &  & 0.051 & 0.120 &  & 0.032 & 0.107 &  & 0.094 & 0.116 &  & 0.033 & 0.304 \\ 
&  & $\mu_{11}=0$ &  & -0.015 & 0.576 &  & 0.026 & 0.314 &  & 0.019 & 0.361 &  & -0.047 & 0.423 &  & 0.031 & 0.390 \\ 
&  & $\mu_{21}=-1$ &  & 0.577 & 1.212 &  & 0.100 & 0.582 &  & 0.047 & 0.561 &  & 0.660 & 0.794 &  & 0.530 & 1.199 \\ 
&  & $\mu_{12}=0$ &  & -0.055 & 1.658 &  & -0.031 & 1.079 &  & 0.008 & 0.522 &  & 0.085 & 0.334 &  & -0.002 & 2.072 \\ 
&  & $\mu_{22}=3$ &  & -0.462 & 1.919 &  & -0.076 & 0.796 &  & -0.089 & 0.703 &  & 0.013 & 0.458 &  & -0.338 & 2.204 \\[2mm]
& 200 & $\pi_1=0.3$ &  & 0.111 & 0.172 &  & 0.040 & 0.092 &  & 0.016 & 0.044 &  & 0.102 & 0.090 &  & 0.055 & 0.308 \\ 
&  & $\mu_{11}=0$ &  & -0.046 & 0.374 &  & 0.007 & 0.170 &  & 0.009 & 0.187 &  & -0.112 & 0.247 &  & -0.007 & 0.166 \\ 
&  & $\mu_{21}=-1$ &  & 0.669 & 0.771 &  & 0.074 & 0.416 &  & 0.007 & 0.193 &  & 0.736 & 0.607 &  & 0.279 & 0.839 \\ 
&  & $\mu_{12}=0$ &  & -0.112 & 1.315 &  & 0.022 & 0.908 &  & 0.005 & 0.104 &  & 0.088 & 0.216 &  & -0.100 & 1.777 \\ 
&  & $\mu_{22}=3$ &  & -0.162 & 1.111 &  & -0.044 & 0.596 &  & -0.007 & 0.133 &  & 0.078 & 0.388 &  & -0.195 & 1.667 \\[2mm]
& 500 & $\pi_1=0.3$ &  & 0.117 & 0.093 &  & 0.047 & 0.091 &  & 0.015 & 0.025 &  & 0.125 & 0.071 &  & 0.089 & 0.306 \\ 
&  & $\mu_{11}=0$ &  & -0.045 & 0.202 &  & -0.008 & 0.101 &  & -0.001 & 0.098 &  & -0.161 & 0.154 &  & -0.012 & 0.097 \\ 
&  & $\mu_{21}=-1$ &  & 0.800 & 0.400 &  & 0.083 & 0.411 &  & -0.002 & 0.107 &  & 0.884 & 0.488 &  & 0.110 & 0.528 \\ 
&  & $\mu_{12}=0$ &  & -0.056 & 0.891 &  & -0.106 & 0.980 &  & -0.000 & 0.060 &  & 0.115 & 0.100 &  & -0.029 & 1.105 \\ 
&  & $\mu_{22}=3$ &  & -0.011 & 0.589 &  & -0.014 & 0.532 &  & -0.005 & 0.067 &  & 0.146 & 0.137 &  & -0.152 & 1.367 \\ 
   \bottomrule
\end{tabular}
}
\end{table}



\subsection{Classification performance}

\tablename~\ref{tab:classification} summarizes the obtained average misclassification rates.
\begin{table}[!ht]
\caption{
Average misclassification rates.
Values refer to averages across 1,000 replications.
\label{tab:classification}
}
\centering
\resizebox{\textwidth}{!}{
\begin{tabular}{cr c cc c cc c cc c cc c cc}
\toprule
	&	&	&	\multicolumn{2}{c}{NM}	&&	\multicolumn{2}{c}{$t$M}	&&	\multicolumn{2}{c}{CNM} &&	\multicolumn{2}{c}{NUM}	&&	\multicolumn{2}{c}{NCM}	\\
		\cline{4-5}\cline{7-8}\cline{10-11}\cline{13-14}\cline{16-17}
	&	\multicolumn{1}{c}{$n$}	 &	&	Far	&	Close	&	&	Far	&	Close	&	&	Far	&	Close	&	&	Far	&	Close	&	&	Far	&	Close	\\
\midrule
Scenario~\ref{item:m1}	&	100	&	&	0.002	&	0.025	&	&	0.002	&	0.025	&	&	0.002	&	0.025	&	&	0.086	&	0.078	&	&	0.002	&	0.038	\\
	&	200	&	&	0.002	&	0.023	&	&	0.002	&	0.023	&	&	0.002	&	0.023	&	&	0.045	&	0.054	&	&	0.001	&	0.023	\\
	&	500	&	&	0.001	&	0.021	&	&	0.001	&	0.021	&	&	0.001	&	0.021	&	&	0.018	&	0.037	&	&	0.001	&	0.021	\\
	&		&	&		&		&	&		&		&	&		&		&	&		&		&	&		&		\\
Scenario~\ref{item:m2}	&	100	&	&	0.018	&	0.071	&	&	0.021	&	0.068	&	&	0.020	&	0.067	&	&	0.043	&	0.089	&	&	0.008	&	0.059	\\
	&	200	&	&	0.021	&	0.075	&	&	0.023	&	0.071	&	&	0.023	&	0.072	&	&	0.027	&	0.085	&	&	0.008	&	0.072	\\
	&	500	&	&	0.022	&	0.080	&	&	0.024	&	0.084	&	&	0.024	&	0.085	&	&	0.024	&	0.099	&	&	0.012	&	0.083	\\
	&		&	&		&		&	&		&		&	&		&		&	&		&		&	&		&		\\
Scenario~\ref{item:m3}	&	100	&	&	0.091	&	0.240	&	&	0.036	&	0.069	&	&	0.035	&	0.067	&	&	0.102	&	0.195	&	&	0.059	&	0.118	\\
	&	200	&	&	0.090	&	0.273	&	&	0.034	&	0.061	&	&	0.033	&	0.060	&	&	0.099	&	0.205	&	&	0.036	&	0.073	\\
	&	500	&	&	0.086	&	0.293	&	&	0.033	&	0.058	&	&	0.030	&	0.054	&	&	0.095	&	0.213	&	&	0.022	&	0.051	\\
	&		&	&		&		&	&		&		&	&		&		&	&		&		&	&		&		\\
Scenario~\ref{item:m4}	&	100	&	&	0.005	&	0.040	&	&	0.002	&	0.026	&	&	0.002	&	0.026	&	&	0.057	&	0.070	&	&	0.004	&	0.043	\\
	&	200	&	&	0.005	&	0.055	&	&	0.002	&	0.025	&	&	0.002	&	0.023	&	&	0.021	&	0.050	&	&	0.008	&	0.048	\\
	&	500	&	&	0.004	&	0.061	&	&	0.002	&	0.024	&	&	0.001	&	0.022	&	&	0.011	&	0.039	&	&	0.003	&	0.028	\\
	&		&	&		&		&	&		&		&	&		&		&	&		&		&	&		&		\\
Scenario~\ref{item:m5}	&	100	&	&	0.006	&	0.077	&	&	0.006	&	0.033	&	&	0.003	&	0.036	&	&	0.024	&	0.073	&	&	0.023	&	0.090	\\
	&	200	&	&	0.007	&	0.066	&	&	0.005	&	0.028	&	&	0.002	&	0.025	&	&	0.025	&	0.076	&	&	0.012	&	0.051	\\
	&	500	&	&	0.011	&	0.050	&	&	0.004	&	0.028	&	&	0.001	&	0.022	&	&	0.037	&	0.080	&	&	0.003	&	0.031	\\
\bottomrule	
\end{tabular}
}
\end{table}
Misclassification rates are computed via the \texttt{classError()} function of the \texttt{mclust} package for {\sf R} \citep{Fral:Raft:Murp:Scru:mclu:2012}. 
Under scenarios~\ref{item:m4} and \ref{item:m5}, misclassification rates are computed only with respect to the true good observations; for the NCM only, under all of the considered scenarios, the computation of the misclassification rates is further restricted to the observations which are not assigned, via the MAP operator, to the noise component of the model.
Under scenario~\ref{item:m1}, in the far case, NM, $t$M, CNM, and NCM show similar misclassification rates, while misclassification rates from NUM are greater.
In the close case, NM, $t$M, CNM, and NCM provide analogous results when the sample size is 200 or 500, while NCM gives a slightly greater misclassification rate (0.038) when the sample size is 100.
Regardless of the considered sample size, NUM has the worst performance.  
Under scenario~\ref{item:m2}, in the far case, NM, $t$M, and CNM show similar misclassification rates.
As concerns the remaining models, NCM has the best performance while NUM the worst.
However, the best performance for NCM could be related to the fact that misclassification rates are computed only over the observations classified as good by the model; this means that ``problematic'' observations in terms of classification (i.e., observations having a similar probability to belong to the two clusters) could be removed from this computation because assigned to the noise component.
In the close case, NM, $t$M, CNM, and NCM provide analogous results when the sample size is 200 or 500, while NCM gives a slightly lower misclassification rate (0.059) when the sample size is 100.
Regardless of the considered sample size, NUM has the worst performance.  
Under scenario~\ref{item:m3}, regardless of both the overlap between clusters and the sample size, the lowest misclassification rates are obtained (apart from the case $n=500$) for CNM, followed by $t$M which provides similar results. NUM provides the worst results in the far case, while NM gives the worst misclassification rates in the close case. 
Under scenarios~\ref{item:m4} and \ref{item:m5}, CNM provides almost always the best results, followed by $t$M.
It is interesting to note how CNM works better than NCM under scenarios~\ref{item:m5}, which should be the best scenario for NCM.
Also in this case, NUM does not provide good results, especially for the far case if compared to the competing models. 



\subsection{Outlier detection}
\label{subsec:Outliers detection}

We now compare the performance of $t$Ms, CNMs, NUMs, and NCMs in detecting outliers.
While the MAP operator is adopted to detect outliers for CNMs (cf.~Section~\ref{subsec:Automatic detection of noise}), NUMs, and NCMs, for $t$Ms the \textit{a~posteriori} procedure illustrated by \citet[][p.~232]{McLa:Peel:fini:2000}, and summarized at the end of Section~\ref{subsec:Automatic detection of noise}, is considered (with the 95th percentile).

For the purpose of evaluation of the performance of the competing models in detecting outliers, we report the true positive rate (TPR), measuring the proportion of bad points that are correctly identified as bad points, and the false positive rate (FPR), corresponding to the proportion of good points incorrectly classified as bad points. 
\tablename~\ref{tab:TPR and FPR} reports these measures for scenarios~\ref{item:m4} and \ref{item:m5}. 
Under scenario~\ref{item:m4}, $t$Ms and CNMs show the highest (almost optimal) TPRs, but CNM gives lower (almost optimal) FPRs.
The remaining approaches are outperformed by the CNM both in terms of TPRs and FPRs.
Under scenario~\ref{item:m5}, $t$M gives the highest TPRs.
However, this is counterbalanced by higher FPRs.  
In other words, with the selected percentile, the detection rule for $t$M tends to declare more observations as outliers, but these detected outliers are sometimes not true outliers.
If the aim is to remove from the sample the detected outliers, the practical consequence of these results is that, if we use the detection rule from $t$Ms (with the classical percentile we considered), then we are induced to also remove some good observations with a consequent loss of information.
Apart from this consideration, the detection rule from $t$Ms needs the specification of a percentile and the simulation results we report show how this choice is not so obvious.
On the contrary, the detection rule for CNM provides almost optimal results in terms of FPRs, being their values always close to zero.
The fact that the TPRs do not approach at one is not necessarily an error: the way the outliers are inserted into the data makes possible that some of them will have values related to good points and, as such, these points will be detected as good points by our model.
Apart from this consideration, the detection rule from $t$-based models needs the specification of a percentile and the simulation results we report show how this choice is not so obvious.
With respect to the remaining approaches, regardless of the considered scenario, NCM works better than NUM but worse than $t$M and CNM.
\begin{table}[!ht]
\caption{
Values of TPRs and FPRs; they refers to rates across 1,000 replications.
\label{tab:TPR and FPR}
}
\centering
\begin{tabular}{ccc c cc c cc c cc c cc}
\toprule
	&		&		&	&	\multicolumn{2}{c}{$t$M} &	&	\multicolumn{2}{c}{CNM}	 &	&	\multicolumn{2}{c}{NUM}	&	&	\multicolumn{2}{c}{NCM}			\\
	\cline{5-6}\cline{8-9}\cline{11-12}\cline{14-15}	
	&	Overlap	&	\multicolumn{1}{c}{$n$}	&	&	TPR	&	FPR	&	&	TPR	&	FPR	&	&	TPR	&	FPR	&	&	TPR	&	FPR	\\
\midrule	
Scenario~\ref{item:m4}	&	Far	&	100	&	&	1.000	&	0.070	&	&	1.000	&	0.001	&	&	0.976	&	0.124	&	&	0.966	&	0.096	\\	
	&		&	200	&	&	1.000	&	0.077	&	&	1.000	&	0.001	&	&	0.977	&	0.042	&	&	0.995	&	0.049	\\	
	&		&	500	&	&	1.000	&	0.079	&	&	1.000	&	0.000	&	&	0.971	&	0.018	&	&	1.000	&	0.006	\\[1.5mm]
	&	Close	&	100	&	&	0.995	&	0.059	&	&	0.995	&	0.002	&	&	0.970	&	0.098	&	&	0.967	&	0.111	\\	
	&		&	200	&	&	1.000	&	0.068	&	&	1.000	&	0.002	&	&	0.971	&	0.030	&	&	0.991	&	0.038	\\	
	&		&	500	&	&	1.000	&	0.070	&	&	1.000	&	0.001	&	&	0.977	&	0.007	&	&	1.000	&	0.005	\\	
	&		&		&	&		&		&	&		&		&	&		&		&	&		&		\\	
Scenario~\ref{item:m5}	&	Far	&	100	&	&	0.912	&	0.077	&	&	0.829	&	0.010	&	&	0.695	&	0.045	&	&	0.826	&	0.048	\\	
	&		&	200	&	&	0.920	&	0.074	&	&	0.833	&	0.006	&	&	0.626	&	0.042	&	&	0.824	&	0.011	\\	
	&		&	500	&	&	0.923	&	0.069	&	&	0.839	&	0.002	&	&	0.589	&	0.054	&	&	0.834	&	0.002	\\[1.5mm]
	&	Close	&	100	&	&	0.908	&	0.068	&	&	0.804	&	0.012	&	&	0.694	&	0.030	&	&	0.805	&	0.027	\\	
	&		&	200	&	&	0.916	&	0.062	&	&	0.854	&	0.006	&	&	0.697	&	0.014	&	&	0.828	&	0.002	\\	
	&		&	500	&	&	0.920	&	0.055	&	&	0.859	&	0.002	&	&	0.694	&	0.007	&	&	0.847	&	0.001	\\	
\bottomrule	
\end{tabular}
\end{table}

\section{Data analyses}
\label{sec:Data analysis}

In this section, we will evaluate the performance of the 14 parsimonious CNM models on artificial and real data sets.
Particular attention will be devoted to the problem of detecting bad points.
A comparison with parsimonious families of NMs, $t$Ms, NUMs, and NCMs, will be also provided.
These families are implemented by functions and packages already discussed in Section~\ref{sec:Numerical studies}.
All the EM-based algorithms used to fit these models are initialized as explained in Section~\ref{sec:Numerical studies}.
As concerns the family of parsimonious NUMs and NCMs, based on the \textsf{R} functions used, only a subset of 10 of the 14 parsimonious structures in \tablename~\ref{tab:models} can be implemented; they are: EII, VII, EEI, VEI, EVI, VVI, EEE, EEV, VEV, and VVV.

\subsection{Artificial data with uniform noise}
\label{subsec:Overall uniform noise}

In this first analysis, a sample of $n=180$ simulated bivariate points is generated from an EEE-NM model with $G=2$ clusters of equal size ($n_1=n_2=90$).
Twenty noise points are also added from a uniform distribution over the range $-10$ to $10$ on each variate; hence, the generated data can be meant as arising from an EEE-NCM with $G=2$ clusters. 
Note that when a point from this uniform distribution effectively falls inside a cluster, which seems to happen five times (see \figurename~\ref{fig:simdataunif1}), we would expect it to be classified as belonging to the associated cluster.
\begin{figure}[!ht]
  \centering
\includegraphics[width=0.49\textwidth]{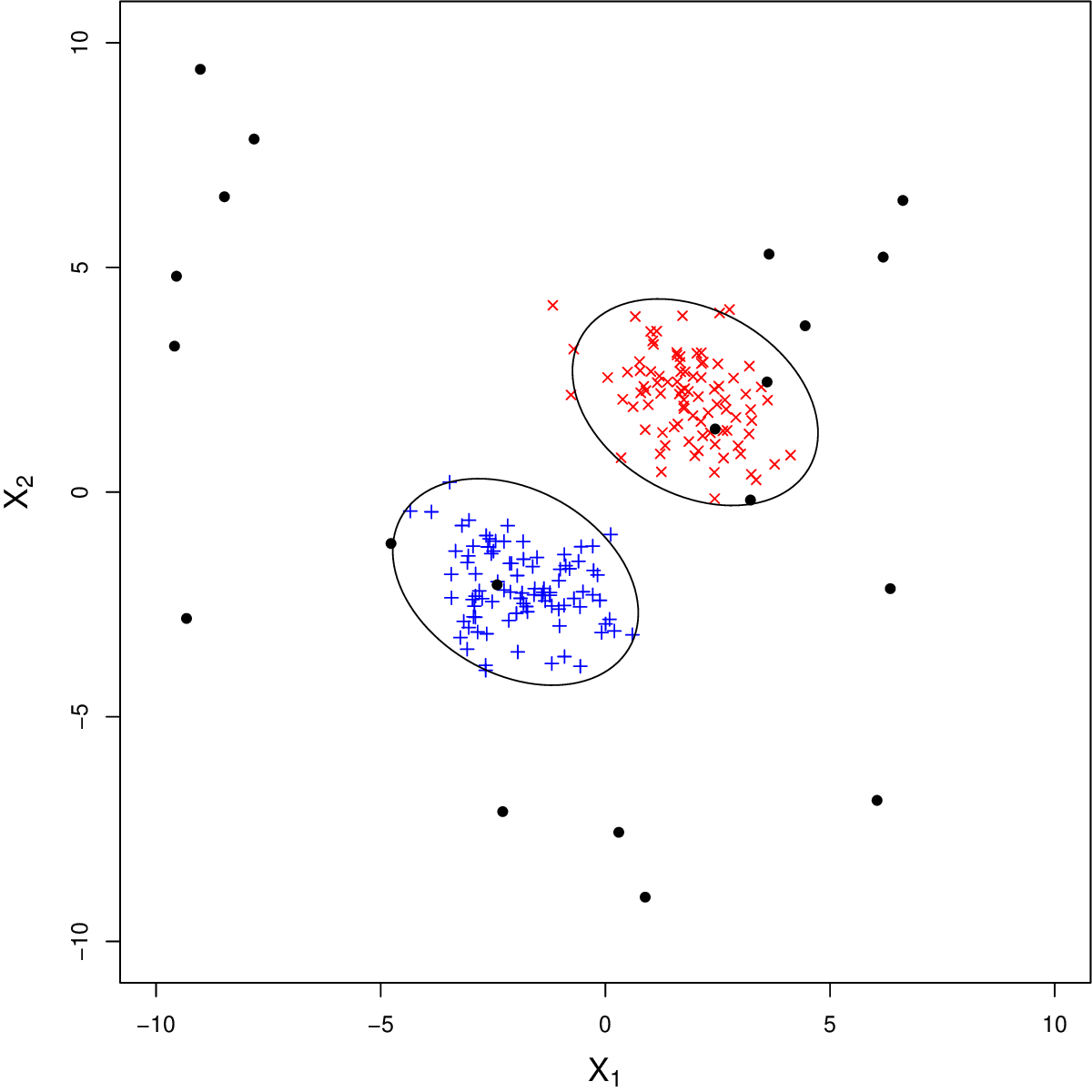}
\caption{Simulated data from Section~\ref{subsec:Overall uniform noise}: Scatterplot where uniform noise points are denoted by~$\bullet$.}
\label{fig:simdataunif1}       
\end{figure}

The competing models are run for $G\in\left\{1,2,3\right\}$.
The corresponding BIC values are reported in \figurename~\ref{fig:noise BIC}.
\begin{figure}[!ht]
\centering
\subfigure[NMs\label{fig:BIC NM}]
{\resizebox{0.322\textwidth}{!}{\includegraphics{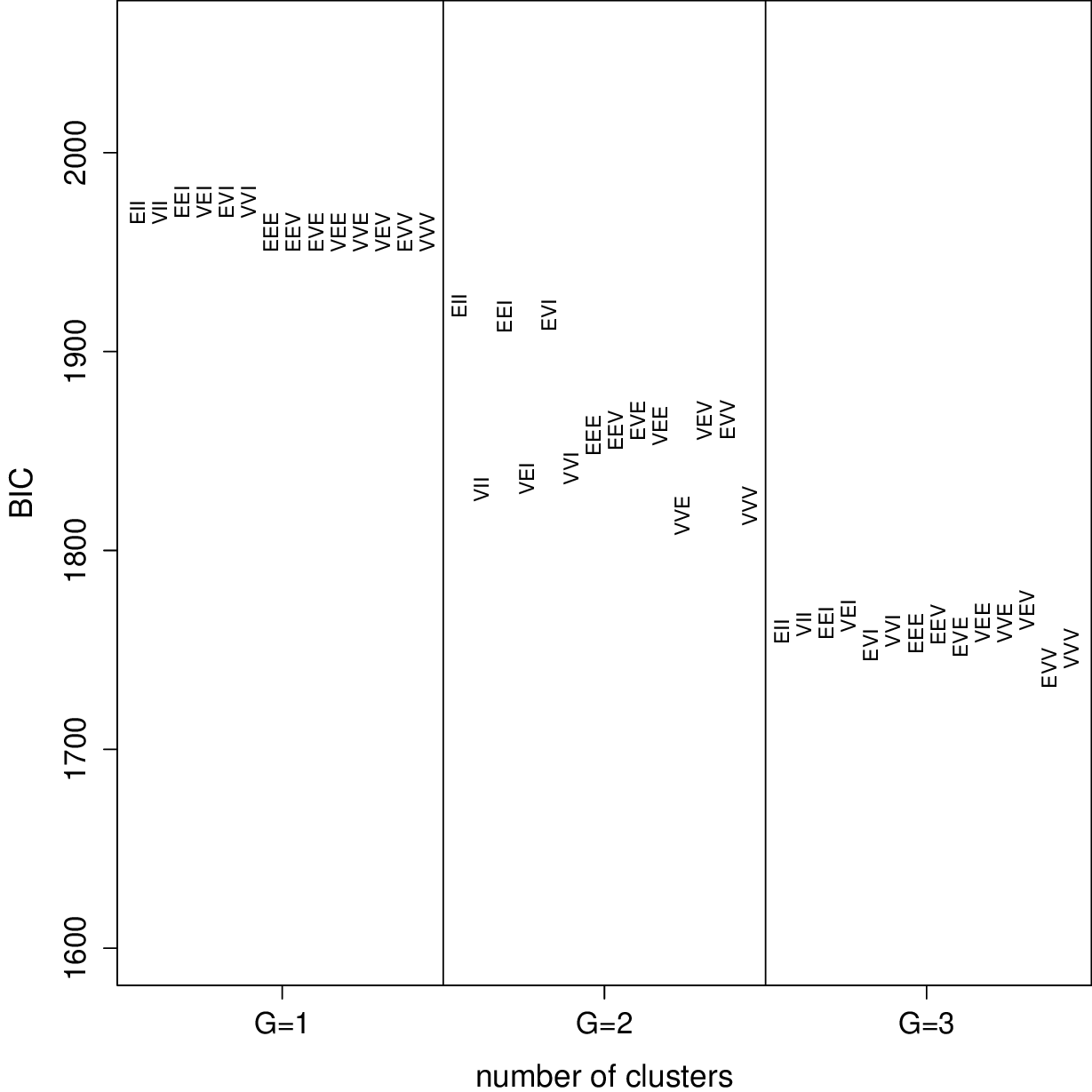}}}
\subfigure[$t$Ms\label{fig:BIC tM}]
{\resizebox{0.322\textwidth}{!}{\includegraphics{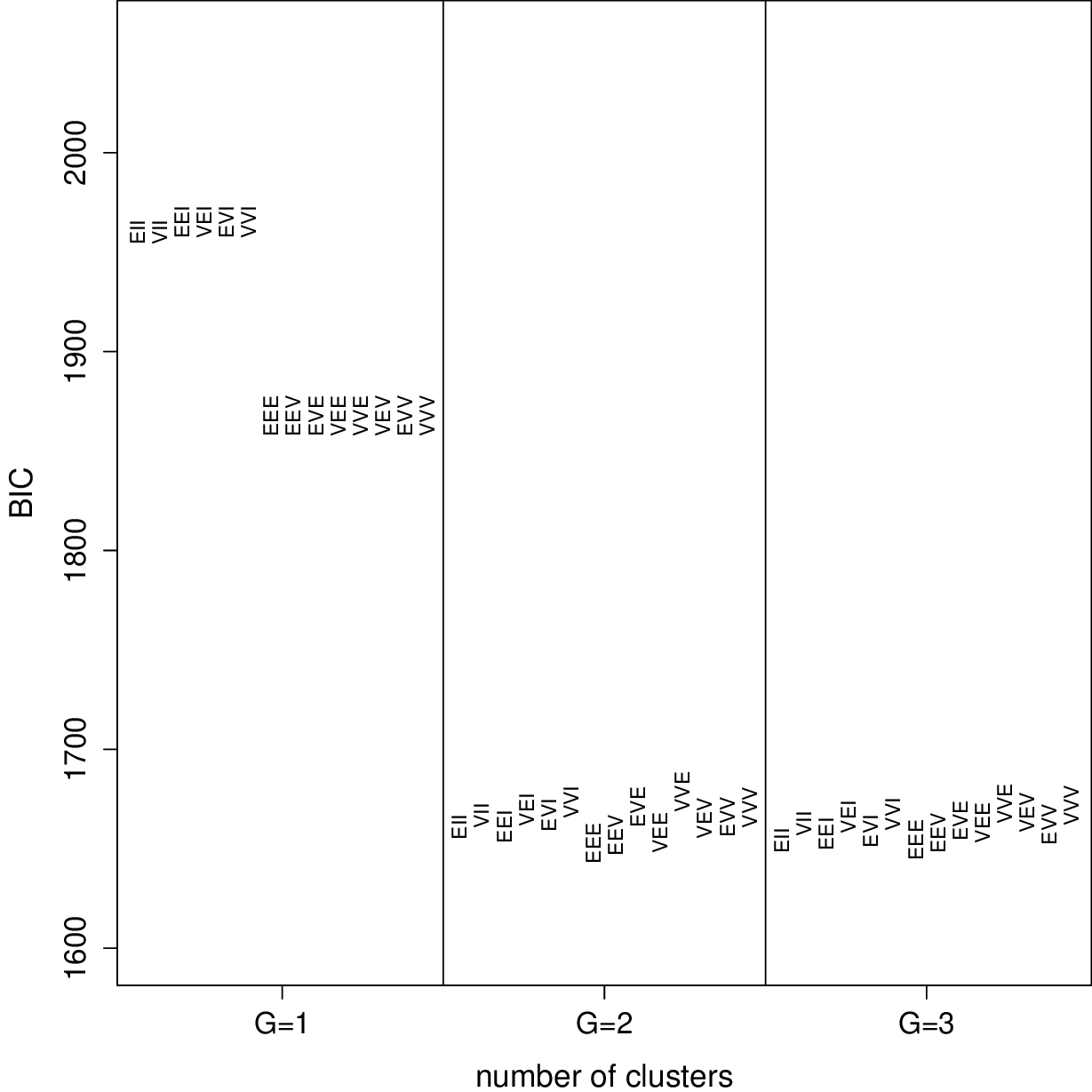}}}
\subfigure[CNMs\label{fig:BIC CNM}]
{\resizebox{0.322\textwidth}{!}{\includegraphics{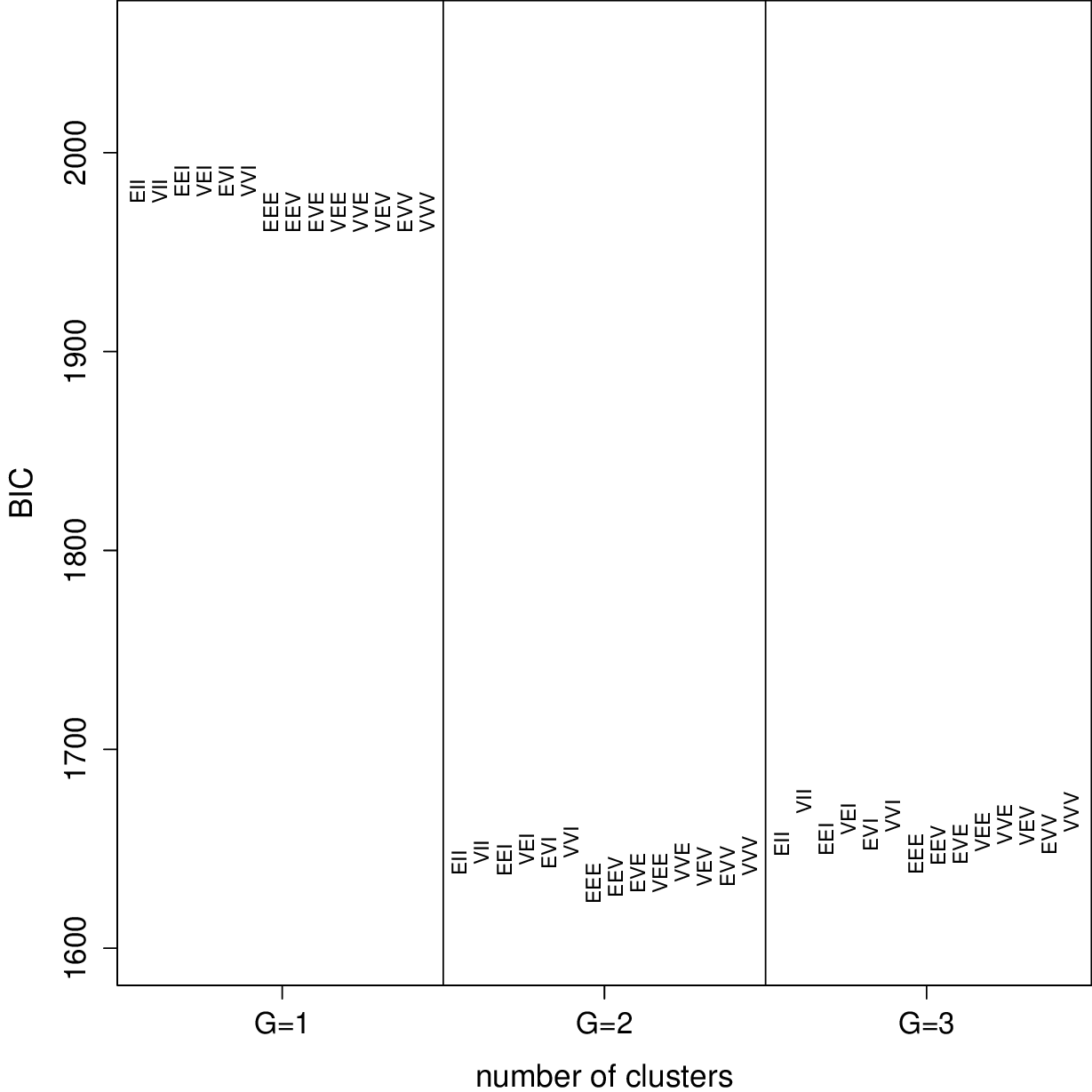}}}
\subfigure[NUMs\label{fig:BIC NUM}]
{\resizebox{0.322\textwidth}{!}{\includegraphics{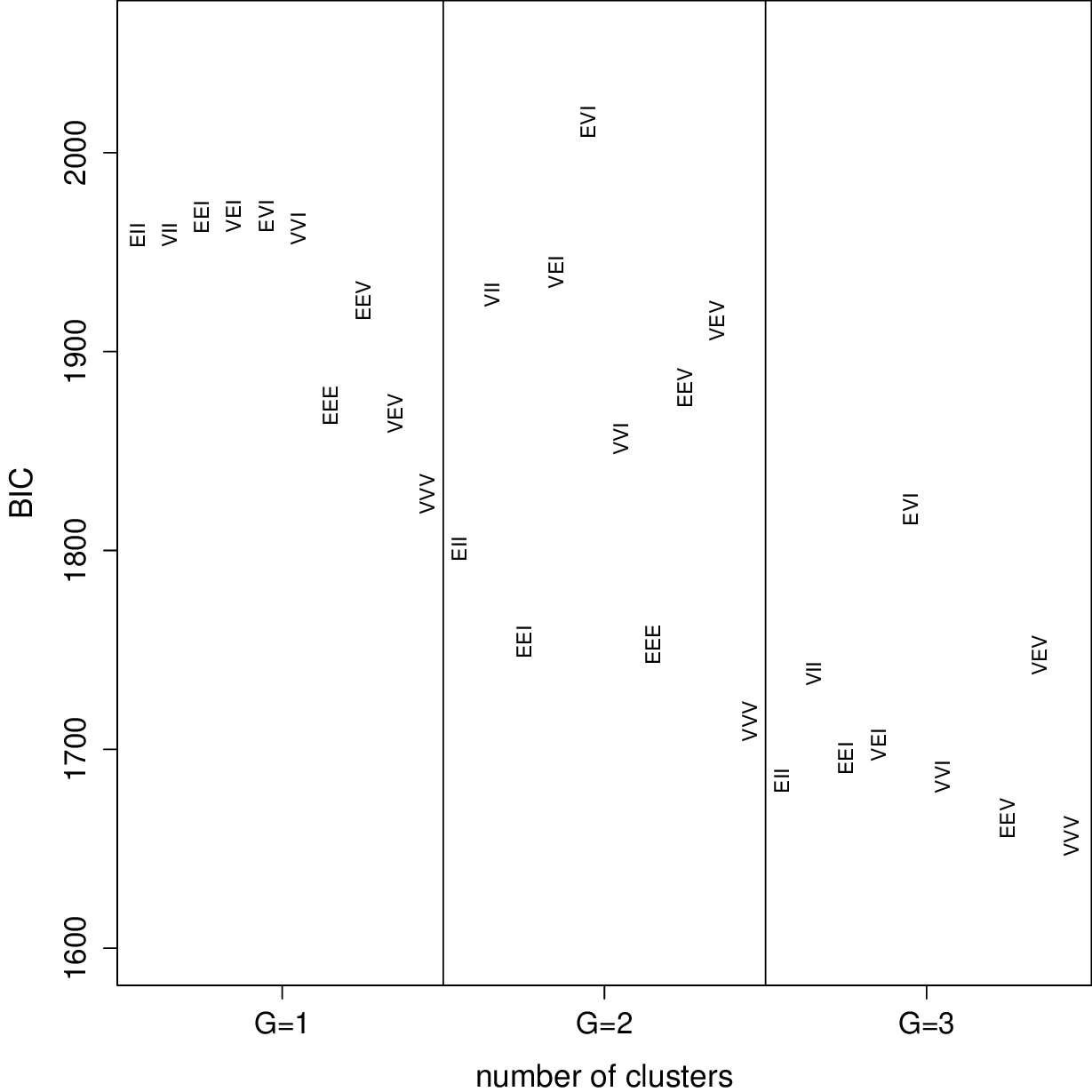}}}
\subfigure[NCMs\label{fig:BIC NCM}]
{\resizebox{0.322\textwidth}{!}{\includegraphics{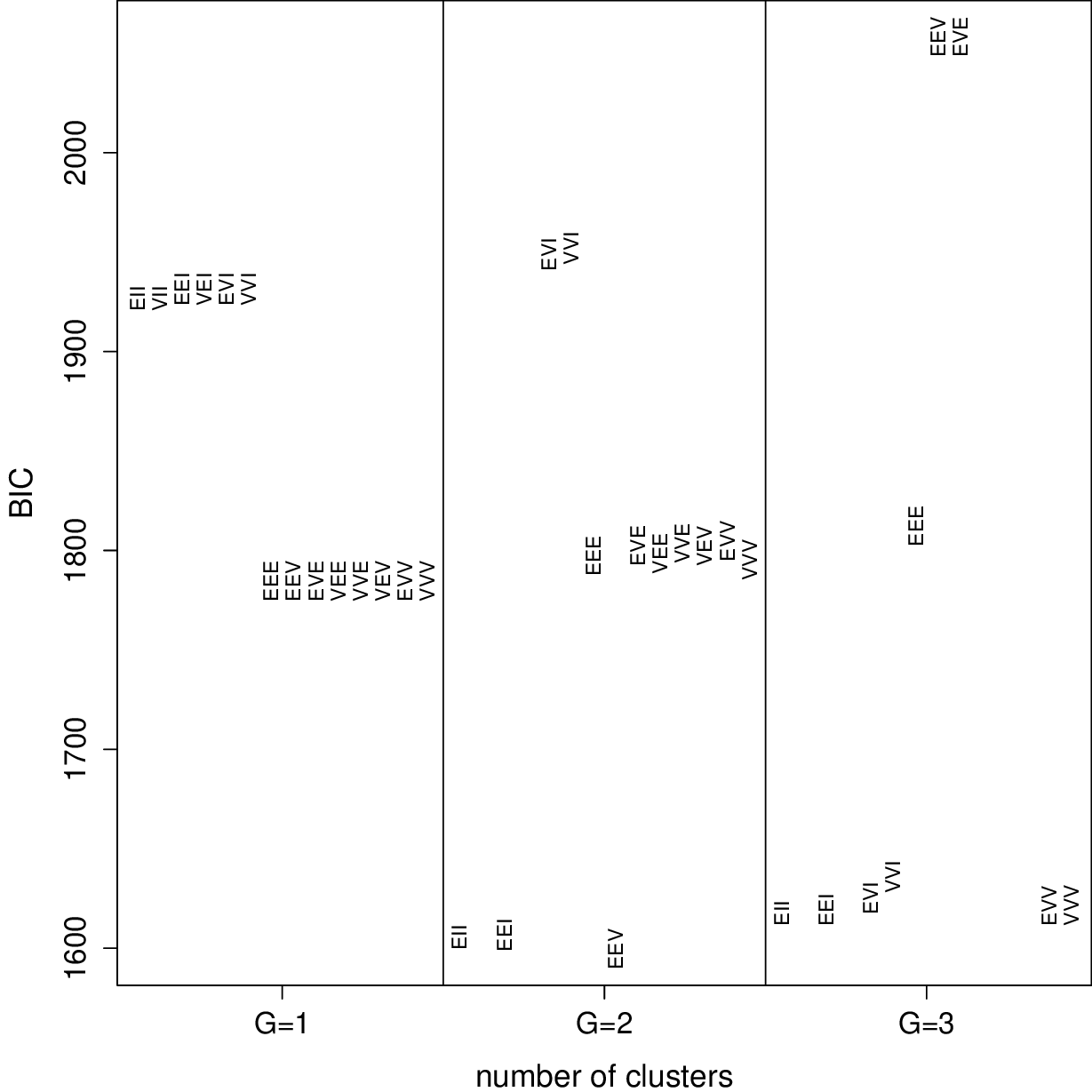}}}
\caption{
Simulated data from Section~\ref{subsec:Overall uniform noise}: BIC values for the fitted models.
\label{fig:noise BIC}
}
\end{figure}
From \figurename~\ref{fig:BIC NM}, the NMs with $G=3$ clusters have the lowest BIC values.
For $t$Ms and CNMs, $G=2$ and $G=3$ clusters provide lower BIC values than $G=1$.
For NUMs, 6 of the parsimonious models with $G=3$ (VVV, EEV, EII, VVI, EEI, and VEI) have the lowest BIC values.
Finally, for NCMs, the best 3 models in terms of BIC have $G=2$ clusters and covariance structures EEV, EEI, and EII.

For each considered family, the best models according to the BIC are graphically represented in \figurename~\ref{fig:noise plot}; for the selected $t$M, CNM, NUM, and NCM, detected outliers are denoted by black bullets. 
\begin{figure}[!ht]
\centering
\subfigure[NM: EVV with $G=3$\label{fig:noise NM}]
{\resizebox{0.322\textwidth}{!}{\includegraphics{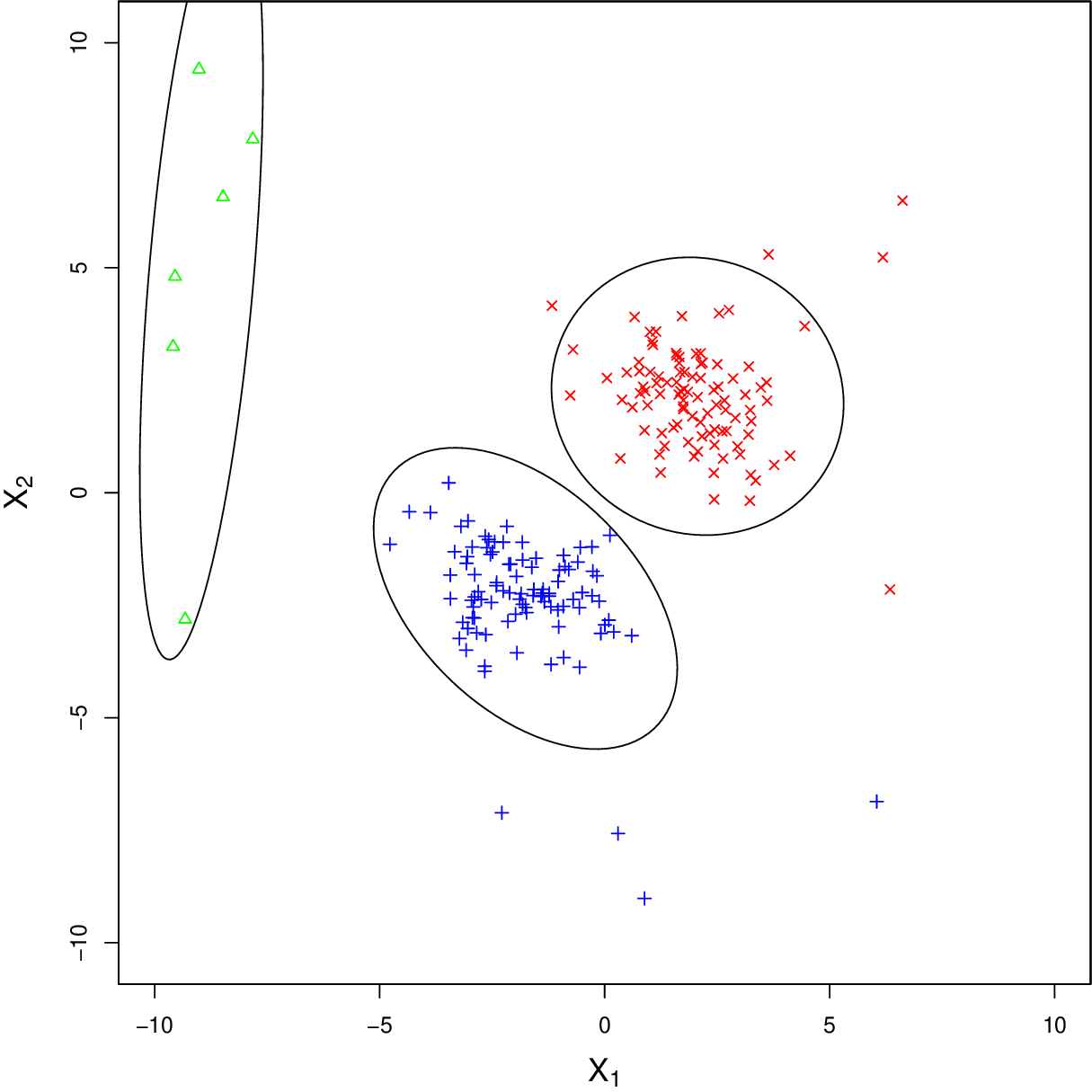}}}
\subfigure[$t$M: EEE with $G=2$\label{fig:noise tM}]
{\resizebox{0.322\textwidth}{!}{\includegraphics{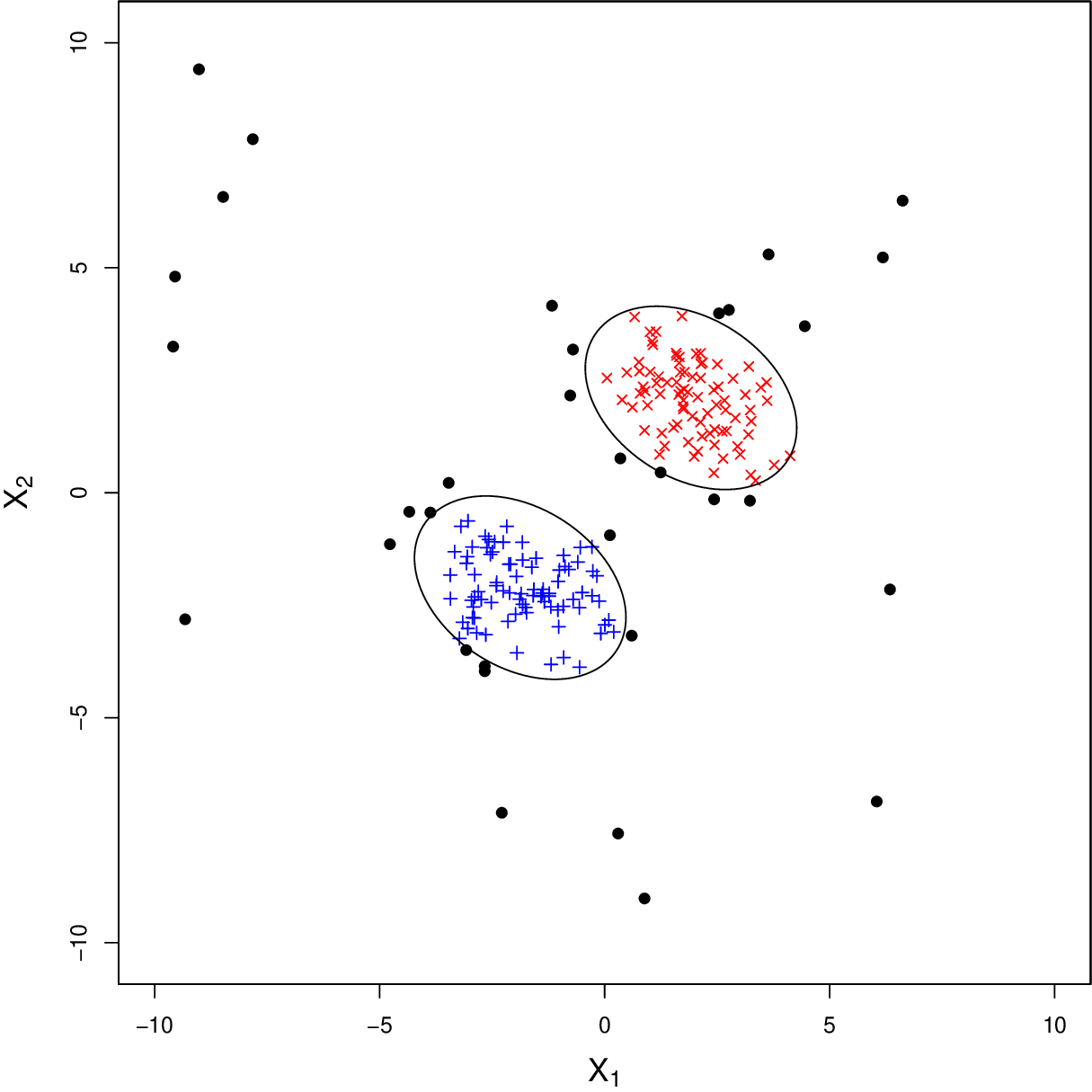}}}
\subfigure[CNM: EEE with $G=2$\label{fig:noise CNM}]
{\resizebox{0.322\textwidth}{!}{\includegraphics{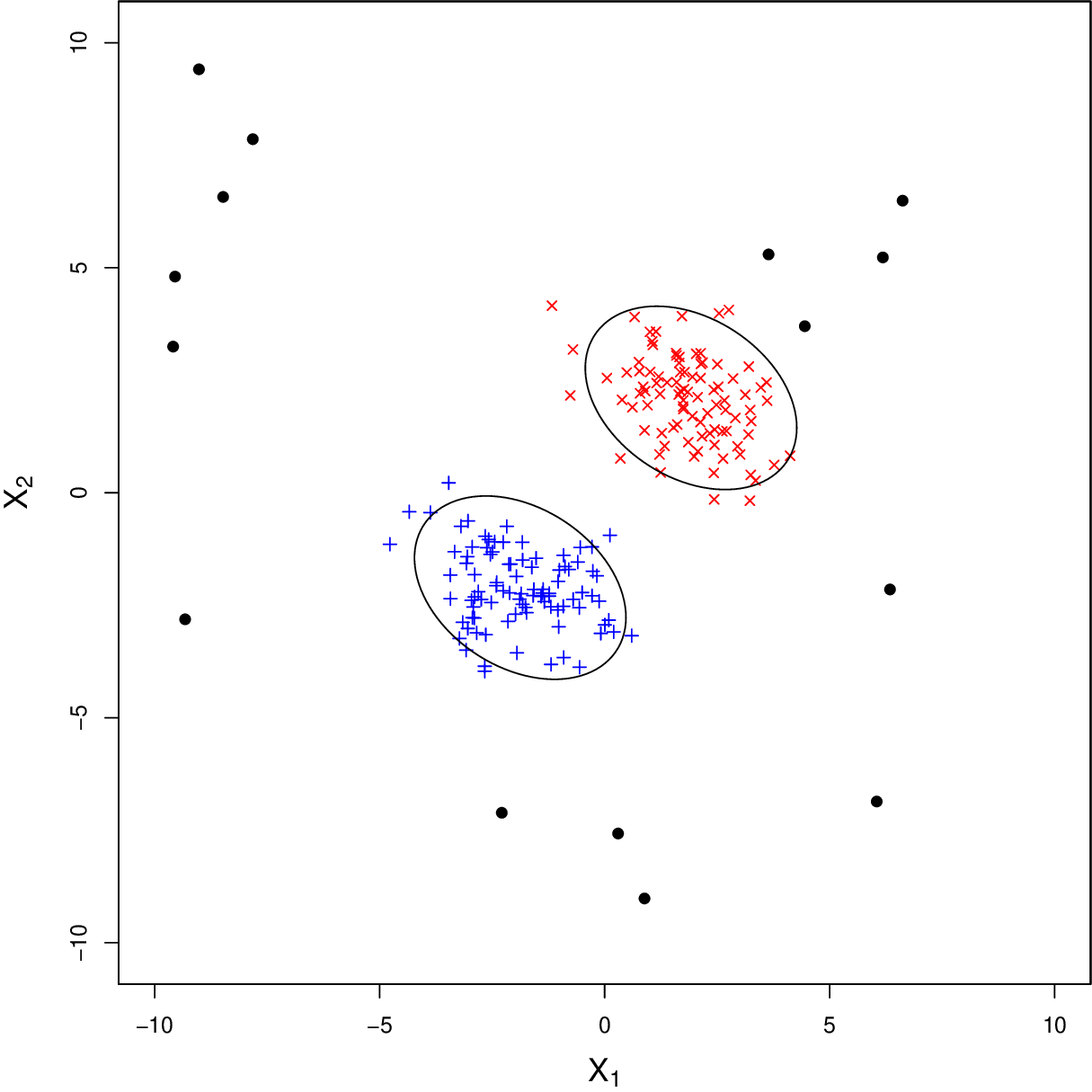}}}
\subfigure[NUM: VVV with $G=3$\label{fig:noise NUM}]
{\resizebox{0.322\textwidth}{!}{\includegraphics{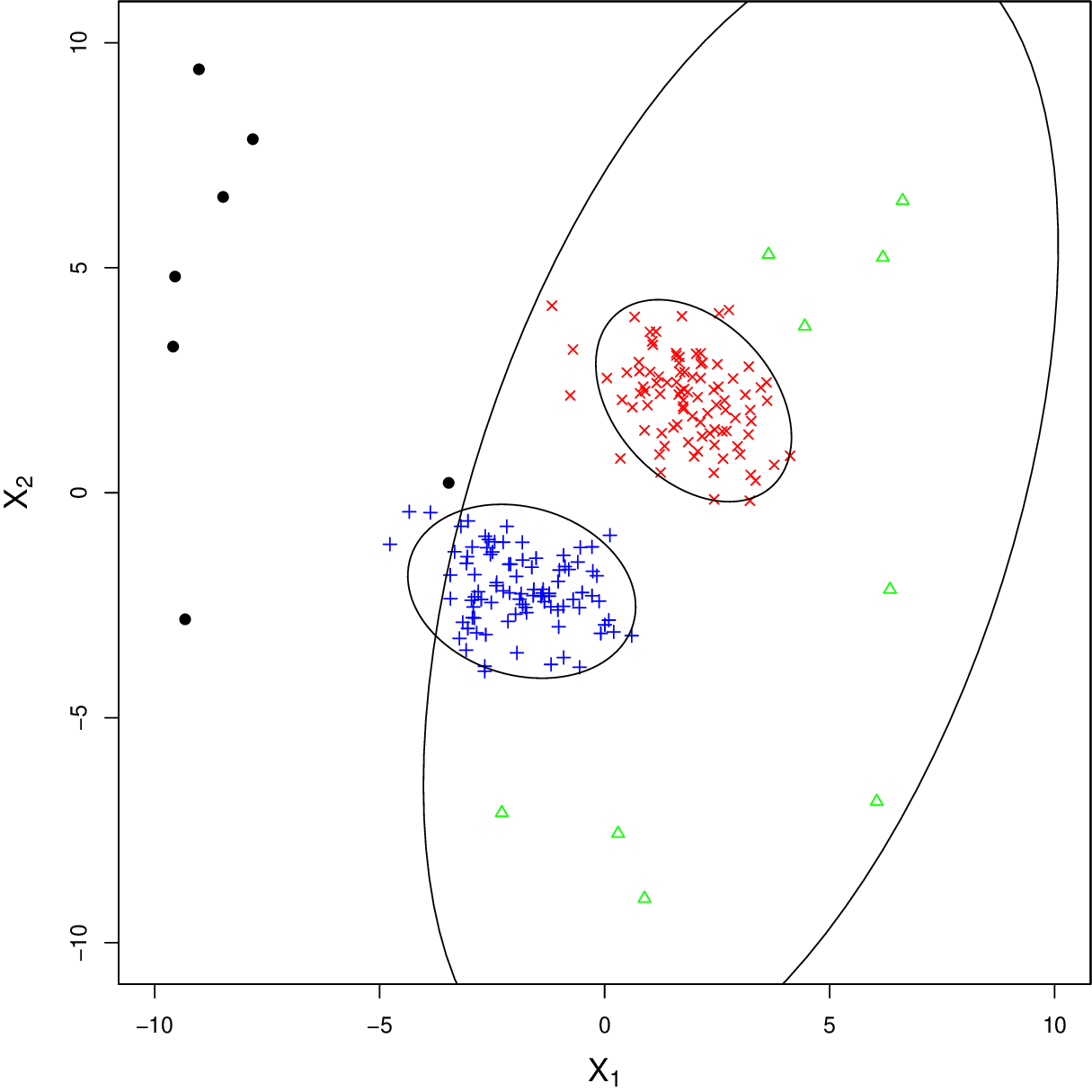}}}
\subfigure[NCM: EEV with $G=2$\label{fig:noise NCM}]
{\resizebox{0.322\textwidth}{!}{\includegraphics{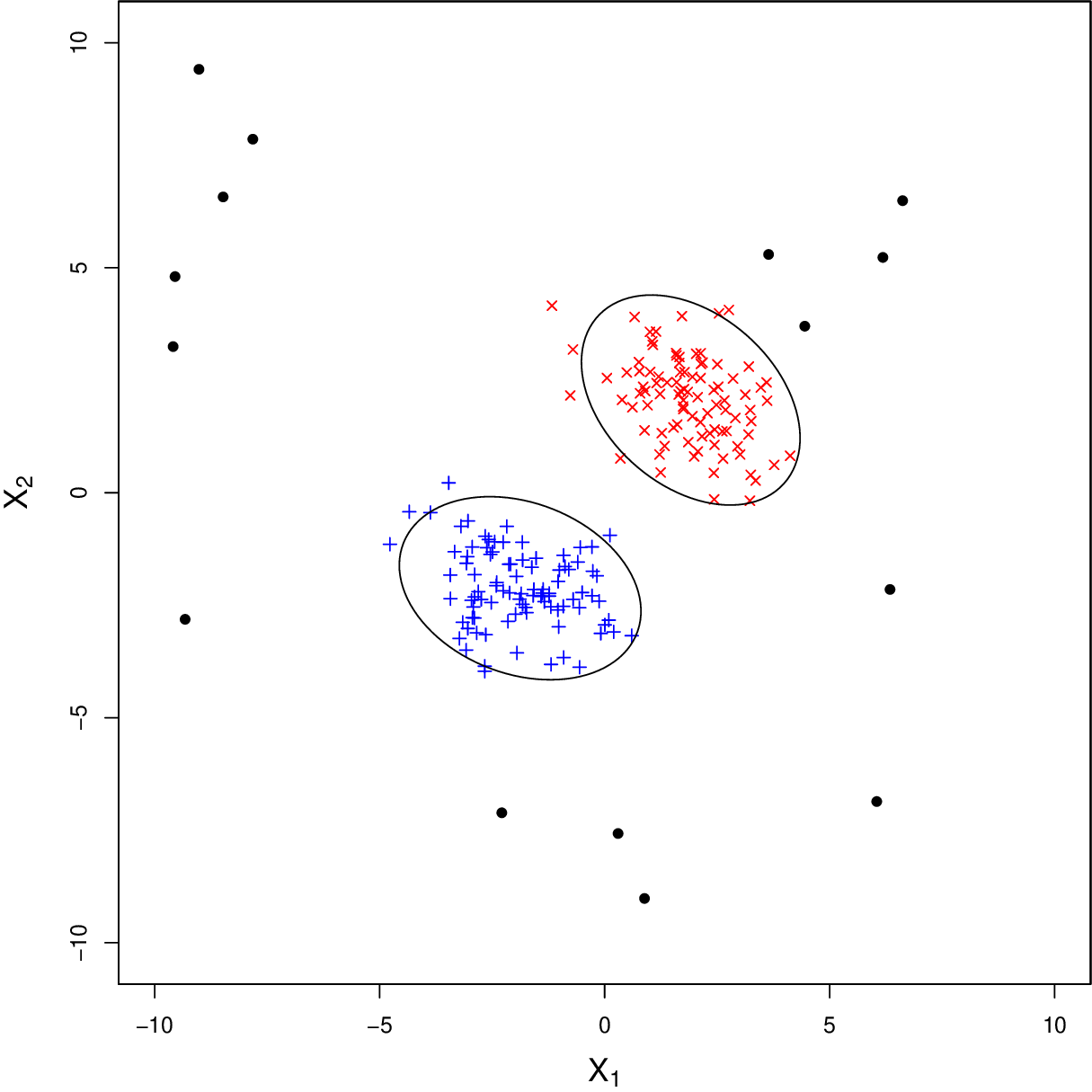}}}
\caption{
Simulated data from Section~\ref{subsec:Overall uniform noise}: scatterplots illustrating the best models, according to the BIC, for each considered family.
Bullets denote detected bad points.
\label{fig:noise plot}
}
\end{figure}
For NMs, the best model according to the BIC has $G=3$ clusters with an EVV covariance structure (\figurename~\ref{fig:noise NM}).
We can note how the additional third cluster is attempting to model part of the background noise; however, the remaining part of the noise is erroneously assigned to the other clusters and this contribute to affect the detection of the underlying EEE structure.
For $t$Ms and CNMs, the best model according to the BIC is the true one, with corresponding clustering represented in \figurename~\ref{fig:noise tM} and \figurename~\ref{fig:noise CNM}, respectively.
However, in conformity with the simulation results of Section~\ref{subsec:Outliers detection}, the detection rule for $t$Ms, based on the 95th percentile, tends to declare more observations as outliers, but these detected outliers are often not true outliers.
The CNM in \figurename~\ref{fig:noise CNM} compares very well with the true model (\figurename~\ref{fig:simdataunif1}), recognizing 15 out of 20 noise observations; as said before, each of the 5 outliers that it does not recognize falls within one of the two clusters (cf.~\figurename~\ref{fig:simdataunif1}). 
For NUMs, the best model according to the BIC has $G=3$ clusters with a VVV covariance structure.
Amongst the detected 7 outliers, there are 6 true outliers and one point, of the cluster on the left, erroneously detected as outlier (compare \figurename~\ref{fig:noise NUM} with \figurename~\ref{fig:simdataunif1}).
Moreover, the third cluster models the part on the right of the background noise.
For NCMs, the best model according to the BIC has the correct number of clusters ($G=2$) but an EEV covariance structure.
Apart from the erroneously identified covariance structure, outliers are detected as for CNMs.

Finally, \tablename~\ref{tab:noise misclassification} reports the number of misclassified observations for each of the best models according to the BIC.   
This number is computed by considering the true classification of the points in: cluster 1, cluster 2, and noise.
We can note how the best performers are the CNM and the NCM, with only 5 misclassified observations corresponding to the 5 noisy points falling into the clusters (compare \figurename~\ref{fig:noise CNM} and \figurename~\ref{fig:noise NCM} with \figurename~\ref{fig:simdataunif1}).
\begin{table}[!ht]
\caption{Simulated data from Section~\ref{subsec:Overall uniform noise}: number of misclassified observations for the best models according to the BIC.}
\label{tab:noise misclassification}
\centering
\begin{tabular}{l ccc}
\toprule
Model & $G$ & Covariance structure & \# of misclassified observations \\
\midrule
NM   & 3 & EVV & 14 \\ 
$t$M & 2 & EEE & 19 \\ 
CNM  & 2 & EEE &  5 \\ 
NUM  & 3 & VVV & 12 \\ 
NCM  & 2 & EEV &  5 \\ 
\bottomrule
\end{tabular}
\end{table}

\subsection{Sensitivity study based on the blue crabs data}
\label{subsec:Sensitivity study}

As a second analysis, a sensitivity study, based on the very popular crabs data set of \citet{Camp:Maho:amul:1974}, is here described to compare how a single bad point affects the behaviour of the competing models.
Attention is focused on the sample of $n=100$ blue crabs of the genus \textit{Leptograpsus}, of which there are 50 males and 50 females (\figurename~\ref{fig:crab}). 
\begin{figure}[!ht]
\centering
  \includegraphics[width=0.35\textwidth]{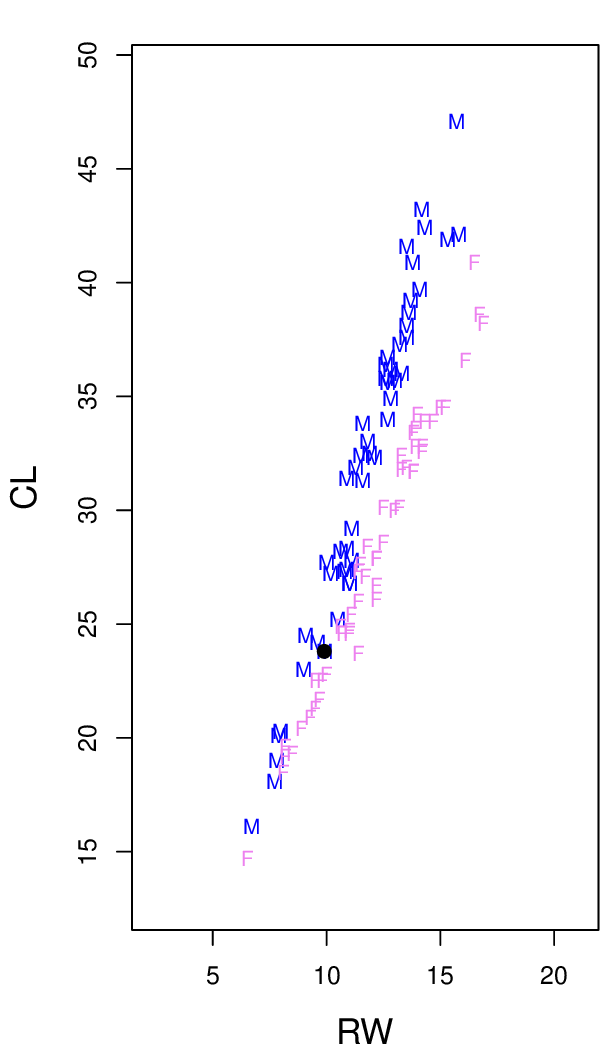} %
\caption{Blue crabs data: Scatterplot (\textsf{F} denotes female and \textsf{M} male; $\bullet$ denotes the observation perturbed for the analysis of Section~\ref{subsec:Sensitivity study}).}
\label{fig:crab}       
\end{figure} 
For each specimen, we consider two measurements (in millimeters), namely the rear width (RW) and the length along the midline of the carapace (CL).
In the fashion of \citet{Peel:McLa:Robu:2000}, thirteen ``perturbed'' data sets are generated by substituting the original value of CL for the 7th point (highlighted by a bullet in \figurename~\ref{fig:crab}) with thirteen anomalous values shown in the first column of \tablename~\ref{tab:crab clustering}.
\setlength{\tabcolsep}{3pt}
\renewcommand{\arraystretch}{1}
\begin{table}[ht]
\caption{
Blue crabs data: Summary information about the fitted VVV models (``\#M'' = number of misallocations, without considering the true outlier; ``weight'' = weight given to the true outlier in the estimation of the parameters; ``d.f'' = estimated degrees of freedom, for the $t$M, in the cluster containing the true outlier; ``\#B'' = number of detected outliers; ``$\widehat{v}_{7g}$'' = probability for the true outlier to be a good point in the cluster $g$ the outlier is assigned; ``$\widehat{\eta}_g$'' = estimated inflation parameter, for the CNM, in the cluster containing the true outlier).
In the column labeled as ``bad'', \ding{51} and \ding{55} indicate if the true outlier is detected or not, respectively, by the model.
}
\label{tab:crab clustering}
\centering
\resizebox{\textwidth}{!}{
\begin{tabular}{r c rr c rrrrcc c rrcrrcc c rrccc c rrccc}
  \toprule
    && \multicolumn{2}{c}{NM} && \multicolumn{6}{c}{$t$M} && \multicolumn{7}{c}{CNM} && \multicolumn{5}{c}{NUM} && \multicolumn{5}{c}{NCM}\\
    \cline{3-4}\cline{6-11}\cline{13-19}\cline{21-25}\cline{27-31}  
 CL 
 && 
 \multicolumn{1}{c}{BIC} & \multicolumn{1}{c}{\#M} 
 && 
 \multicolumn{1}{c}{BIC} & \multicolumn{1}{c}{\#M} & \multicolumn{1}{c}{weight} & \multicolumn{1}{c}{d.f.} & bad & \#B 
 &&
 \multicolumn{1}{c}{BIC} & \multicolumn{1}{c}{\#M} & $\widehat{v}_{7g}$ & \multicolumn{1}{c}{weight} & \multicolumn{1}{c}{$\widehat{\eta}_g$} & bad & \#B 
 &&
 \multicolumn{1}{c}{BIC} & \multicolumn{1}{c}{\#M} & $\widehat{v}_{7g}$ & bad & \#B 
 &&
 \multicolumn{1}{c}{BIC} & \multicolumn{1}{c}{\#M} & $\widehat{v}_{7g}$ & bad & \#B \\ 
  \midrule
-50 && 1071.19 & 20 && 968.53 & 13 & 0.0010 & 2.00 & \ding{51} & 9 && 969.41 & 12 & 0 & 0.0008 & 1284.41 & \ding{51} & 1 && 1005.22 &  14 & 0 & \ding{51} & 2 &  &  &  &  &  \\ 
  -45 && 1066.56 & 20 && 967.99 & 13 & 0.0011 & 2.00 & \ding{51} & 9 && 969.14 & 12 & 0 & 0.0009 & 1119.84 & \ding{51} & 1 && 1005.05 & 13 & 0 & \ding{51} & 2 &  &  &  &  &  \\ 
  -40 && 1061.47 & 21 && 967.41 & 13 & 0.0013 & 2.01 & \ding{51} & 9 && 968.84 & 12 & 0 & 0.0010 & 966.45 & \ding{51} & 1 &&      1004.81  & 13 & 0 & \ding{51} & 2   &  &  &  &  &  \\ 
  -35 && 1055.84 & 20 && 966.78 & 13 & 0.0016 & 2.04 & \ding{51} & 9 && 968.52 & 12 & 0 & 0.0012 & 824.20 & \ding{51} & 1 && 1004.40 & 13 & 0 & \ding{51} & 2 &  &  &  &  &  \\ 
  -30 && 1049.54 & 18 && 966.09 & 13 & 0.0019 & 2.07 & \ding{51} & 9 && 968.18 & 12 & 0 & 0.0014 & 693.11 & \ding{51} & 1 && 1004.32 & 14 & 0 & \ding{51} & 2 &  &  &  &  &  \\ 
  -25 && 1042.48 & 16 && 965.33 & 13 & 0.0023 & 2.10 & \ding{51} & 9 && 967.80 & 12 & 0 & 0.0017 & 573.17 & \ding{51} & 1 && 1003.38 & 34 & 0 & \ding{51} & 1 &  &  &  &  &  \\ 
  -20 && 1034.56 & 16 && 964.49 & 13 & 0.0029 & 2.15 & \ding{51} & 9 && 967.38 & 12 & 0 & 0.0022 & 464.40 & \ding{51} & 1 && 1000.75 & 18 & 0 & \ding{51} & 1 &  &  &  &  &  \\ 
  -15 && 1025.67 & 17 && 963.53 & 13 & 0.0037 & 2.20 & \ding{51} & 9 && 966.90 & 12 & 0 & 0.0027 & 366.78 & \ding{51} & 1 && 999.22 & 14 & 0 & \ding{51} & 1 &  &  &  &  &  \\ 
  -10 && 1015.71 & 15 && 962.44 & 13 & 0.0049 & 2.26 & \ding{51} & 9 && 966.37 & 12 & 0 & 0.0036 & 280.31 & \ding{51} & 1 && 997.33 & 20 & 0 & \ding{51} & 1 &  &  &  &  &  \\ 
  -5 && 1004.75 & 16 && 961.17 & 13 & 0.0068 & 2.34 & \ding{51} & 9 && 965.74 & 12 & 0 & 0.0049 & 204.99 & \ding{51} & 1 && 996.66 & 28 & 0 & \ding{51} & 1 && 948.96 & 13 & 0 & \ding{51} & 2 \\ 
  0 && 992.91 & 16 && 959.66 & 13 & 0.0100 & 2.45 & \ding{51} & 9 && 964.99 & 12 & 0 & 0.0071 & 140.83 & \ding{51} & 1 && 996.25 & 19 & 0 & \ding{51} & 1 && 947.63 & 13 & 0 & \ding{51} & 2 \\ 
  5 && 980.43 & 16 && 957.77 & 13 & 0.0161 & 2.60 & \ding{51} & 8 && 964.04 & 12 & 0 & 0.0114 & 87.77 & \ding{51} & 1 && 996.15 & 38 & 0 & \ding{51} & 1 && 945.77 & 13 & 0 & \ding{51} & 2 \\ 
  10 && 967.90 & 14 && 955.29 & 13 & 0.0297 & 2.85 & \ding{51} & 8 && 962.74 & 12 & 0 & 0.0219 & 45.59 & \ding{51} & 1 && 978.88 & 16 & 0 & \ding{51} & 1 && 942.44 & 14 & 0 & \ding{51} & 4 \\ 
   \bottomrule
\end{tabular}
}
\end{table}

\textit{Ceteris paribus} with \citet{Peel:McLa:Robu:2000}, we directly fit the competing models with $G=2$ clusters and in their VVV version only.
\tablename~\ref{tab:crab clustering} reports some of the obtained results.
Note that the majority of the NCMs have not been fitted (refer to the missing values in \tablename~\ref{tab:crab clustering}) due to computational issues with the adopted \textsf{R} function \texttt{Mclust()}.

We firstly note that, for each approach, the BIC values deteriorate (increase) in line with the departure of the perturbed value from the bulk of the data; this is due to the log-likelihood part of the BIC.
For the perturbed values of CL equal to -5, 0, 5, and 10, the lowest BIC values are obtained for NCMs; for the remaining perturbed values, the lowest BIC values are obtained for the $t$M.
However, these are not the best approaches under other aspects.   
In particular, the CNM is systematically the most robust to the perturbations, with the number of misallocated observations remaining fixed at 12 regardless of the particular value perturbed (refer to the columns labeled as ``\#M'').
This is especially in contrast to the NM, the NUM, and the NCM, where the number of misclassifications changes (and does not necessarily decreases) as the extent of the perturbation increases. 

As concerns the fitted $t$Ms and CNMs, the column labeled as ``weight'' denotes, in correspondence of the bad point and in its MAP cluster of membership, the weight assigned for parameter estimation; this weight is computed according to formula~\eqref{eq:downweight for X} for the CNM, and according to formula~(7.22) in \citet{McLa:Peel:fini:2000} for the $t$M.
As expected, by recalling that the original value of CL for the 7th point was 23.8, these weights decrease as the CL value of the perturbed point further departs from its true value.
A similar reasoning holds for the estimated degrees of freedom, in the cluster containing the bad point, for $t$Ms (refer to the column labeled as ``d.f.'') and for the estimated value of $\eta_g$, in the cluster containing the bad point, for CNMs (refer to the column labeled as ``$\widehat{\eta}_g$'').
In the former case, this means that we need a $t$ distribution with heavier tails as the bad point departs from the bulk of its cluster of membership; in the latter case, $\eta_g$ can be also meant as a sort of ``degree of badness'', i.e., as a measure of how different bad points are from the bulk of their cluster of membership.

In terms of outlier detection for the robust methods, we note that the probability to be a typical point for the bad point (refer to the columns labeled as ``$\widehat{v}_{7g}$'') is practically null for all of the approaches (such that this probability can be computed) regardless of the particular value perturbed.
We can also note how all of the approaches are able to detect the bad point (refer to the columns labeled as ``bad''); however, the CNM is the only model with a null FPR (refer to the columns labeled as ``\#B'' reporting the number of detected outliers); this is especially in contrast to the the detection rule for $t$Ms, based on the 95th percentile, which yields a number of detected bad points of either 8 or, in the majority of the cases, 9.

\subsection{Wine data}
\label{subsec:Wine Data}

The third analysis is based on the wine data set of \citet{Fori:Lear:Arma:Lant:PARV:1998} available in the \texttt{gclus} package \citep{hurley04} for {\sf R}. 
These data comprise $p=13$ physical and chemical properties of $n=178$ wines grown in the same region in Italy but derived from three different cultivars (Barbera, Barolo, Grignolino). 
We treat this as a clustering analysis by ignoring the labels.

The competing families of parsimonious models are fitted for $G\in\left\{1,2,3,4\right\}$.
For each family, the best model, in terms of BIC, is reported in \tablename~\ref{tab:wine best BIC}; the complete list of BIC values is given in \figurename~\ref{fig:wine BIC}.
Some of the BIC values are missing due to computational issues in estimating the corresponding model (see, in particular, \setlength{\tabcolsep}{5pt}
\figurename~\ref{fig:BIC NUM wine}).
\begin{table}[!ht]
\caption{Wine data: for each family of models, best number of clusters ($G$) and parsimonious covariance structure according to the BIC.}
\label{tab:wine best BIC}
\centering
\begin{tabular}{ccc}
	\toprule
Family of models  & $G$ & Parsimonious structure \\ 
	\midrule
NMs   & 3 & VVE \\ 
$t$Ms & 4 & VVI \\ 
CNMs  & 3 & EEE \\ 
NUMs  & 4 & VVI \\ 
NCMs  & 3 & VVE \\ 
	\bottomrule
\end{tabular}
\end{table}\begin{figure}[!ht]
\centering
\subfigure[NMs\label{fig:BIC NM wine}]
{\resizebox{0.322\textwidth}{!}{\includegraphics{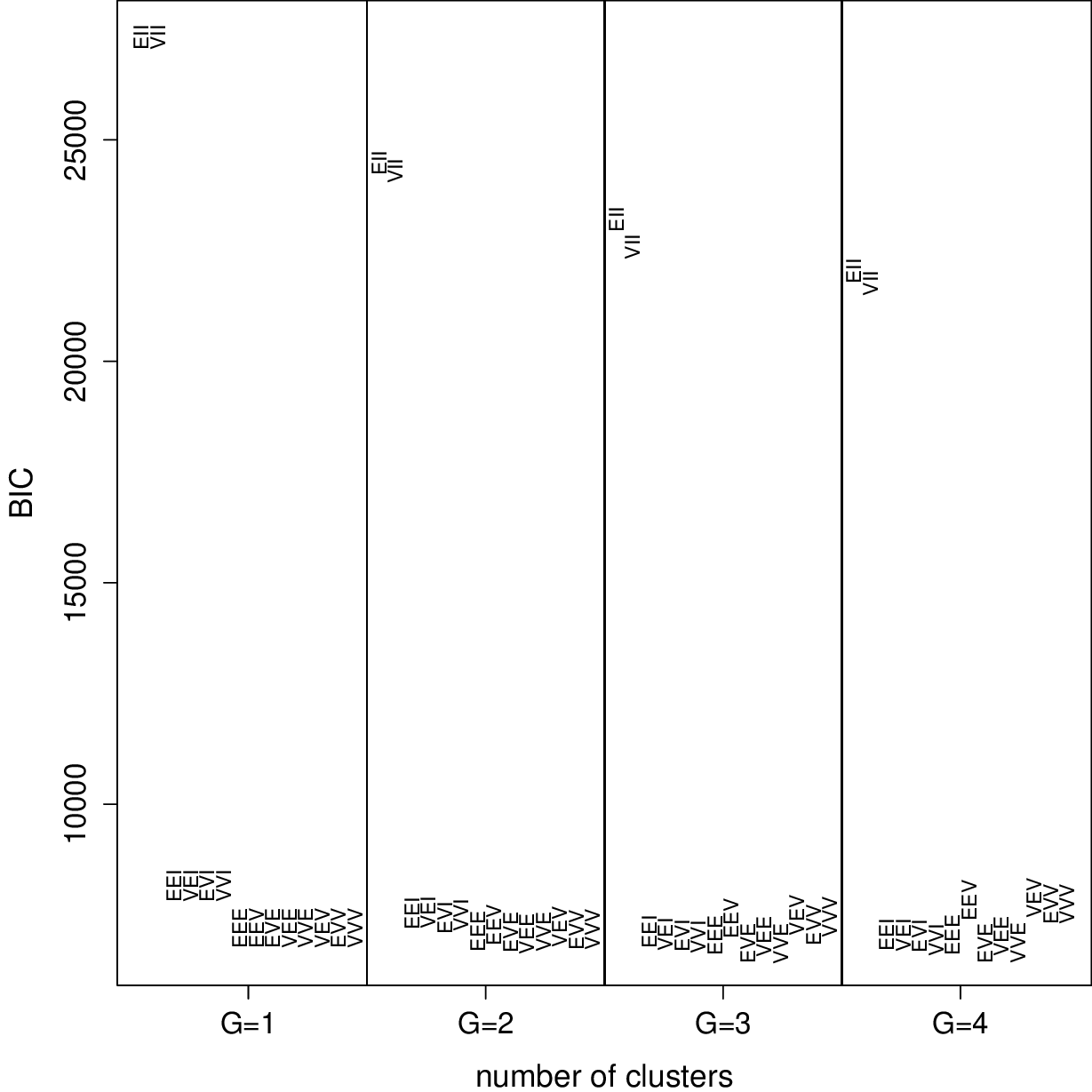}}}
\subfigure[$t$Ms\label{fig:BIC tM wine}]
{\resizebox{0.322\textwidth}{!}{\includegraphics{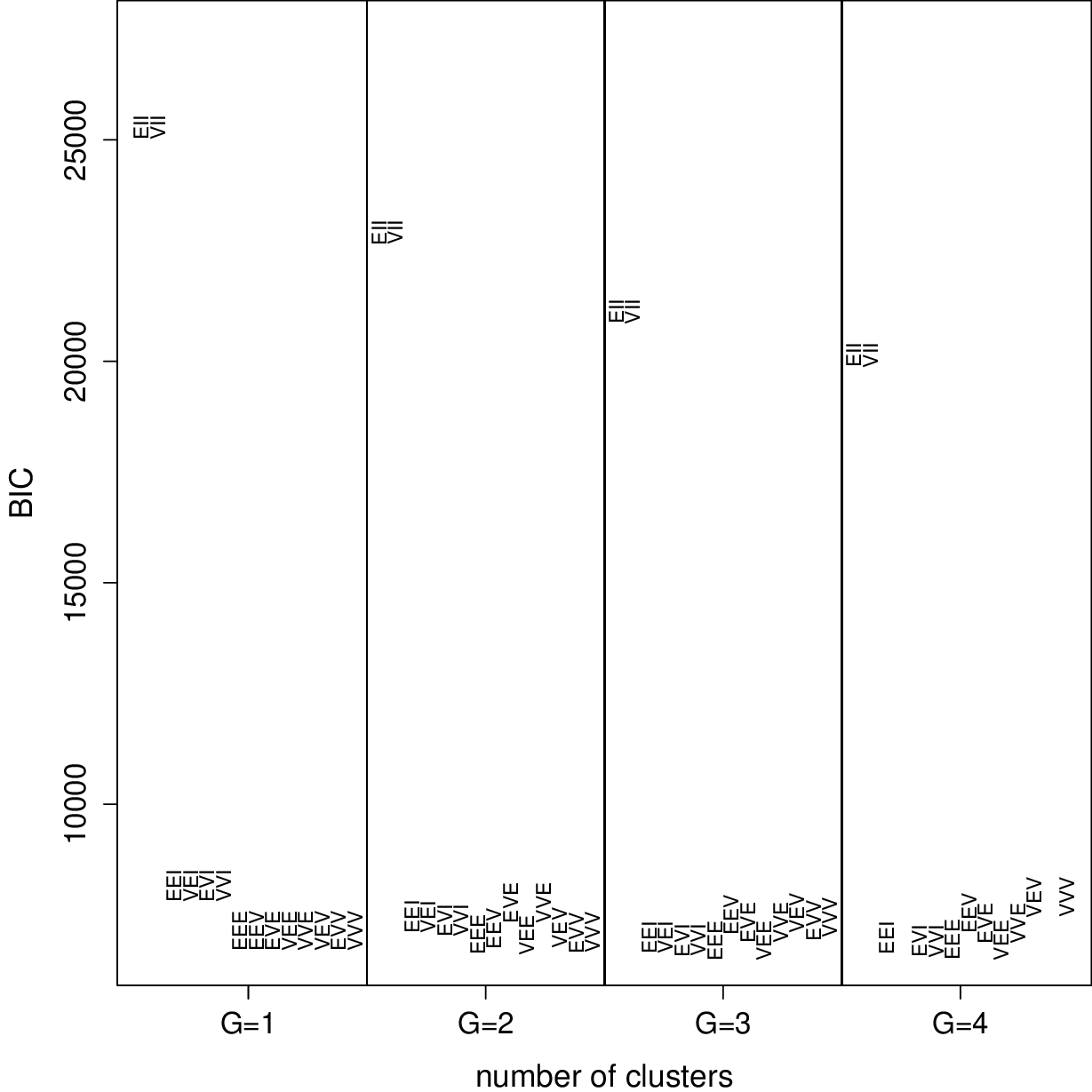}}}
\subfigure[CNMs\label{fig:BIC CNM wine}]
{\resizebox{0.322\textwidth}{!}{\includegraphics{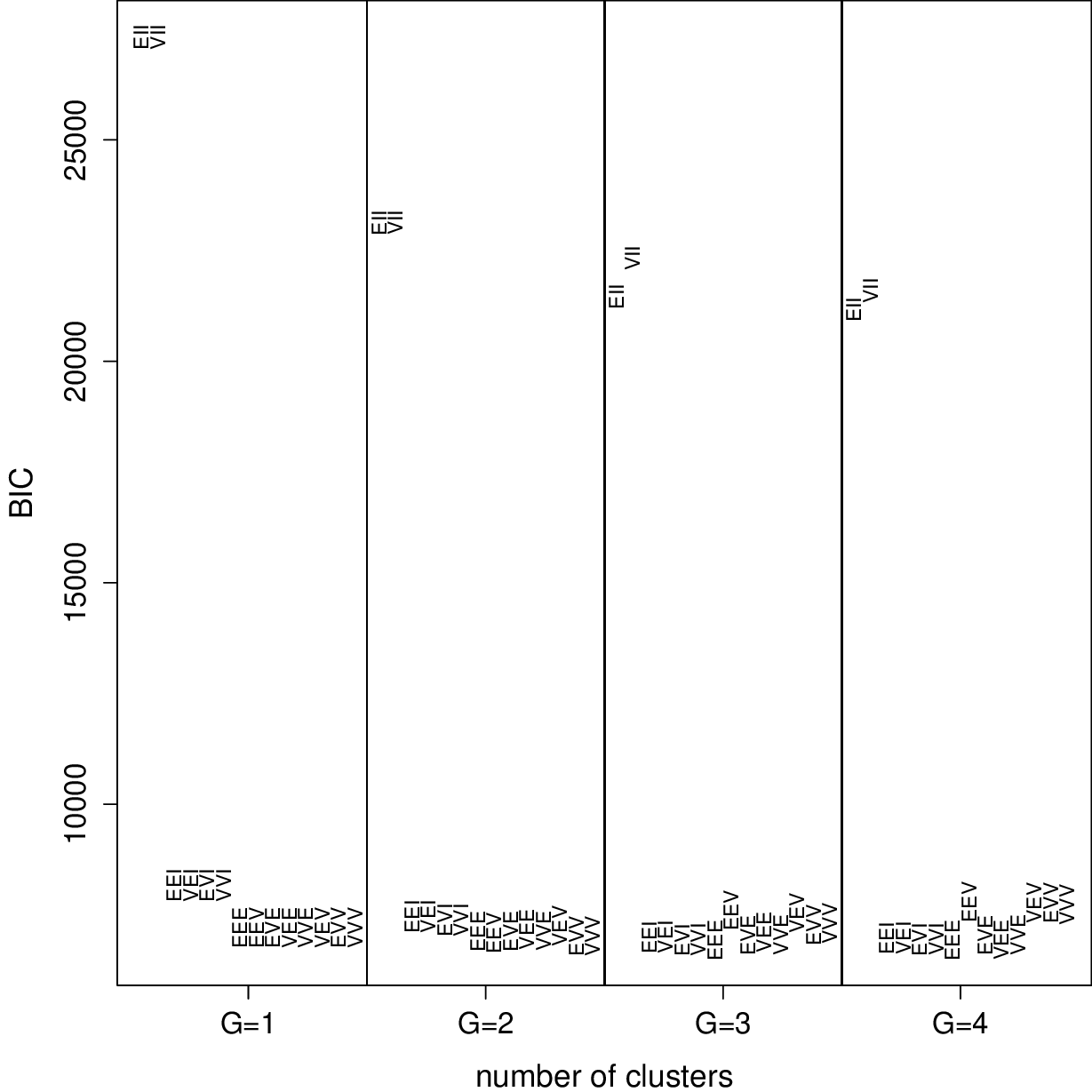}}}
\subfigure[NUMs\label{fig:BIC NUM wine}]
{\resizebox{0.322\textwidth}{!}{\includegraphics{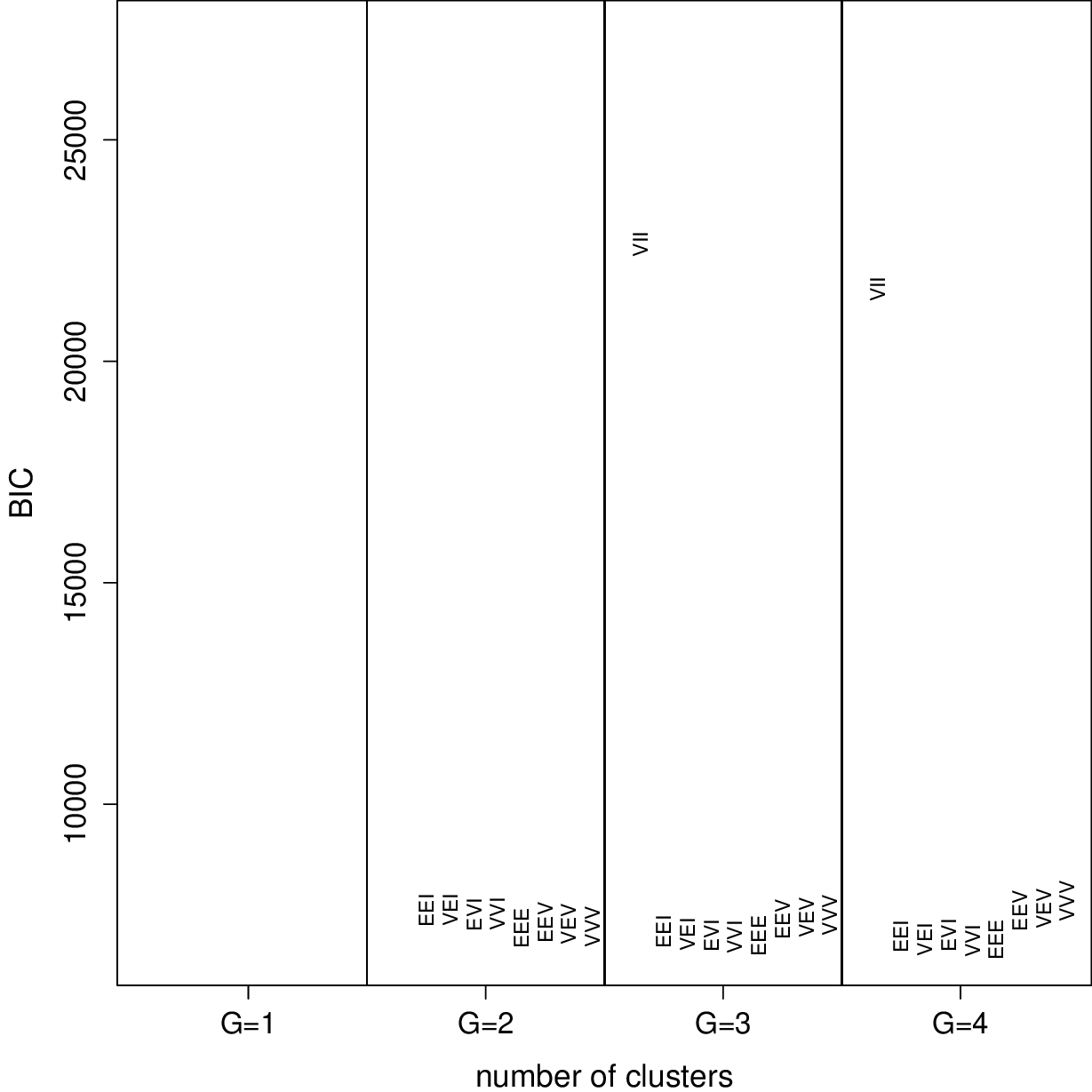}}}
\subfigure[NCMs\label{fig:BIC NCM wine}]
{\resizebox{0.322\textwidth}{!}{\includegraphics{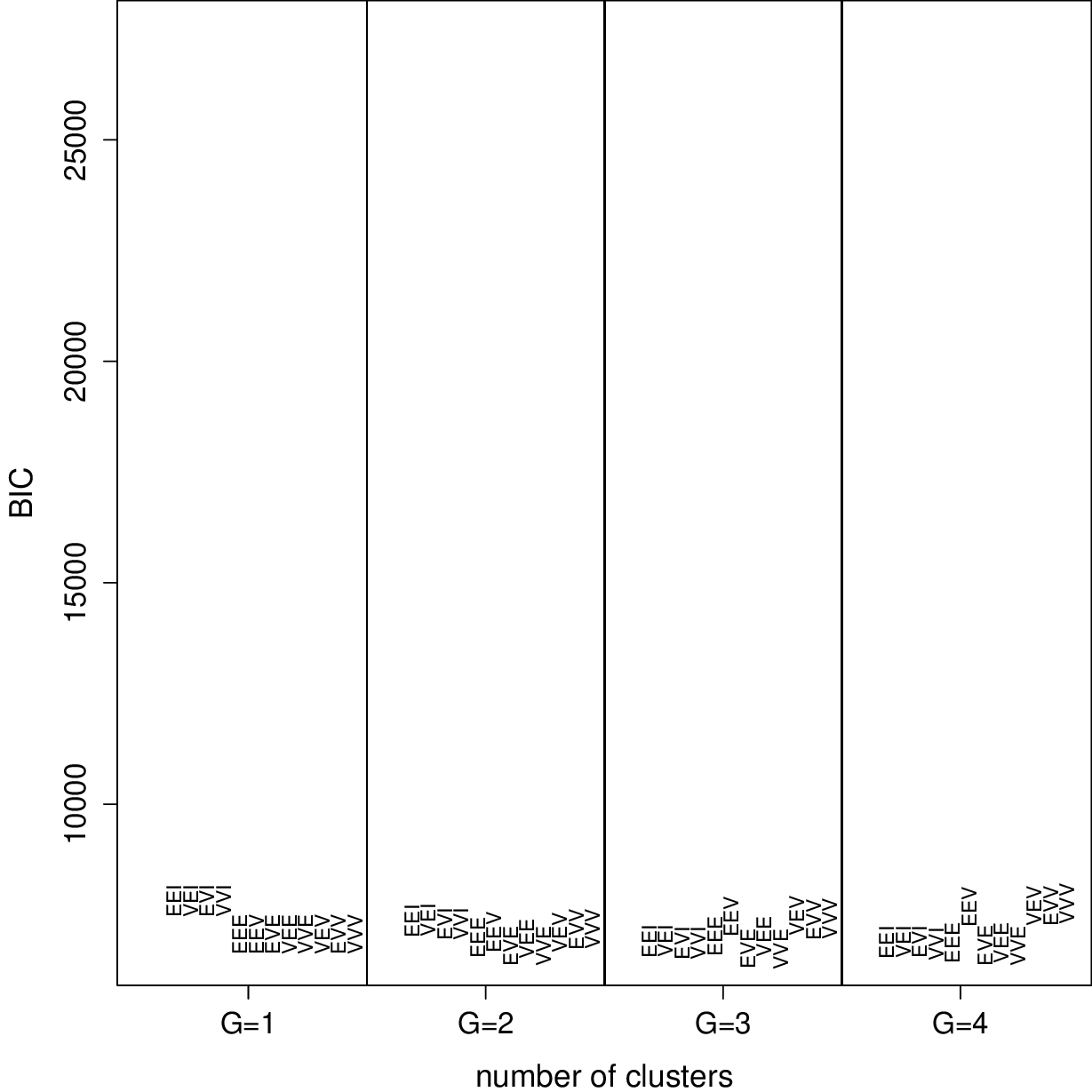}}}
\caption{
Wine data: BIC values for the fitted models.
\label{fig:wine BIC}
}
\end{figure}

The clustering results (\tablename~\ref{tab:wine clustering}) show that the selected NM, CNM, and NCM recognize the presence of three clusters, while the remaining approaches find an additional (fourth) cluster.
\setlength{\tabcolsep}{4.5pt}
\begin{table}[!ht]
\caption{Wine data: Clustering results for each of the competing methods, where good and bad samples are considered together for $t$Ms, CNMs, and NUMs.}
\label{tab:wine clustering}
\centering
\begin{tabular}{l c rrr c rrrr c rrr c rrrr c rrrr}
\toprule
    && \multicolumn{3}{c}{NM} && \multicolumn{4}{c}{$t$M} && \multicolumn{3}{c}{CNM} && \multicolumn{4}{c}{NUM} && \multicolumn{4}{c}{NCM}\\
 \cline{3-5}\cline{7-10}\cline{12-14}\cline{16-19}\cline{21-24}
 Cultivar && 1 & 2 & 3 && 1 & 2 & 3 & 4 && 1 & 2 & 3 && 1 & 2 & 3 & 4 && 1 & 2 & 3 & noise\\  
\midrule
Barbera     && 48 &    &    && 48 &    &    &    && 48 &    &    && 48 &    &    &    && 48 &    &    &    \\ 
Barolo      &&    & 59 &    &&    & 34 &    & 25 &&    & 59 &    &&    & 58 &    &  1 &&    & 58 &    & 1  \\ 
Grignolino  &&  2 &  4 & 65 &&  5 &    & 55 & 11 &&    &    & 71 &&  2 &    & 48 & 21 &&  1 &    & 60 & 10 \\ 
\bottomrule
\end{tabular}
\end{table}
In terms of classification, the Barbera cultivar is classified correctly by all of the models.
Instead, the Barolo and the Grignolino cultivars are classified correctly by the NCM only.
Summarizing, only our approach leads to a perfect clustering when we consider the good points together with the bad points.

In terms of detection of bad points, the last part of \tablename~\ref{tab:wine clustering} already reports the observations assigned to the noise component by the NCM.
No bad points are detected by the NUM, while the observations declared as bad by the $t$M and the CNM are summarized in \tablename~\ref{tab:wine clustering plus noise}.
We see that there are 35 bad points for the $t$M and 26 bad points for the CNM.
\setlength{\tabcolsep}{4.5pt}
\begin{table}[!ht]
\caption{
Clustering results for the PMCGD model on the wine data, where good and bad samples are considered separately.
}
\label{tab:wine clustering plus noise}
\centering
\begin{tabular}{l c rrrrr c rrrr}
\toprule
    && \multicolumn{5}{c}{$t$M} && \multicolumn{4}{c}{CNM}\\ 
    \cline{2-7} \cline{9-12}
     Cultivar  && 1 & 2 & 3 & 4 & Bad && 1 & 2 & 3 & Bad \\  
\midrule
Barbera      && 42 &    &    &    &   6 && 44 &    &    &  4 \\ 
Barolo       &&    & 30 &    & 23 &   6 &&    & 59 &    &    \\ 
Grignolino   &&  1 &    & 45 &  2 &  23 &&    &    & 49 & 22 \\ 
\bottomrule
\end{tabular}
\end{table} 
Considering that Grignolino was the cultivar most difficult to be struggled by the competing methods (cf.~\tablename~\ref{tab:wine clustering}), it is not surprising that the vast majority of bad points detected by the $t$M, the CNM, and the NCM, are in that cultivar.
The graphical representation of the obtained classification for the CNM is shown in \figurename~\ref{fig:wine PMCGD}.   
\begin{figure}[!ht]
\centering
  \includegraphics[width=0.8\textwidth]{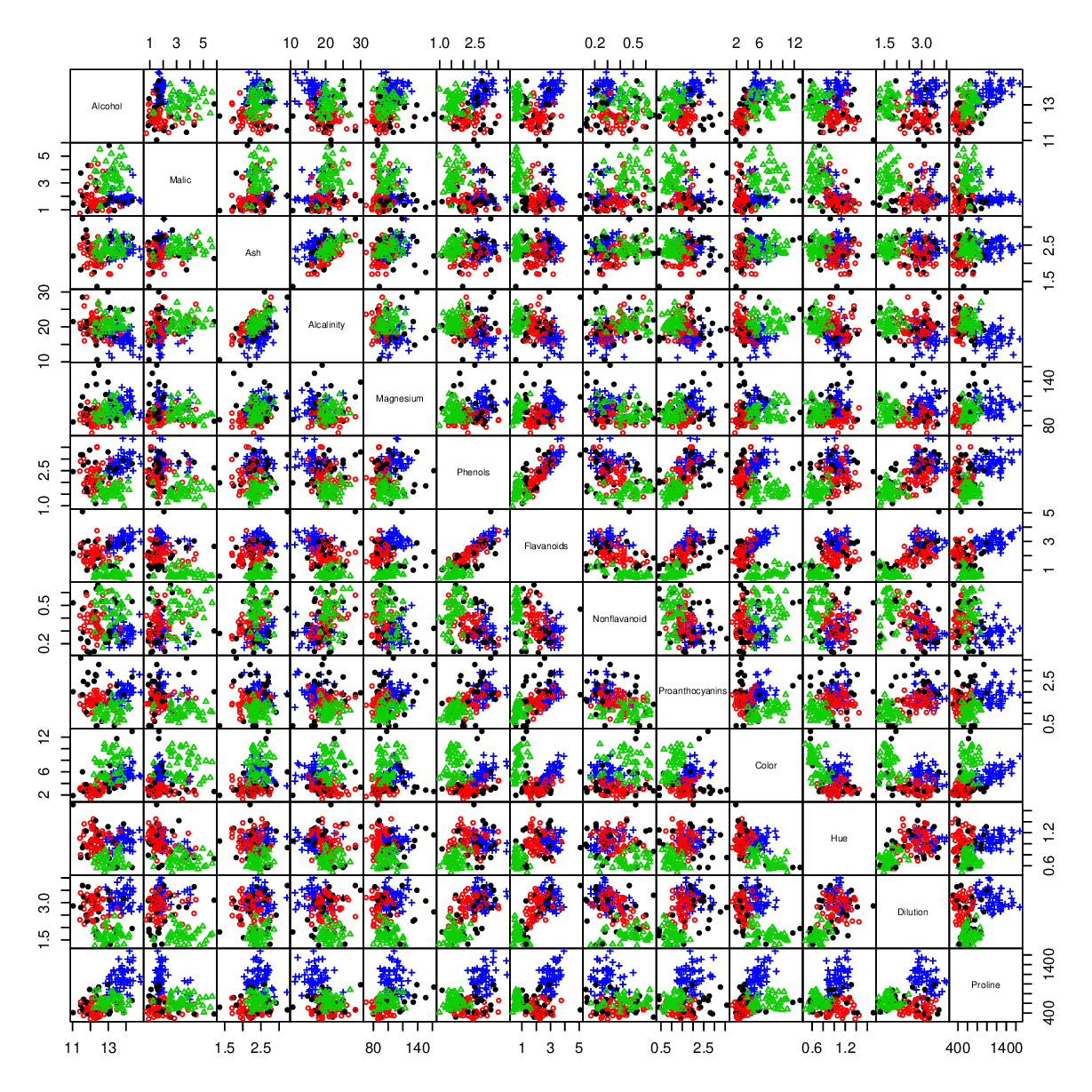} %
\caption{
Wine data: Scatterplot matrix and clustering from the EEE-PMCGD model.
Bad points are denoted by $\bullet$.
}
\label{fig:wine PMCGD}       
\end{figure}

\section{Discussion}
\label{sec:Discussion and future work}

A family of fourteen parsimonious mixtures of contaminated normal distributions has been introduced for clustering.
These models can be viewed as an extension of the famous family of parsimonious mixtures of normal distributions introduced by \citet{Cele:Gova:Gaus:1995}. 
Firstly, as discussed in Section~\ref{subsec:Some notes on robustness} and shown in the simulation study of Section~\ref{sec:Numerical studies}, they facilitate robust estimation of model parameters in the presence of outliers, which we also refer to as bad points: as an example, the estimator of the cluster-specific mean vector in \eqref{eq:update mu} is a weighted mean where the weights allow to reduce the impact of bad points in the estimation. 
Secondly, all of the members of our family of models allow for automatic detection of bad points in the same natural way as observations are typically assigned to the groups in the finite mixture models context, i.e., based on the posterior probabilities of being good or bad points.

Another distinct advantage of our contaminated approach is that we can easily extend the approach to model-based classification \citep[e.g.,][]{McNi:Mode:2010} and model-based discriminant analysis \citep{Hast:Tibs:Disc:1996}. 
In fact, there are a number of options for the type of supervision that could be used in partial classification applications for our models, i.e., one could specify some of the $\{\boldsymbol{z}_i\}_{i=1}^n$ and/or some of the $\{\boldsymbol{v}_i\}_{i=1}^n$ \textit{a~priori}. 
This provides yet more flexibility than exhibited by any competing approach, as does the ability of our approach to work in higher dimensions where bad points cannot easily be visualized.

In all the considered data analyses of Section~\ref{sec:Data analysis}, and also in the simulations of Section~\ref{sec:Numerical studies}, we demonstrated the good behaviour of our contaminated approach when compared to families of parsimonious: mixtures of normal distributions, mixtures of $t$ distributions, mixtures of mixtures of a normal and a uniform distribution, and mixtures of normal distributions plus a uniform component.  

As an open point for further research, it could be interesting to modify our approach with the aim of accommodating asymmetric contamination and/or ``groups'' of concentrated outliers.
In such a case, contamination in the mean (and not in the covariance matrix, like we do) could be considered; see, e.g., the contaminated (location-shift) normal distribution considered by \citet{Verd:Wass:Baye:1991}.

\section*{Acknowledgments}

A. Punzo acknowledges the financial support from the grant ``Finite mixture and latent variable models for causal inference and analysis of socio-economic data'' (FIRB 2012-Futuro in ricerca) funded by the Italian Government (RBFR12SHVV). 
P.D.~McNicholas acknowledges the support of the Canada Research Chairs program.
The authors finally declare the absence of any conflict of interest.

\section*{References}

\bibliographystyle{elsarticle-harv} 
\bibliography{References2}

\begin{thebibliography}{94}
\expandafter\ifx\csname natexlab\endcsname\relax\def\natexlab#1{#1}\fi
\expandafter\ifx\csname url\endcsname\relax
  \def\url#1{\texttt{#1}}\fi
\expandafter\ifx\csname urlprefix\endcsname\relax\def\urlprefix{URL }\fi

\bibitem[{Aggarwal(2013)}]{Agga:Outl:2013}
Aggarwal, C.~C., 2013. Outlier Analysis. Springer New York.

\bibitem[{Aitken(1926)}]{Aitk:OnBe:1926}
Aitken, A., 1926. {On Bernoulli's numerical solution of algebraic equations}.
  In: Proceedings of the Royal Society of Edinburgh. Vol.~46. pp. 289--305.

\bibitem[{Aitkin and Wilson(1980)}]{Aitk:Wils:Mixt:1980}
Aitkin, M., Wilson, G.~T., 1980. Mixture models, outliers, and the {EM}
  algorithm. Technometrics 22~(3), 325--331.

\bibitem[{Andrews and McNicholas(2012)}]{Andr:McNi:Mode:2012}
Andrews, J.~L., McNicholas, P.~D., 2012. Model-based clustering,
  classification, and discriminant analysis with the multivariate
  $t$-distribution: The {$t$EIGEN} family. Statistics and Computing 22~(5),
  1021--1029.

\bibitem[{Andrews et~al.(2015)Andrews, Wickins, Boers, and
  McNicholas}]{Andr:McNi:teig:2015}
Andrews, J.~L., Wickins, J.~R., Boers, N.~M., McNicholas, P.~D., 2015.
  \textsf{teigen}: Model-Based Clustering and Classification with the
  Multivariate $t$ Distribution. Version 2.1.0 (2015-11-20).
\newline\urlprefix\url{http://CRAN.R-project.org/package=teigen}

\bibitem[{Bagnato and Punzo(2013)}]{Bagn:Punz:Fine:2013}
Bagnato, L., Punzo, A., 2013. Finite mixtures of unimodal beta and gamma
  densities and the $k$-bumps algorithm. Computational Statistics 28~(4),
  1571--1597.

\bibitem[{Bai et~al.(2012)Bai, Yao, and Boyer}]{Bai:Yao:Boyer:Robu:2012}
Bai, X., Yao, W., Boyer, J.~E., 2012. Robust fitting of mixture regression
  models. Computational Statistics \& Data Analysis 56~(7), 2347--2359.

\bibitem[{Banfield and Raftery(1993)}]{Banf:Raft:mode:1993}
Banfield, J.~D., Raftery, A.~E., 1993. Model-based {G}aussian and
  non-{G}aussian clustering. Biometrics 49~(3), 803--821.

\bibitem[{Barnett and Lewis(1994)}]{Barn:Lewi:Outl:1994}
Barnett, V., Lewis, T., 1994. Outliers in Statistical Data. Wiley Series in
  Probability \& Statistics. Wiley.

\bibitem[{Becker and Gather(1999)}]{Beck:Gath:Them:1999}
Becker, C., Gather, U., 1999. The masking breakdown point of multivariate
  outlier identification rules. Journal of the American Statistical Association
  94~(447), 947--955.

\bibitem[{Berkane and Bentler(1988)}]{Berk:Bent:Esti:1988}
Berkane, M., Bentler, P.~M., 1988. Estimation of contamination parameters and
  identification of outliers in multivariate data. Sociological Methods \&
  Research 17~(1), 55--64.

\bibitem[{Biernacki(2004)}]{Bier:Anas:2004}
Biernacki, C., 2004. An asymptotic upper bound of the likelihood to prevent
  {G}aussian mixtures from degenerating. Tech. rep., Universit{\'e} de
  Franche-Comt{\'e}, Besan{\c{c}}on.

\bibitem[{Biernacki et~al.(2003)Biernacki, Celeux, and
  Govaert}]{Bier:Cele:Gova:Choo:2003}
Biernacki, C., Celeux, G., Govaert, G., 2003. Choosing starting values for the
  {EM} algorithm for getting the highest likelihood in multivariate {G}aussian
  mixture models. Computational Statistics \& Data Analysis 41~(3-4), 561--575.

\bibitem[{Biernacki et~al.(2008)Biernacki, Celeux, Govaert, Langrognet, Noulin,
  and Vernaz}]{Bier:Cele:Gova:Lang:Noul:Vern:MIXM:2008}
Biernacki, C., Celeux, G., Govaert, G., Langrognet, F., Noulin, G., Vernaz, Y.,
  2008. $\mathsf{MIXMOD}$ - Statistical Documentation. Downloadable from
  \url{http://www.mixmod.org/IMG/pdf/statdoc\_2\_1\_1.pdf}.

\bibitem[{Biernacki and Chr{\'e}tien(2003)}]{Bier:Chre:Stat:2003}
Biernacki, C., Chr{\'e}tien, S., 2003. Degeneracy in the maximum likelihood
  estimation of univariate {G}aussian mixtures with {EM}. Statistics \&
  Probability Letters 61~(4), 373--382.

\bibitem[{Bock(2002)}]{Bock:Clus:2002}
Bock, H.~H., 2002. Clustering methods: From classical models to new approaches.
  Statistics in Transition 5~(5), 725--758.

\bibitem[{B{\"o}hning(2000)}]{Bohn:Comp:2000}
B{\"o}hning, D., 2000. Computer-assisted Analysis of Mixtures and Applications:
  Meta-analysis, Disease Mapping and Others. Vol.~81 of Monographs on
  Statistics and Applied Probability. Chapman \& Hall/CRC, London.

\bibitem[{B{\"o}hning et~al.(1994)B{\"o}hning, Dietz, Schaub, Schlattmann, and
  Lindsay}]{Bohn:Diet:Scha:Schl:Lind:TheD:1994}
B{\"o}hning, D., Dietz, E., Schaub, R., Schlattmann, P., Lindsay, B., 1994. The
  distribution of the likelihood ratio for mixtures of densities from the
  one-parameter exponential family. Annals of the Institute of Statistical
  Mathematics 46~(2), 373--388.

\bibitem[{B{\"o}hning and Ruangroj(2002)}]{Bohn:Ruan:Anot:2002}
B{\"o}hning, D., Ruangroj, R., 2002. A note on the maximum deviation of the
  scale-contaminated normal to the best normal distribution. Metrika 55~(3),
  177--182.

\bibitem[{Browne and McNicholas(2014)}]{Brow:McNi:Adva:2014}
Browne, R.~P., McNicholas, P.~D., 2014. Estimating common principal components
  in high dimensions. Advances in Data Analysis and Classification 8~(2),
  217--226.

\bibitem[{Browne and McNicholas(2015)}]{Brow:McNi:mixt:2015}
Browne, R.~P., McNicholas, P.~D., 2015. \textsf{mixture}: Mixture Models for
  Clustering and Classification. Version 1.4 (2015-03-10).
\newline\urlprefix\url{http://CRAN.R-project.org/package=mixture}

\bibitem[{Browne et~al.(2012)Browne, McNicholas, and
  Sparling}]{Brow:McNi:Spar:Mode:2012}
Browne, R.~P., McNicholas, P.~D., Sparling, M.~D., 2012. Model-based learning
  using a mixture of mixtures of {G}aussian and uniform distributions. IEEE
  Transactions on Pattern Analysis and Machine Intelligence 34~(4), 814--817.

\bibitem[{Browne et~al.(2013)Browne, Subedi, and
  McNicholas}]{Brow:Sube:McNi:Cons:2013}
Browne, R.~P., Subedi, S., McNicholas, P.~D., 2013. Constrained optimization
  for a subset of the {G}aussian parsimonious clustering models. arXiv.org
  e-print 1306.5824, available at: \url{http://arxiv.org/abs/1306.5824}.

\bibitem[{Byers and Raftery(1998)}]{Byer:Raft:Near:1998}
Byers, S., Raftery, A.~E., 1998. Nearest-neighbor clutter removal for
  estimating features in spatial point processes. Journal of the American
  Statistical Association 93~(442), 577--584.

\bibitem[{Campbell(1984)}]{Camp:Mixt:1984}
Campbell, N.~A., 1984. Mixture models and atypical values. Mathematical Geology
  16~(5), 465--477.

\bibitem[{Campbell and Mahon(1974)}]{Camp:Maho:amul:1974}
Campbell, N.~A., Mahon, R.~J., 1974. A multivariate study of variation in two
  species of rock crab of genus {L}eptograpsus. Australian Journal of Zoology
  22~(3), 417--425.

\bibitem[{Celeux and Govaert(1995)}]{Cele:Gova:Gaus:1995}
Celeux, G., Govaert, G., 1995. Gaussian parsimonious clustering models. Pattern
  Recognition 28~(5), 781--793.

\bibitem[{Celeux et~al.(2000)Celeux, Hurn, and
  Robert}]{Cele:Hurn:Robe:Comp:2000}
Celeux, G., Hurn, M., Robert, C.~P., 2000. Computational and inferential
  difficulties with mixture posterior distributions. Journal of the American
  Statistical Association 95~(451), 957--970.

\bibitem[{Coretto and Hennig(2011)}]{Core:Henn:2011}
Coretto, P., Hennig, C., 2011. Maximum likelihood estimation of heterogeneous
  mixtures of {G}aussian and uniform distributions. Journal of Statistical
  Planning and Inference 141~(1), 462--473.

\bibitem[{Coretto and Hennig(2015)}]{Core:Henn:Robu:2015}
Coretto, P., Hennig, C., 2015. Robust improper maximum likelihood: tuning,
  computation, and a comparison with other methods for robust {G}aussian
  clustering. arXiv.org e-print 1406.0808, available at:
  \url{http://arxiv.org/abs/1406.0808}.

\bibitem[{Crawford(1994)}]{Craw:Anap:1994}
Crawford, S.~L., 1994. An application of the laplace method to finite mixture
  distributions. Journal of the American Statistical Association 89~(425),
  259--267.

\bibitem[{Cuesta-Albertos et~al.(1997)Cuesta-Albertos, Gordaliza, and
  Matr{\'a}n}]{Cues:Gord:Matr:Trim:1997}
Cuesta-Albertos, J.~A., Gordaliza, A., Matr{\'a}n, C., 1997. Trimmed $k$-means:
  An attempt to robustify quantizers. The Annals of Statistics 25~(2),
  553--576.

\bibitem[{Davies and Gather(1993)}]{Davi:Gath:Thei:1993}
Davies, L., Gather, U., 1993. The identification of multiple outliers. Journal
  of the American Statistical Association 88~(423), 782--792.

\bibitem[{De~Veaux and Krieger(1990)}]{deVa:Krie:1990}
De~Veaux, R.~D., Krieger, A.~M., 1990. Robust estimation of a normal mixture.
  Statistics \& Probability Letters 10~(1), 1--7.

\bibitem[{Dempster et~al.(1977)Dempster, Laird, and
  Rubin}]{Demp:Lair:Rubi:Maxi:1977}
Dempster, A.~P., Laird, N.~M., Rubin, D.~B., 1977. {Maximum likelihood from
  incomplete data via the EM algorithm}. Journal of the Royal Statistical
  Society: Series B (Statistical Methodology) 39~(1), 1--38.

\bibitem[{Di~Zio et~al.(2007)Di~Zio, Guarnera, and
  Rocci}]{DiZi:Guar:Rocc:Amix:2007}
Di~Zio, M., Guarnera, U., Rocci, R., 2007. A mixture of mixture models for a
  classification problem: The unity measure error. Computational Statistics \&
  Data analysis 51~(5), 2573--2585.

\bibitem[{Flury and Gautschi(1986)}]{Flur:Gaut:anal:1986}
Flury, B.~N., Gautschi, W., 1986. An algorithm for simultaneous orthogonal
  transformation of several positive definite matrices to nearly diagonal form.
  SIAM Journal on Scientific and Statistical Computing 7~(1), 169--184.

\bibitem[{Forina et~al.(1998)Forina, Leardi, Armanino, and
  Lanteri}]{Fori:Lear:Arma:Lant:PARV:1998}
Forina, M., Leardi, R., Armanino, C., Lanteri, S., 1998. \textsf{PARVUS}: An
  extendible package for data exploration, classification and correlation.
  Tech. rep., Institute of Pharmaceutical and Food Analysis and Technologies,
  Genoa, Italy.

\bibitem[{Fraley and Raftery(1998)}]{Fral:Raft:Howm:1998}
Fraley, C., Raftery, A.~E., 1998. How many clusters? {W}hich clustering method?
  {A}nswers via model-based cluster analysis. Computer Journal 41~(8),
  578--588.

\bibitem[{Fraley et~al.(2012)Fraley, Raftery, Murphy, and
  Scrucca}]{Fral:Raft:Murp:Scru:mclu:2012}
Fraley, C., Raftery, A.~E., Murphy, T.~B., Scrucca, L., 2012. \textsf{mclust}
  version 4 for \textsf{R}: Normal mixture modeling for model-based clustering,
  classification, and density estimation. Technical report 597, Department of
  Statistics, University of Washington, Seattle, Washington, USA.

\bibitem[{Fraley et~al.(2015)Fraley, Raftery, Scrucca, Murphy, and
  Fop}]{Fral:Raft:Scru:Murp:Fop:mclu:2015}
Fraley, C., Raftery, A.~E., Scrucca, L., Murphy, T.~B., Fop, M., 2015.
  \textsf{mclust}: Normal Mixture Modelling for Model-Based Clustering,
  Classification, and Density Estimation. Version 5.1 (2015-10-27).
\newline\urlprefix\url{http://CRAN.R-project.org/package=mclust}

\bibitem[{Gallegos and Ritter(2005)}]{Gall:Ritt:Arob:2005}
Gallegos, M.~T., Ritter, G., 2005. A robust method for cluster analysis. The
  Annals of Statistics 33~(1), 347--380.

\bibitem[{Gallegos and Ritter(2009)}]{Gall:Ritt:Trim:2009}
Gallegos, M.~T., Ritter, G., 2009. Trimmed {ML} estimation of contaminated
  mixtures. Sankhy{\=a}: The Indian Journal of Statistics, Series A 71~(2),
  164--220.

\bibitem[{Garc{\'\i}a-Escudero and Gordaliza(1999)}]{Garc:Gord:Robu:1999}
Garc{\'\i}a-Escudero, L.~A., Gordaliza, A., 1999. Robustness properties of $k$
  means and trimmed $k$ means. Journal of the American Statistical Association
  94~(447), 956--969.

\bibitem[{Garc{\'\i}a-Escudero et~al.(2003)Garc{\'\i}a-Escudero, Gordaliza, and
  Matr{\'a}n}]{Garc:Gord:Matr:Trim:2003}
Garc{\'\i}a-Escudero, L.~A., Gordaliza, A., Matr{\'a}n, C., 2003. Trimming
  tools in exploratory data analysis. Journal of Computational and Graphical
  Statistics 12~(2), 434--449.

\bibitem[{Garc{\'\i}a-Escudero et~al.(2008)Garc{\'\i}a-Escudero, Gordaliza,
  Matr{\'a}n, and Mayo-Iscar}]{Garc:Gord:Matr:Mayo:Agen:2008}
Garc{\'\i}a-Escudero, L.~A., Gordaliza, A., Matr{\'a}n, C., Mayo-Iscar, A.,
  2008. A general trimming approach to robust cluster analysis. The Annals of
  Statistics 36~(3), 1324--1345.

\bibitem[{Garc{\'\i}a-Escudero et~al.(2010)Garc{\'\i}a-Escudero, Gordaliza,
  Matr{\'a}n, and Mayo-Iscar}]{Garc:Gord:Matr:Mayo:Arew:2010}
Garc{\'\i}a-Escudero, L.~A., Gordaliza, A., Matr{\'a}n, C., Mayo-Iscar, A.,
  2010. A review of robust clustering methods. Advances in Data Analysis and
  Classification 4~(2), 89--109.

\bibitem[{Gerogiannis et~al.(2009)Gerogiannis, Nikou, and
  Likas}]{Gero:Niko:Lika:Them:2009}
Gerogiannis, D., Nikou, C., Likas, A., 2009. The mixtures of {S}tudent's
  t-distributions as a robust framework for rigid registration. Image and
  Vision Computing 27~(9), 1285--1294.

\bibitem[{Hartigan(1985)}]{Hart:Stat:1985}
Hartigan, J.~A., 1985. Statistical theory in clustering. Journal of
  classification 2~(1), 63--76.

\bibitem[{Hastie and Tibshirani(1996)}]{Hast:Tibs:Disc:1996}
Hastie, T., Tibshirani, R., 1996. Discriminant analysis by {G}aussian mixtures.
  Journal of the Royal Statistical Society: Series~B 58~(1), 155--176.

\bibitem[{Hathaway(1986)}]{Hath:Acons:1986}
Hathaway, R.~J., 1986. A constrained {EM} algorithm for univariate normal
  mixtures. Journal of Statistical Computation and Simulation 23~(3), 211--230.

\bibitem[{Hawkins(2013)}]{Hawk:Iden:2013}
Hawkins, D., 2013. Identification of Outliers. Monographs on Statistics and
  Applied Probability. Springer, The Netherlands.

\bibitem[{Hennig(2002)}]{Henn:Fixe:2002}
Hennig, C., 2002. Fixed point clusters for linear regression: computation and
  comparison. Journal of Classification 19~(2), 249--276.

\bibitem[{Hennig(2004)}]{Henn:Brea:2004}
Hennig, C., 2004. Breakdown points for maximum likelihood estimators of
  location-scale mixtures. The Annals of Statistics 32~(4), 1313--1340.

\bibitem[{Hennig and Hausdorf(2015)}]{Henn:Bern:prab:2015}
Hennig, C., Hausdorf, B., 2015. \textsf{prabclus}: Functions for Clustering of
  Presence-Absence, Abundance and Multilocus Genetic Data. Version 2.2-6
  (2015-01-14).
\newline\urlprefix\url{http://CRAN.R-project.org/package=prabclus}

\bibitem[{Holzmann et~al.(2006)Holzmann, Munk, and
  Gneiting}]{Holz:Munk:Gnei:Iden:2006}
Holzmann, H., Munk, A., Gneiting, T., 2006. Identifiability of finite mixtures
  of elliptical distributions. Scandinavian Journal of Statistics 33~(4),
  753--763.

\bibitem[{Hunter and Lange(2000)}]{Lang:Hunt:Yang:Opti:2000}
Hunter, D.~R., Lange, K., 2000. Rejoinder to discussion of ``optimization
  transfer using surrogate objective functions''. Journal of Computational and
  Graphical Statistics 9~(1), 52--59.

\bibitem[{Hurley(2004)}]{hurley04}
Hurley, C., 2004. Clustering visualizations of multivariate data. Journal of
  Computational and Graphical Statistics 13~(4), 788--806.

\bibitem[{Ingrassia(2004)}]{Ingr:Alik:2004}
Ingrassia, S., 2004. A likelihood-based constrained algorithm for multivariate
  normal mixture models. Statistical Methods and Applications 13~(2), 151--166.

\bibitem[{Ingrassia and Rocci(2007)}]{Ingr:Rocc:Cons:2007}
Ingrassia, S., Rocci, R., 2007. Constrained monotone em algorithms for finite
  mixture of multivariate {G}aussians. Computational Statistics \& Data
  Analysis 51~(11), 5339--5351.

\bibitem[{Ingrassia and Rocci(2011)}]{Ingr:Rocc:Dege:2011}
Ingrassia, S., Rocci, R., 2011. Degeneracy of the {EM} algorithm for the mle of
  multivariate {G}aussian mixtures and dynamic constraints. Computational
  Statistics \& Data Analysis 55~(4), 1715--1725.

\bibitem[{Karlis and Xekalaki(2003)}]{Karl:Xeka:Choo:2003}
Karlis, D., Xekalaki, E., 2003. {Choosing initial values for the EM algorithm
  for finite mixtures}. Computational Statistics \& Data Analysis 41~(3--4),
  577--590.

\bibitem[{Lebret et~al.(2012)Lebret, Iovleff, Langrognet, Biernacki, Celeux,
  and Govaert}]{Lebr:Iovl:Lang:Bier:Cele:Gova:Rmix:2012}
Lebret, R., Iovleff, S., Langrognet, F., Biernacki, C., Celeux, G., Govaert,
  G., 2012. {Rmixmod}: The {\sf R} Package of the Model-Based Unsupervised,
  Supervised and Semi-Supervised Classification Mixmod Library.

\bibitem[{Li(2005)}]{Li:Clus:2005}
Li, J., 2005. Clustering based on a multi-layer mixture model. Journal of
  Computational and Graphical Statistics 14~(3), 547--568.

\bibitem[{Little(1988)}]{Litt:Robu:1988}
Little, R. J.~A., 1988. Robust estimation of the mean and covariance matrix
  from data with missing values. Applied Statistics 37~(1), 23--38.

\bibitem[{Lo(2005)}]{Lo:Like:2005}
Lo, Y., 2005. Likelihood ratio tests of the number of components in a normal
  mixture with unequal variances. Statistics \& Probability Letters 71~(3),
  225--235.

\bibitem[{Lo(2008)}]{Lo:Alik:2008}
Lo, Y., 2008. A likelihood ratio test of a homoscedastic normal mixture against
  a heteroscedastic normal mixture. Statistics and Computing 18~(3), 233--240.

\bibitem[{Lo et~al.(2001)Lo, Mendell, and Rubin}]{Lo:Mend:Rubi:Test:2001}
Lo, Y., Mendell, N.~R., Rubin, D.~B., 2001. Testing the number of components in
  a normal mixture. Biometrika 88~(3), 767--778.

\bibitem[{Markatou(2000)}]{Mark:Mixt:2000}
Markatou, M., 2000. Mixture models, robustness, and the weighted likelihood
  methodology. Biometrics 56~(2), 483--486.

\bibitem[{McLachlan and Krishnan(2007)}]{McLa:Kris:TheE:2007}
McLachlan, G., Krishnan, T., 2007. The {EM} algorithm and extensions, 2nd
  Edition. Vol. 382 of Wiley Series in Probability and Statistics. John Wiley
  \& Sons, New York.

\bibitem[{McLachlan and Basford(1988)}]{McLa:Basf:mixt:1988}
McLachlan, G.~J., Basford, K.~E., 1988. Mixture Models: Inference and
  Applications to Clustering. Marcel Dekker, New York.

\bibitem[{McLachlan and Peel(1998)}]{McLa:Peel:Robu:1998}
McLachlan, G.~J., Peel, D., 1998. Robust cluster analysis via mixtures of
  multivariate $t$-distributions. In: Amin, A., Dori, D., Pudil, P., Freeman,
  H. (Eds.), Advances in Pattern Recognition. Vol. 1451 of Lecture Notes in
  Computer Science. Springer, Berlin-Heidelberg, pp. 658--666.

\bibitem[{McLachlan and Peel(2000)}]{McLa:Peel:fini:2000}
McLachlan, G.~J., Peel, D., 2000. Finite Mixture Models. John Wiley \& Sons,
  New York.

\bibitem[{McNicholas(2010)}]{McNi:Mode:2010}
McNicholas, P.~D., 2010. Model-based classification using latent {G}aussian
  mixture models. Journal of Statistical Planning and Inference 140~(5),
  1175--1181.

\bibitem[{McNicholas(2016)}]{mcnicholas16}
McNicholas, P.~D., 2016. Mixture Model-Based Classification. Chapman \&
  Hall/CRC Press, Boca Raton.

\bibitem[{McNicholas et~al.(2010)McNicholas, Murphy, McDaid, and
  Frost}]{McNi:Murp:McDa:Fros:Seri:2010}
McNicholas, P.~D., Murphy, T.~B., McDaid, A.~F., Frost, D., 2010. Serial and
  parallel implementations of model-based clustering via parsimonious
  {G}aussian mixture models. Computational Statistics \& Data Analysis 54~(3),
  711--723.

\bibitem[{Meng and Rubin(1993)}]{Meng:Rubin:Maxi:1993}
Meng, X.-L., Rubin, D.~B., 1993. Maximum likelihood estimation via the {ECM}
  algorithm: A general framework. Biometrika 80~(2), 267--278.

\bibitem[{Peel and McLachlan(2000)}]{Peel:McLa:Robu:2000}
Peel, D., McLachlan, G.~J., 2000. Robust mixture modelling using the $t$
  distribution. Statistics and Computing 10~(4), 339--348.

\bibitem[{Punzo et~al.(2016)Punzo, Browne, and
  McNicholas}]{Punz:Brow:McNi:Hypo:2016}
Punzo, A., Browne, R.~P., McNicholas, P.~D., 2016. Hypothesis testing for
  mixture model selection. Journal of Statistical Computation and Simulation.
  To appear, {\sc doi}:~10.1080/00949655.2015.1131282.

\bibitem[{Punzo et~al.(2015)Punzo, Mazza, and
  McNicholas}]{Punz:Mazz:McNi:Cont:2015}
Punzo, A., Mazza, A., McNicholas, P.~D., 2015. \textsf{ContaminatedMixt}:
  Model-Based Clustering and Classification with the Multivariate Contaminated
  Normal Distribution. Version 1.0 (2015-12-20).
\newline\urlprefix\url{http://CRAN.R-project.org/package=ContaminatedMixt}

\bibitem[{{R Core Team}(2015)}]{R}
{R Core Team}, 2015. {\sf R}: A Language and Environment for Statistical
  Computing. {\sf R} Foundation for Statistical Computing, Vienna, Austria.
\newline\urlprefix\url{http://www.R-project.org/}

\bibitem[{Raftery(1995)}]{Raft:Baye:1995}
Raftery, A.~E., 1995. Bayesian model selection in social research. Sociological
  Methodology 25, 111--164.

\bibitem[{Ritter(2015)}]{Ritt:Robu:2015}
Ritter, G., 2015. Robust Cluster Analysis and Variable Selection. Vol. 137 of
  Chapman \& Hall/CRC Monographs on Statistics \& Applied Probability. CRC
  Press.

\bibitem[{Ruwet et~al.(2012)Ruwet, Garc{\'\i}a-Escudero, Gordaliza, and
  Mayo-Iscar}]{Ruwe:Garc:Gord:Mayo:TheI:2012}
Ruwet, C., Garc{\'\i}a-Escudero, L.~A., Gordaliza, A., Mayo-Iscar, A., 2012.
  The influence function of the tclust robust clustering procedure. Advances in
  Data Analysis and Classification 6~(2), 107--130.

\bibitem[{Ruwet et~al.(2013)Ruwet, Garc{\'\i}a-Escudero, Gordaliza, and
  Mayo-Iscar}]{Ruwe:Garc:Gord:Mayo:Onth:2013}
Ruwet, C., Garc{\'\i}a-Escudero, L.~A., Gordaliza, A., Mayo-Iscar, A., 2013. On
  the breakdown behavior of the tclust clustering procedure. Test 22~(3),
  466--487.

\bibitem[{Schwarz(1978)}]{Schw:Esti:1978}
Schwarz, G., 1978. Estimating the dimension of a model. The Annals of
  Statistics 6~(2), 461--464.

\bibitem[{Stephens(2000)}]{Step:Deal:2000}
Stephens, M., 2000. Dealing with label switching in mixture models. Journal of
  the Royal Statistical Society. Series B: Statistical Methodology 62~(4),
  795--809.

\bibitem[{Teicher(1963)}]{Teic:Iden:1963}
Teicher, H., 1963. Identifiability of finite mixtures. Annals of Mathematical
  Statistics 34~(4), 1265--1269.

\bibitem[{Tukey(1960)}]{Tuke:Asur:1960}
Tukey, J.~W., 1960. A survey of sampling from contaminated distributions. In:
  Olkin, I. (Ed.), Contributions to Probability and Statistics: Essays in Honor
  of Harold Hotelling. Stanford Studies in Mathematics and Statistics. Stanford
  University Press, California, Ch.~39, pp. 448--485.

\bibitem[{Verdinelli and Wasserman(1991)}]{Verd:Wass:Baye:1991}
Verdinelli, I., Wasserman, L., 1991. Bayesian analysis of outlier problems
  using the {G}ibbs sampler. Statistics and Computing 1~(2), 105--117.

\bibitem[{Wolfe(1965)}]{wolfe65}
Wolfe, J.~H., 1965. A computer program for the maximum likelihood analysis of
  types. Technical Bulletin 65-15, U.S.\ Naval Personnel Research Activity.

\bibitem[{Yakowitz and Spragins(1968)}]{Yako:Spra:Onth:1968}
Yakowitz, S.~J., Spragins, J.~D., 1968. On the identifiability of finite
  mixtures. The Annals of Mathematical Statistics 39~(1), 209--214.

\bibitem[{Yao(2012)}]{Yao:Mode:2012}
Yao, W., 2012. Model based labeling for mixture models. Statistics and
  Computing 22~(2), 337--347.

\bibitem[{Yao et~al.(2014)Yao, Wei, and Yu}]{Yao:Wei:Yu:Robu:2014}
Yao, W., Wei, Y., Yu, C., 2014. Robust mixture regression using the
  $t$-distribution. Computational Statistics \& Data Analysis 71, 116--127.

\end{thebibliography}

\end{document}